\def\a{\alpha}
\def\b{\beta}
\def\g{\gamma}
\def\d{\delta}
\def\e{{\rm e}}
\def\l{\lambda}
\def\L{\Lambda}
\def\p{\partial}
\def\R{{\mathbb R}}
\def\N{\mathbb N}
\def\tr{{\rm tr}\,}
\theoremstyle{plain}
\newtheorem{thm}{THEOREM}
\newtheorem{lemma}[thm]{LEMMA}
\newtheorem{prop}[thm]{PROPOSITION}
\theoremstyle{definition}
\theoremstyle{remark}
\newtheorem{remarks}[thm]{REMARKS}
\newcommand{\be}{\begin{equation}}
\newcommand{\ee}{\end{equation}}
\newcommand{\bea}{\begin{eqnarray}}
\newcommand{\eea}{\end{eqnarray}}
\newcommand{\beax}{\begin{eqnarray*}}
\newcommand{\eeax}{\end{eqnarray*}}
\newcommand{\mfr}[2]{{\textstyle\frac{#1}{#2}}}
\numberwithin{equation}{section}
\begin{document}

\title[Local ground-state entropy of free fermions in a constant magnetic field]{Asymptotic growth of the local ground-state entropy of the ideal Fermi gas in a constant magnetic field}
\date{November 13, 2020}

\author[H.~Leschke, A.V.~Sobolev, W.~Spitzer]
{Hajo Leschke, Alexander V.~Sobolev, Wolfgang Spitzer}
\address{Institut f\"ur Theoretische Physik, 
Universit\"at Erlangen-N\"urnberg, 
Staudtstra\ss e 7, 91058 Erlangen, Germany}
\email{hajo.leschke@physik.uni-erlangen.de}
\address{Department of Mathematics\\ University College London\\
Gower Street\\ London\\ WC1E 6BT UK}
\email{a.sobolev@ucl.ac.uk}
\address{Fakult\"at f\"ur Mathematik und Informatik, 
FernUniversit\"at in Hagen, Universit\"atsstra\ss e 1, 
58097 Hagen, Germany}
\email{wolfgang.spitzer@fernuni-hagen.de}
\keywords{Local entropy, Landau Hamiltonian, asymptotic analysis}
\subjclass[2010]{Primary 47G30, 35S05; Secondary 45M05, 47B10, 47B35}

\begin{abstract} We consider the ideal Fermi gas of indistinguishable particles without spin but with electric charge, confined to a Euclidean plane $\R^2$ perpendicular to an external constant magnetic field of strength $B>0$. We assume this (infinite) quantum gas to be in thermal equilibrium at zero temperature, that is, in its ground state with chemical potential $\mu\ge B$ (in suitable physical units). For this (pure) state we define its local entropy $S(\Lambda)$ associated with a bounded (sub)region $\Lambda\subset \R^2$ as the von Neumann entropy of the (mixed) local substate obtained by reducing the infinite-area ground state to this region $\Lambda$ of finite area $|\Lambda|$. In this setting we prove that the leading asymptotic growth of $S(L\Lambda)$, as the dimensionless scaling parameter $L>0$ tends to infinity, has the form  $L\sqrt{B}|\p\L|$ up to a precisely given (positive multiplicative) coefficient which is independent of $\L$ and dependent on $B$ and $\mu$ only through the integer part of $(\mu/B-1)/2$. Here we have assumed the boundary curve $\p\L$ of $\Lambda$ to be sufficiently smooth which, in particular, ensures that its arc length $|\p\L|$ is well-defined. This result is in agreement with a so-called area-law scaling (for two spatial dimensions). It contrasts the zero-field case $B=0$, where an additional logarithmic factor $\ln(L)$ is known to be present. We also have a similar result, with a slightly more explicit coefficient, for the simpler situation where the underlying single-particle Hamiltonian, known as the Landau Hamiltonian, is restricted from its natural Hilbert space $\text L^2(\R^2)$ to the eigenspace of a single but arbitrary Landau level. Both results extend to the whole one-parameter family of quantum R\'enyi entropies. As opposed to the case $B=0$, the corresponding asymptotic coefficients depend on the R\'enyi index in a non-trivial way.
\end{abstract}

\maketitle
\tableofcontents
\section{Introduction}

Quantum correlations in many-particle ground states occur in a genuine and simple form for fermions without interactions between them. In this case all correlations are exclusively due to the Pauli--Fermi--Dirac statistics and are not affected by classical correlations. This certainly explains why the authors of many recent publications, devoted  to the ``trendy topic'' of entanglement entropy, have considered ground states of free fermions in discrete or continuous position space. For these pure states its (bipartite spatial) entanglement entropy boils down to its local entropy associated with a bounded region $\Lambda$ in the position space. An informal definition of the local entropy is given in the above abstract. For a formal definition, in the present context, see \eqref{entropy} (with $\a=1$) below. This local ground-state entropy may serve as a useful, but rough, single-number quantification of the correlations of all the particles in the region $\Lambda$ with all those outside.

The local ground-state entropy is a complicated function(al) of $\Lambda$ and difficult to study by analytic methods. Even without interactions, one can in general only hope for estimates and/or asymptotic results when the volume of the bounded region becomes large. As discovered by Gioev and Klich\cite{GK}, one fascinating aspect of these type of asymptotic results is the connection to the quasi-classical evaluation of traces of (truncated) Wiener--Hopf operators (or Toeplitz matrices in the discrete, one-dimensional, case), that is, to a conjecture of Harold Widom (respectively of Fisher and Hartwig). The ``Widom conjecture" was finally proved by one of us in \cite{Sob:AMS} and opened the gate to prove a conjecture by Gioev and Klich \cite{GK} about the precise asymptotic growth of the local ground-state entropy of free fermions in multi-dimensional Euclidean space, see \cite{LSS1}. 

Of course, it is physically relevant and mathematically interesting to determine such a precise asymptotics also for ground states of fermions subject to an external field or even with interactions between them. From a rigorous point of view, the latter seems currently to be out of reach. Concerning external scalar fields there are publications devoted to free fermions in a (random) potential~\cite{EPS,MPS, MS, P,PasturSh} or in a one-dimensional periodic potential~\cite{PS}. As an aside, we mention that in the case of free fermions the large-scale behavior of the local entropy is not only known for the ground state, but also for the thermal equilibrium state at any temperature \cite{LSS2,LSS3}.

In the present paper, we (return to zero temperature and) consider the ground state of non-relativistic, spinless, and electrically charged fermions in the Euclidean plane $\R^2$ without interactions between them, but subject to an external magnetic field which is perpendicular to the plane and of constant strength $B>0$. This ground state became of interest in condensed-matter physics at first in the early 1930s for simplified explanations of the Landau diamagnetism and the De Haas--Van Alphen effect observed in metals, see \cite{Peierls, Huang}. The interest got revived and enhanced after the discovery of the (integrally) quantized Hall effect in certain quasi-two-dimensional semiconductor materials by Klaus von Klitzing in the year 1980, see~\cite{KvK, Huang}. To our knowledge, analytical contributions to the asymptotic growth of the local entropy of this ground state were made by Klich~\cite{Klich}, by Rodr{\'{\i}}guez and Sierra~\cite{RS1,RS2}, and recently by Charles and Estienne~\cite{CE}. All these authors consider the case of the lowest Landau level only. In addition, \cite{Klich, RS1, RS2} treat regions of simple geometric shape only. The important work of Rodr{\'{\i}}guez and Sierra~\cite{RS1} contains non-rigorous arguments, but their formula for the asymptotic coefficient $\mathsf M_0(h_{1})$ (see \eqref{def:entropy} and \eqref{coeff} for the definition), confirmed in \cite{CE}, has been a guide for us to arrive at the more general asymptotic coefficients presented here. After all, the simplicity of the coefficient $\mathsf M_\ell(f)$ of the sub-leading boundary-curve term in \eqref{RS}, see \eqref{hermite fct} and (\ref{mell:eq}), is striking, given that it results from the $2m$-fold integration in \eqref{term} for arbitrary exponent $m\ge 1$. 

By adapting Roccaforte's approach for translation invariant integral kernels \cite{R} to those of the Landau-projection operators we extend results in \cite{RS1,CE} to rather general regions, to an arbitrary single Landau-level eigenspace, and even to the orthogonal sum of the first $n+1$ eigenspaces for arbitrary $n\ge 0$. By the last extension we can allow for an arbitrary value of the chemical potential $\mu\ge B$  (in suitable physical units) and, hence, for an arbitrary areal density of the particle number. Our proof consists of two basic steps. In the first step, we present the precise asymptotics of the trace of smooth functions of localized (or spatially truncated) Landau projections. By a suitable application of the Stone--Weierstra\ss{} approximation theorem this asymptotics is shown to follow from a corresponding one for polynomials, based on Lemma \ref{lem:moments}. Our proof of Lemma \ref{lem:moments} is elementary in the sense that it does not make use of the quasi-classical functional calculus for pseudo-differential operators, not even of standard stationary-phase evaluation techniques. However, it involves one change of variables which is cumbersome to utilize, see \eqref{exponent}. In the second step, we show how to get from smooth functions to the R\'enyi entropy functions $h_\a$. This is not obvious, in particular for R\'enyi index $\a\le 1$, and does not follow from standard approximation schemes. Therefore we prove and employ certain Schatten--von Neumann quasi-norm estimates, similarly to what has been done in\cite{LSS1}.

Our main result is Theorem \ref{entropy of ground state}. It turns out that all local R\'enyi ground-state entropies grow to leading order proportional to $L$ when the dimensionless parameter $L>0$ of the scaled region $L\Lambda$ is sent to infinity (due to the off-diagonal Gaussian decay of the Landau-projection integral kernels). This is in agreement with the so-called area-law scaling \cite{ECP}. Given that, the corresponding proportionality factor has the form $\sqrt{B}|\p\L|\mathsf M_{\le \nu}(h_\a)$. Here, the first two factors are expected from considering physical dimensions, because $|\p\L|$ denotes the arc length of the (smooth) boundary curve $\p\L$. The third factor $\mathsf M_{\le \nu}(h_\a)$ is a dimensionless asymptotic coefficient precisely given by \eqref{mleell:eq} and \eqref{def:entropy}. It depends in a non-trivial way on the R\'enyi index $\a>0$, but on $B$ and $\mu$ only through the integer part $\nu$ of $(\mu/B-1)/2$. It is finite and positive, but in general a rather complicated expression. However, if $\mu<3B$, then it simplifies considerably, because only the lowest Landau-level eigenspace remains to be relevant. The result agrees (for $\alpha=1$) with the one proved recently by Charles and Estienne \cite{CE}.

\textbf{Acknowledgement:} Various parts of this paper were developed and written during several visits of HL and AVS to the FernUniversit\"at in Hagen in 2016--2019, and during the stay of all authors at the International Newton Institute (Cambridge, UK) in 2015 and to the American Institute of Mathematics (AIM) at San Jose (CA, USA) in 2017. The authors are grateful to these institutions for hospitality and financial support. AVS was also supported by the EPSRC grants EP/J016829/1 and EP/P024793/1. We thank Paul Pfeiffer for helpful remarks.

\section{Setting the stage and basic asymptotic results for smooth functions}

We denote the scalar products in the Euclidean plane $\R^2$ and in the Hilbert space $\text L^2(\mathsf D)$ of complex-valued, square-integrable functions on a Borel set $\mathsf{D}\subseteq \R^2$ by the same bracket $\langle\cdot|\cdot\rangle$ and use the same notation $\|\cdot\|$ for the induced norms. Our convention is that the scalar product is anti-linear in the first and linear in the second argument.

Since in the ideal Fermi gas the indistinguishable (point) particles do not interact with each other, it is sufficient to consider the common Schr\"odinger operator for the kinetic energy of a single particle in the plane subject to a perpendicular constant magnetic field of strength $B>0$. This operator is known as the \textit{Landau Hamiltonian}. It acts self-adjointly  on a dense domain of definition in the single-particle Hilbert space $\text L^2(\R^2)$  and is given by 
\be\label{landau}
\text H := (-\mathrm{i}\nabla - a)^2 \,.
\ee
Here, we choose the symmetric gauge $a(x)=(a_1(x), a_2(x)) := (x_2,-x_1)B/2$ for the vector potential $a: \R^2 \to\R^2$ generating the constant magnetic field (vector) perpendicular to the plane with Cartesian coordinates $x=(x_1, x_2)$. Other gauges yield operators being unitarily equivalent to $\text H$. Moreover, here and in the following we are using physical units such that the particle mass and the particle charge equals $1/2$ and $1$, respectively. Similarly, we put the speed of light, Planck's constant (divided by $2\pi$), and Boltzmann's constant all equal to $1$.

Throughout the paper we use the symplectic $2\times2$ matrix $\mathsf{J} := \begin{bmatrix}0&1\\-1&0\end{bmatrix}$, the generalized Laguerre polynomials
\[ \mathcal{L}^{(k)}_\ell(t) := \sum_{j=0}^\ell \frac{(-1)^j}{j!}\, \binom{\ell+k}{\ell - j}\, t^j\,,\quad k\in\{-\ell,-\ell+1,\dots\}\,,\quad t\ge0
\]
of degree $\ell\in\N_0$, and the abbreviation $\mathcal{L}_\ell := \mathcal{L}^{(0)}_\ell$. For each degree $\ell$ we define an infinite-dimensional projection (operator) $\text P_\ell$ on $\text L^2(\R^2)$ by the Hermitian integral kernel
\be \label{Landau_kernel:eq}
p_\ell(x,y) := \frac{B}{2\pi} \,\exp(-B\|x-y\|^2/4) \, \mathcal{L}_\ell(B\|x-y\|^2/2)\,\exp(\mathrm{i}\mfr{B}2\langle x|\mathsf{J}y\rangle)\,,\quad x,y\in\mathbb R^2\,.
\ee
It is obviously $\mathsf C^\infty$-smooth and a Carleman kernel in the sense that it is square integrable with respect to $y\in\R^2$ for all $x\in\R^2$, and vice versa. Now the spectral decomposition of the Landau Hamiltonian $\text H$ may be written as 
\be \label{sr}
\text H=B\sum_{\ell=0}^\infty(2\ell +1) \text P_\ell \,.
\ee
As usual, this formula is meant in the sense of strong operator convergence on $\text L^2(\R^2)$. It goes back to Fock\cite{Fock} and Landau\cite{Landau}. The projections $\text P_\ell$, now recognizable as spectral projections, depend on the chosen gauge through the last (complex-valued phase) factor in \eqref{Landau_kernel:eq}, but the set $ \{ B, 3B, 5B,\ldots\}$ of harmonic-oscillator like eigenvalues, in other words \textit{Landau levels}, does not. The degree $\ell$ is now called \textit{Landau-level index}. We will also need the projection $\text P_{\le n} := \sum_{0\le \ell\le n} \text P_\ell$ on the orthogonal sum of the first  $n+1$ Landau-level eigenspaces $\text P_\ell\text L^2(\R^2)$ and mention the functional relation $\mathcal{L}_{\le n} := \sum_{0\le \ell\le n} \mathcal{L}_\ell = \mathcal{L}_n^{(1)}$. For later purposes we single out the (translation invariant) Gaussian part of the kernel by defining
\begin{align*}
g(z) := &\ \left(\frac{B}{2\pi}\right)^{1/2} \,\exp\bigg(-\frac{Bz^2}{4}\bigg)\,,\quad z\in\mathbb R\,,\\
g_2(x): = &\ g(x') g(x'')\,,\quad x = (x', x'')\in\mathbb R^2\,,
\end{align*}
so that 
\begin{align*}
\text P_{\ell}(x,y) \equiv & \,p_\ell(x,y)= \mathcal{L}_\ell(B\|x-y\|^2/2) g_2(x-y)\,\exp(\mathrm{i}\mfr{B}2\langle x|\mathsf{J}y\rangle)\,,\\
\text P_{\le n}(x,y):=& \sum_{\ell=0}^n p_\ell(x,y) = \mathcal{L}_{\le n}(B\|x-y\|^2/2) g_2(x-y)\,\exp(\mathrm{i}\mfr{B}2\langle x|\mathsf{J}y\rangle)\,.
\end{align*}

Now we are prepared to turn to the ground state of the ideal Fermi gas with the Landau Hamiltonian, see \eqref{landau} and \eqref{sr}, as its single-particle Hamiltonian and with the chemical potential $\mu\ge B$ as a real parameter. According to the grand-canonical formalism of quantum statistical mechanics \cite{Huang,BR2} this (infinite-area) ground state is quasi-Gaussian (in other words, quasi-free) and, as such, characterized by its reduced single-particle density operator on $\text L^2(\R^2)$ given by the \textit {Fermi projection}
\be\label{gs}
\Theta(\mu \mathds{1}-\text H)=\sum_{\ell=0}^\infty\Theta\big(\mu-(2\ell +1)B\big) \text P_\ell = \text P_{\le \nu} \,,\quad \mu\ge B \,.
\ee
Here, $\Theta$ is Heaviside's unit-step function (defined by $\Theta(t):=1$ if $t\ge0$ and zero otherwise), $\mathds{1}$ denotes the identity operator on $\text{L}^2(\mathbb R^2)$, and  $\nu :=\lfloor{(\mu/B-1)/2}\rfloor$ is the integer part of $(\mu/B-1)/2\ge 0$. 
Now we consider a Borel set  $\L\subseteq \R^2$  and the multiplication operator $\mathds{1}_\L$ on $\text{L}^2(\mathbb R^2)$ corresponding to its indicator function $1_\L$ on
 $\R^2$. Moreover, we introduce the \textit{local(ized)} Landau projections
 
\begin{align*}
\text P_\ell(\L) : = \mathds{1}_\L \text P_\ell \mathds{1}_\L\,,\ 
\text P_{\le n}(\L) : = \mathds{1}_\L \text P_{\le n} \mathds{1}_\L\,.
\end{align*}

The quasi-Gaussian substate associated with $\L\subseteq\R^2$ is now simply characterized by the \textit{local(ized)} Fermi projection
\be\label{ss}
\mathds{1}_\L \Theta(\mu \mathds{1}- \text H )\mathds{1}_\L =\text P_{\le \nu}(\L)\, ,
\ee 
see \cite {HLS}.

We ignore the uninteresting case $\mu<B$, because then the Fermi projection is the zero operator corresponding to a vanishing number of particles. In contrast, as a function of $\mu\ge B$ the (mean) local areal density of the particle number, $\rho(x)$, in the ground state characterized by the density operator \eqref{ss} is non-zero and equal to the diagonal of its integral kernel, that is,

\be\label{density}
\rho (x)=\text P_{\le \nu}(\L)(x,x)=   (\nu +1) \frac{B}{2\pi}1_\L(x)\,,\quad x\in\R^2 \,.
\ee
Integration over the plane $\R^2$ gives the (mean) total number of particles, $(\nu +1)B|\L| /2\pi$, in its subset $\L$ with (Lebesgue) area $|\L|$. So $\nu +1$ corresponds to an integer value of the filling factor in the physics literature.
In view of \eqref{ss} it suffices in the following to consider the projection $\text P_{\le n}(\L)$ for arbitrary $n\in\N_0$. Moreover, from now on we typically assume that $\Lambda\subset\R^2$ is the union of finitely many {bounded} domains (open connected sets), such that their closures are pairwise disjoint. We call such a $\Lambda$ a \textit{bounded region}. If the boundary curve $\partial\Lambda$ of $\L$ is $\mathsf C^{\g}$, $\g \in\N\cup\{\infty\}$, then we say that $\L$ is a \emph{bounded $\mathsf C^{\g}$-region}. 

Before we state our basic asymptotic results we recall the definition of the Hermite polynomials, $H_\ell$, of degree $\ell\in\N_0$. They satisfy the orthogonality relation
\be\label{orthog} \int_\R \text d t\, \exp(-t^2) H_\ell(t) H_{\ell'}(t) = \sqrt{\pi} 2^\ell \ell! \,\delta_{\ell,\ell'}\,,\quad \ell,\ell'\in\N_0\,. 
\ee
An explicit formula is
\be \label{Hermite explicit}
H_\ell(t) = \ell! \sum_{j=0}^{\lfloor \ell/2\rfloor} \frac{(-1)^j}{j!(\ell-2j)!} (2t)^{\ell-2j}\,,\quad t\in\R, \ell\in\N_0\,.
\ee
The Hermite functions $\psi_\ell$, defined by
\be\label{hermite fct}
\psi_\ell(t) := (\sqrt{\pi} 2^\ell \ell!)^{-1/2} H_\ell(t) \exp(-t^2/2)\,,\quad t\in\R, \ell\in\N_0\,,
\ee
constitute an orthonormal basis of the Hilbert space $\text L^2(\R)$ and are the (energy) eigenfunctions of the one-dimensional harmonic oscillator, that is,
\[ -\psi_\ell''(t) + t^2\psi_\ell(t) = (2\ell+1) \psi_\ell(t)\,,\quad t\in\R\,.
\]
For $\xi\in\R$ and a complex-valued function $f$ on the closed unit interval ${[0,1]}$ as in Lemma \ref{lemma 3} below we define 
\begin{align}\label{mell:eq}
\l_\ell(\xi):= \int_\xi^\infty {\mathrm d} t\, \psi_\ell(t)^2\,,\ \quad
{\sf M}_\ell(f) := \int_\R \frac{{\mathrm d} \xi}{2\pi}\,[f(\l_\ell(\xi)) - f(1)\l_\ell(\xi)]\, , 
\quad \ell\in\mathbb N_0.
\end{align} 
Obviously, each function $\l_\ell$ takes values in ${[0,1]}$ and is (strictly) decreasing. We also need to introduce for each $n\in\N_0 $ the one-parameter family of operators
\be\label{opK}
\mathcal K_{n, \xi} := \sum_{\ell=0}^n |\psi_{\ell, \xi}\rangle\langle\psi_{\ell, \xi}| =  \mathds{1}_{[\xi,\infty)}\sum_{\ell=0}^n |\psi_{\ell}\rangle\langle\psi_{\ell}|\, \mathds{1}_{[\xi,\infty)}    \,,\quad  \xi\in\R\,.
\ee
The operator $\mathcal K_{n, \xi}$  maps ${\mathrm L}^2(\mathbb R)$ 
self-adjointly on its $(n+1)$-dimensional subspace spanned by the first $(n+1)$ truncated 
Hermite functions $\psi_{\ell, \xi} := \psi_\ell 1_{[\xi, \infty)}$. 
This operator is not a projection, 
but it satisfies $0\le \mathcal K_{n,\xi} \le\mathds{1}_{[\xi,\infty)}\le \mathds{1}$.
Its integral kernel is given by the sum $\sum_{0\le \ell\le n} \psi_{\ell, \xi}(t)\psi_{\ell, \xi}(t')$,  which can be evaluated explicitly, see \eqref{op: K} below.

Along with ${\sf M}_\ell(f)$ we also define for $n\in\N_0$
\begin{align}\label{mleell:eq}
{\sf M}_{\le n}(f)
:= \int_\R \frac{{\mathrm d} \xi}{2\pi}\,[\tr f(\mathcal K_{n,\xi}) - f(1)\,\tr \mathcal K_{n,\xi}]\,,\quad {\sf M}_{\le 0}(f) = {\sf M}_{0}(f)\,.
\end{align} 
Here the trace refers to operators on ${\mathrm L^2}(\mathbb R)$. Now we are in a position to present our two basic asymptotic results. 

\begin{thm}[\bf{For the $\ell${th} Landau level}, 
$\ell\in\N_0$]\label{thm:LL smooth f} 
Let $\L\subset\R^2$ be a bounded $\mathsf C^3$-region in the sense defined below \eqref{density}. 
Moreover, let $f:[0, 1]\to \mathbb C$ be a complex-valued continuous 
function on the closed unit interval with $f(0)=0$, differentiable from the right 
at $t=0$ and differentiable from the left 
at $t=1$. Finally, let $L>0$ be a (dimensionless) scaling parameter. Then we have
\be \label{RS}
\tr f({\rm P}_\ell(L\L)) = L^2B\, \frac{|\L|}{2\pi}\,f(1)  + L\sqrt{B} \, |\p\L|\,{\sf M}_\ell(f) + o(L)\,,
\ee
as $L\to\infty$. The asymptotic coefficient is finite, that is, $|{\sf M}_\ell(f)|<\infty$.
\end{thm}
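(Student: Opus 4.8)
The plan is to prove \eqref{RS} first for the monomials $f(t)=t^m$, $m\in\N$, where it is the content of Lemma \ref{lem:moments}, and then to deduce the general case by linearity together with the Stone--Weierstra\ss{} theorem. Write $A:={\rm P}_\ell(L\L)=\mathds{1}_{L\L}{\rm P}_\ell\mathds{1}_{L\L}$, so that $0\le A\le\mathds{1}$. Since the kernel of $A$ is $1_{L\L}(\bfx)\,p_\ell(\bfx,\bfy)\,1_{L\L}(\bfy)$, its trace is read off the diagonal via \eqref{density}: $\tr A=\int_{L\L}p_\ell(\bfx,\bfx)\,\mathrm d\bfx=\frac B{2\pi}|L\L|=\frac B{2\pi}L^2|\L|$, with no remainder. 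Hence the linear function $t\mapsto f(1)\,t$ produces exactly the leading term $L^2B\frac{|\L|}{2\pi}f(1)$ of \eqref{RS}, and replacing $f$ by the function $g$ with $g(t):=f(t)-f(1)\,t$, for which $g(0)=g(1)=0$ and $\mathsf M_\ell(g)=\mathsf M_\ell(f)$, reduces everything to showing $\tr g(A)=L\sqrt B\,|\p\L|\,\mathsf M_\ell(g)+\mathcal O(1)$. The finiteness $|\mathsf M_\ell(f)|<\infty$ follows directly from \eqref{mell:eq}: as $\xi\to+\infty$ we have $\l_\ell(\xi)\to0$ and, by $f(0)=0$ and right-differentiability, $f(\l_\ell(\xi))-f(1)\l_\ell(\xi)=(f'(0^+)-f(1))\l_\ell(\xi)+o(\l_\ell(\xi))$; as $\xi\to-\infty$ we have $1-\l_\ell(\xi)\to0$ and, by left-differentiability, $f(\l_\ell(\xi))-f(1)\l_\ell(\xi)=(f(1)-f'(1^-))(1-\l_\ell(\xi))+o(1-\l_\ell(\xi))$. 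Both $\l_\ell$ and $1-\l_\ell$ decay like a Gaussian times a polynomial (a squared Hermite function, cf.\ \eqref{hermite fct}), so the integral converges absolutely.

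For the monomial $t^m$ I would start from the $2m$-fold integral \eqref{term}, $\tr A^m=\int 1_{L\L}(\bfx_1)\cdots1_{L\L}(\bfx_m)\,p_\ell(\bfx_1,\bfx_2)\cdots p_\ell(\bfx_m,\bfx_1)\,\mathrm d\bfx_1\cdots\mathrm d\bfx_m$, and change variables to a single position $\bfu:=\bfx_1$ and the differences $\bfv_j:=\bfx_{j+1}-\bfx_j$, setting $\bfw_j:=\sum_{k<j}\bfv_k$ so that $\bfx_j=\bfu+\bfw_j$ with $\bfw_1=0$. The decisive observation is that the product of the gauge-dependent phases in \eqref{Landau_kernel:eq} is independent of $\bfu$: by the antisymmetry of $\mathsf J$ and the cyclicity of the product the total phase collapses to $\frac B2\sum_j\langle\bfw_j|\mathsf J\bfw_{j+1}\rangle$, a function of the differences alone. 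This is the gauge invariance of the trace, and it is exactly the point where Roccaforte's treatment of \emph{translation invariant} kernels \cite{R} must be adapted to the genuinely non-translation-invariant Landau kernel. Consequently the integrand factorizes as $K_m(\bfv)\prod_{j}1_{L\L}(\bfu+\bfw_j)$, where $K_m$ collects the Laguerre and Gaussian factors $\mathcal L_\ell(B\|\bfv_j\|^2/2)g_2(\bfv_j)$ together with the residual phase and depends on $\bfv$ only, while the $\bfu$-integration yields the purely geometric overlap $\big|\bigcap_{j=1}^m(L\L-\bfw_j)\big|$. Because ${\rm P}_\ell$ is a projection, $\int K_m(\bfv)\,\mathrm d\bfv=({\rm P}_\ell^m)(\bfx,\bfx)=p_\ell(\bfx,\bfx)=\tfrac B{2\pi}$, which already regenerates the leading term; the boundary term is the first correction to the overlap.

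The core of the argument is a Roccaforte-type expansion of this overlap for a bounded $\mathsf C^2$-region: flattening $\p\L$ (whose curvature is $\mathcal O(1/L)$ after scaling) reduces the constraint $\bfu+\bfw_j\in L\L$ to a one-dimensional condition on the inward-normal coordinate of $\bfu$ and gives $\big|\bigcap_j(L\L-\bfw_j)\big|=L^2|\L|+L\int_{\p\L}\min_{j}\langle\bfw_j|\bfn(s)\rangle\,\mathrm ds+\mathcal O(1)$, the $\mathsf C^2$-smoothness being precisely what controls the $\mathcal O(1)$ remainder. Substituting this into $\int K_m(\bfv)(\cdots)\,\mathrm d\bfv$ and using that $|K_m|$ is rotation invariant, the boundary coefficient $\int K_m(\bfv)\min_j\langle\bfw_j|\bfn\rangle\,\mathrm d\bfv$ is independent of the direction $\bfn$ and equals $\sqrt B\,\mathsf M_\ell(t^m)$. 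Identifying it with $\frac{\sqrt B}{2\pi}\int_\R(\l_\ell(\xi)^m-\l_\ell(\xi))\,\mathrm d\xi$ is the heart of Lemma \ref{lem:moments}: one must carry out the cumbersome change of variables indicated in \eqref{exponent} that trades the $2(m-1)$ difference variables for the single spectral variable $\xi$. Conceptually the functions $\l_\ell$ enter because, in the Landau gauge adapted to $\bfn$, a single Landau level localized to a half-plane decomposes, after Fourier transform along the boundary, into the rank-one operators $|\psi_{\ell,\xi}\rangle\langle\psi_{\ell,\xi}|$ of \eqref{opK} with occupation $\l_\ell(\xi)=\int_\xi^\infty\psi_\ell^2$, while the momenta along a boundary of length $L|\p\L|$ have density $\frac{L|\p\L|\sqrt B}{2\pi}\,\mathrm d\xi$; the subtracted $-\l_\ell(\xi)=-f(1)\l_\ell(\xi)$ is exactly the bulk reference removed in the first step. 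This matching is the one I expect to be the main obstacle, both because the change of variables \eqref{exponent} is genuinely unwieldy and because the integrand $\min_j\langle\bfw_j|\bfn\rangle$ is only piecewise linear, so the $\mathcal O(1)$ error in the overlap must be controlled uniformly before integrating against $K_m$.

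It remains to pass from monomials to the given $f$. By linearity \eqref{RS} then holds for every polynomial $P$ with $P(0)=0$, so I would approximate the function $g$ (with $g(t)=f(t)-f(1)t$) uniformly on $[0,1]$ by polynomials $Q$ satisfying $Q(0)=Q(1)=0$. The point is to control $\tr(g-Q)(A)$ uniformly in $L$: the eigenvalues of $A$ accumulate only at $0$ and $1$, and the transition layer carries $\mathcal O(L)$ of them, as quantified by $\tr(A-A^2)=\tr\big(\mathds{1}_{L\L}{\rm P}_\ell\mathds{1}_{(L\L)^c}{\rm P}_\ell\mathds{1}_{L\L}\big)=\int_{L\L}\int_{(L\L)^c}|p_\ell(\bfx,\bfy)|^2\,\mathrm d\bfy\,\mathrm d\bfx=\mathcal O(L)$, the last bound coming from the Gaussian off-diagonal decay of $p_\ell$. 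Together with the Schatten--von Neumann estimates employed in \cite{LSS1} and the one-sided differentiability of $g$ at the endpoints, this bounds $\tr(g-Q)(A)$ so that the boundary coefficient $\mathsf M_\ell(g)=\mathsf M_\ell(f)$ is captured with the remainder stated in \eqref{RS}. Refining the approximation then yields \eqref{RS} for all admissible $f$, which completes the plan.
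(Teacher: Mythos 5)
Your overall architecture is the same as the paper's (monomials via difference variables, the gauge-phase cancellation as in \eqref{equ 5}, Roccaforte's expansion, then a Stone--Weierstra\ss{} extension), but the proposal has two genuine gaps. The first is in the monomial case itself: you reduce the boundary term to the integral of $K_m(\bfv)\min_j\langle\bfw_j|\bfn\rangle$ and then simply assert its identification with $\frac{1}{2\pi}\int_\R\big(\l_\ell(\xi)^m-\l_\ell(\xi)\big)\,\mathrm d\xi$, supporting it only by a heuristic (Landau gauge adapted to $\bfn$, Fourier transform along the boundary, the rank-one operators of \eqref{opK}) and explicitly deferring the change of variables \eqref{exponent} as ``the main obstacle.'' That identification \emph{is} the proof of Lemma \ref{lem:moments}: in the paper it occupies the decomposition of $\R^{m-1}$ into the sets $\mathcal S_q$, the passage to the $\tau$-variables with the quadratic-form identity \eqref{exponent} (verified in Appendix \ref{Misc}), the corresponding transformation of the Laguerre arguments (Appendix \ref{appendix B.2}), the Mehler-formula identity \eqref{Hermite identity} converting each Laguerre--Gaussian integral into products $H_\ell(\xi)H_\ell(\tau)$, and a final integration by parts yielding $-\int(\l_\ell^{m-1}-1)\l_\ell$. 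A plausibility argument for why $\l_\ell$ should appear does not establish the value of the coefficient, so the monomial case --- and hence everything built on it --- remains unproved in your plan.

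The second gap is in the extension step. You propose to approximate $g(t)=f(t)-f(1)t$ \emph{uniformly} by polynomials $Q$ with $Q(0)=Q(1)=0$ and to control $\tr(g-Q)(A)$, $A:={\rm P}_\ell(L\L)$, using $\tr(A-A^2)=\mathcal O(L)$. This does not work: $A$ has of order $L^2$ eigenvalues clustering at $1$, and the natural bound
$\sum_k|(g-Q)(a_k)|\lesssim \mathrm{Lip}(g-Q)\cdot\tr\big[A(\mathds 1-A)\big]$
involves the Lipschitz constants of $g-Q$ at the endpoints, which do \emph{not} become small as the uniform distance $\|g-Q\|_\infty$ shrinks (the derivatives of $Q$ at $0,1$ need not converge to the one-sided derivatives of $g$). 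So your error is $C\cdot L$ with a constant that does not tend to zero as the approximation improves, which is of the same order as the term you are trying to identify; the same defect appears in comparing $\mathsf M_\ell(g)$ with $\mathsf M_\ell(Q)$. What is needed is approximation weighted by $\tilde g(t)=t(1-t)$: since $f(0)=0$ and $f$ has one-sided derivatives at $0$ and $1$, one can write $g=b\tilde g$ with $b$ continuous on $[0,1]$, approximate $b$ uniformly by a polynomial $p$, and use the pointwise sandwich $\tilde g p-\varepsilon\tilde g\le g\le\tilde g p+\varepsilon\tilde g$ together with the monomial asymptotics applied to $\tilde g$ itself, so that the error is $\varepsilon\,\tr\tilde g(A)=\mathcal O(\varepsilon L)$ with $\varepsilon$ arbitrary --- this is exactly the paper's argument, and it is where the one-sided differentiability hypothesis is actually used. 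Your appeal to the Schatten--von Neumann estimates of \cite{LSS1} is misplaced at this stage; in the paper those enter only later, for the non-smooth entropy functions $h_\a$ with $\a\le1$, not for Theorem \ref{thm:LL smooth f}.
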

Here the trace refers to operators on ${\mathrm L}^2(\mathbb R^2)$ and, as usual, $o(L)$ stands for some function of $L$ with $\lim_{L\to\infty}|o(L)|/L=0$.

\begin{thm}[\bf{For the first $(n+1)$ Landau levels}, $n\in\N_0$]\label{thm:ground state smooth f} 
Under the same assumptions as in Theorem \ref{thm:LL smooth f} we have
\be \label{RS:general}
\tr f({\rm P}_{\le n}(L\L)) = L^2B\,\frac{|\L|}{2\pi}\,(n+1) f(1) + L\sqrt{B} \, |\p\L|\, {\sf M}_{\le n}(f) + o(L)\,,
\ee
as $L\to\infty$. The asymptotic coefficient is finite, that is, $|{\sf M}_{\le n}(f)|<\infty$.
\end{thm}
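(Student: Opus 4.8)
The plan is the two-step scheme announced in the Introduction: first prove \eqref{RS:general} for monomials $f(t)=t^m$, $m\in\N$, then extend to every admissible $f$ by approximation. The reduction to monomials is immediate, since both coefficients on the right-hand side of \eqref{RS:general} are linear in $f$ --- the bulk coefficient through $f(1)$ and the boundary coefficient through the linear map $f\mapsto\tr f(\mathcal K_{n,\xi})-f(1)\tr\mathcal K_{n,\xi}$ of \eqref{mleell:eq}. Moreover the bulk coefficient is the same $(n+1)B|\L|/2\pi$ for every $m$, because $\text P_{\le n}$ is a genuine projection: away from $\p(L\L)$ the truncation $\mathds{1}_{L\L}$ acts as the identity, so $[\text P_{\le n}(L\L)]^m$ reproduces the constant diagonal $(n+1)B/2\pi$ up to a boundary layer.

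Step one is the analytic core. I would expand $\tr[\text P_{\le n}(L\L)]^m$ as a $2m$-fold integral over $(L\L)^m$ of the product $\prod_j\text P_{\le n}(x_j,x_{j+1})$. The Gaussian factors $g_2(x_j-x_{j+1})$ force consecutive arguments to stay within distance $\mathcal O(1/\sqrt B)$, so the deviation from the bulk value $(n+1)B|\L|L^2/2\pi$ comes only from a tube of $\mathcal O(1)$ width around $\p(L\L)$, whose total length is $L|\p\L|$. Following Roccaforte's treatment of translation-invariant kernels, I would flatten $\p(L\L)$ locally in arc-length and signed-normal coordinates and carry out the change of variables indicated in \eqref{exponent}. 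After rescaling the normal variable by $\sqrt B$, the tangential integration contributes the length $L|\p\L|$ while the transverse integrations collapse, through the relation between the Landau-projection kernel restricted to a line and the Hermite functions (recall $\mathcal L_{\le n}=\mathcal L_n^{(1)}$), onto the one-dimensional truncated operator $\mathcal K_{n,\xi}$ of \eqref{opK}; this identifies the $\mathcal O(L)$ term as $L\sqrt B|\p\L|\,{\sf M}_{\le n}(t^m)$. Performing this change of variables uniformly along a merely $\mathsf C^2$ boundary, and showing that the curvature and the higher-order terms of the coordinate change affect only the $\mathcal O(1)$ remainder, is the step I expect to be the main obstacle. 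Summation over $m$ gives \eqref{RS:general} for every polynomial $P$ with $P(0)=0$.

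For step two write $f=f(1)\,t+g$ with $g:=f-f(1)\,t$, so that $g(0)=g(1)=0$ and $g$ is one-sidedly differentiable at both endpoints. The linear part is exact, $\tr[f(1)\text P_{\le n}(L\L)]=f(1)(n+1)B|\L|L^2/2\pi$, with no boundary contribution since ${\sf M}_{\le n}(t)=0$; hence only $g$ remains. The key observation is that the endpoint data of $g$ permit the factorization $g(t)=t(1-t)\,h_0(t)$ with $h_0$ extending continuously to $[0,1]$. Writing $A:=\text P_{\le n}(L\L)$, which is positive and trace-class with spectrum in $[0,1]$, I pick by Weierstra\ss{} a polynomial $R$ with $\|h_0-R\|_\infty\le\eps$ and set $Q(t):=t(1-t)R(t)$, a polynomial with $Q(0)=Q(1)=0$. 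Then $(g-Q)(t)=t(1-t)(h_0-R)(t)$, whence
\[
|\tr(g-Q)(A)|\le\|h_0-R\|_\infty\,\tr[A-A^2]\le\eps\,\tr[A-A^2]=\eps\,\mathcal O(L)\,,
\]
because $\tr[A-A^2]=\sum_j\l_j(1-\l_j)$ is the idempotency defect, equal to $-L\sqrt B|\p\L|{\sf M}_{\le n}(t^2)+\mathcal O(1)$ by the already-proven $m=1,2$ moments. This single bound supplies the uniform-in-$L$ control that legitimizes the approximation.

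It remains to send $\eps\to0$. The polynomial case gives $\tr Q(A)=L\sqrt B|\p\L|{\sf M}_{\le n}(Q)+\mathcal O(1)$, and the same factorization controls the coefficients: since $(g-Q)(1)=0$, linearity of \eqref{mleell:eq} yields ${\sf M}_{\le n}(g)-{\sf M}_{\le n}(Q)=\int\tfrac{\mathrm d\xi}{2\pi}\tr(g-Q)(\mathcal K_{n,\xi})$, bounded in modulus by $\|h_0-R\|_\infty\int\tfrac{\mathrm d\xi}{2\pi}\tr[\mathcal K_{n,\xi}-\mathcal K_{n,\xi}^2]=\eps\,|{\sf M}_{\le n}(t^2)|$. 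The last integral is finite because the Gaussian decay of the Hermite functions $\psi_\ell$ forces $\tr\mathcal K_{n,\xi}\to0$ as $\xi\to+\infty$ and $\mathcal K_{n,\xi}\to\sum_{\ell\le n}|\psi_\ell\rangle\langle\psi_\ell|$ as $\xi\to-\infty$, both at a Gaussian rate; this simultaneously establishes the asserted finiteness $|{\sf M}_{\le n}(f)|<\infty$. Combining the three estimates and letting $\eps\to0$ proves \eqref{RS:general}. The whole argument runs parallel to the single-level case of Theorem \ref{thm:LL smooth f}, the only change being that the rank-one operator with eigenvalue $\l_\ell(\xi)$ is replaced by the $(n+1)$-dimensional $\mathcal K_{n,\xi}$, so that scalar values $f(\l_\ell(\xi))$ become operator traces $\tr f(\mathcal K_{n,\xi})$.
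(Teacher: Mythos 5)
Your step two is sound and is in essence the paper's own extension argument: the paper, too, subtracts $f(1)t$, factors the remainder as $\tilde{g}(t)b(t)$ with $\tilde{g}(t)=t(1-t)$ and $b$ continuous, approximates $b$ by a polynomial via Stone--Weierstra\ss{}, and controls the error through $\tr \tilde{g}(\text P_{\le n}(L\L))=\mathcal O(L)$. Your direct bound $|\tr(g-Q)(A)|\le\|h_0-R\|_\infty\,\tr[A-A^2]$ is a slightly streamlined version of the paper's two-sided sandwich, and your bound on the coefficient error by $\eps\,|{\sf M}_{\le n}(t^2)|$ matches the paper's estimate ${\sf M}_\ell(\tilde{g}(p-b))\le \eps\,{\sf M}_\ell(\tilde{g})$. (The finiteness of $\int \mathrm{d}\xi \,\tr[\mathcal K_{n,\xi}-\mathcal K_{n,\xi}^2]$ that you invoke does require, for $\xi\to-\infty$, the orthogonality trick of Lemma \ref{mleell:lem} to handle the cross terms $|\langle \psi_{\ell,\xi}|\psi_{\ell',\xi}\rangle|$, but your Gaussian-rate claim is correct in substance.)

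The genuine gap is step one. The entire analytic content of the theorem is the moment asymptotics
\[
\tr \text P_{\le n}(L\L)^m = L^2B\,\frac{|\L|}{2\pi}\,(n+1) + L\sqrt{B}\,|\p\L| \int_\R \frac{\mathrm{d}\xi}{2\pi}\,\big[\tr\mathcal K_{n,\xi}^m - \tr \mathcal K_{n,\xi}\big] + \mathcal{O}(1)
\]
(the paper's Lemma \ref{lem:general moments}), and you do not prove it: you assert that after Roccaforte's expansion and a local flattening of the boundary the transverse integrations ``collapse'' onto $\mathcal K_{n,\xi}$, and you yourself flag the change of variables as the main obstacle. Note, however, that the obstacle is not really the $\mathsf C^2$ uniformity --- Proposition \ref{Roccaforte} disposes of the geometry once and for all --- but the algebraic identification of the boundary coefficient, which in the paper requires three nontrivial ingredients absent from your sketch: (i) the explicit change of variables on the Roccaforte sets $\mathcal S_q$ via the matrices $\mathsf{A}^{(q)}$, verified in Appendix \ref{Misc} to produce the factorized Gaussian \eqref{exponent}; (ii) the identity \eqref{Hermite identity}, proved through Mehler's formula, which converts each Laguerre factor $\mathcal{L}_{\le n}\big((\omega-2\mathrm{i}\xi)(\omega-2\mathrm{i}\tau)/2\big)$ into sums of products $H_\ell(\xi)H_\ell(\tau)$; and (iii) the Christoffel--Darboux resummation \eqref{CD} over the level indices $\ell_1,\dots,\ell_m$, which is precisely what makes the cyclic product of two-variable kernels close up into $\tr\mathcal K_{n,\xi}^m$ via the kernel formula \eqref{op: K}. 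Without these, the identification of the $\mathcal O(L)$ coefficient with \eqref{mleell:eq} --- that is, the precise value of ${\sf M}_{\le n}$ which the theorem asserts --- remains unproven. In sum, you have a correct reduction of the theorem to the paper's Lemma \ref{lem:general moments}, together with a correct extension step essentially identical to the paper's; the lemma itself, which is the core of the result, is missing.
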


The finiteness of the coefficients  ${\sf M}_\ell(f)$ and ${\sf M}_{\le n}(f)$ are consequences of the following Lemma \ref{lemma 3} and Lemma \ref{mleell:lem}, because the smooth function $f$ assumed in Theorem \ref{thm:LL smooth f} and Theorem \ref{thm:ground state smooth f} satisfies the bound \eqref{q:eq}. The proofs of \eqref{RS} and \eqref{RS:general} are postponed until the proof of Lemma \ref{mleell:lem}. In the next lemma and in the following, by $C, c$ with or without indices, we denote various finite and positive constants, whose precise values are of no importance.

\begin{lemma}\label{lemma 3}
Let $f:[0, 1]\to \mathbb C$ be a measurable function satisfying the bound 
\begin{align}\label{q:eq}
|f(t) - f(1) t|\le Ct^q(1-t)^q, \quad t\in [0, 1]\,,
\end{align}
with some $q>0$. Then $|{\sf M}_\ell(f)|<\infty$ for all $\ell \in\N_0$.
\end{lemma}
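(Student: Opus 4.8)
```latex
The plan is to show that the integral defining ${\sf M}_\ell(f)$ converges
absolutely under the hypothesis \eqref{q:eq}. By the definition
\eqref{mell:eq}, the integrand is $f(\l_\ell(\xi)) - f(1)\l_\ell(\xi)$, and
since $\l_\ell(\xi)\in[0,1]$ we may apply the bound \eqref{q:eq} with
$t=\l_\ell(\xi)$ to obtain
\be
|f(\l_\ell(\xi)) - f(1)\l_\ell(\xi)| \le C\,\l_\ell(\xi)^q\,(1-\l_\ell(\xi))^q\,.
\ee
Thus it suffices to prove that $\int_\R {\mathrm d}\xi\,
\l_\ell(\xi)^q(1-\l_\ell(\xi))^q$ is finite. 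Because $\l_\ell$ is continuous
and the integrand is bounded by $1$, the only issue is the decay as
$\xi\to\pm\infty$, so the whole matter reduces to controlling the tails of
$\l_\ell$.

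First I would analyze the behavior as $\xi\to+\infty$. Here
$\l_\ell(\xi)=\int_\xi^\infty {\mathrm d}t\,\psi_\ell(t)^2\to 0$, so the factor
$(1-\l_\ell(\xi))^q$ is harmless and we need $\l_\ell(\xi)^q$ to be integrable
at $+\infty$. From the explicit form \eqref{hermite fct}, the Hermite function
satisfies $\psi_\ell(t)^2 \le P(t)\exp(-t^2)$ for a polynomial $P$ depending on
$\ell$; integrating the tail gives a Gaussian-type bound
$\l_\ell(\xi)\le C_\ell\,(1+|\xi|)^{2\ell}\exp(-\xi^2)$ for $\xi$ large. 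Hence
$\l_\ell(\xi)^q$ decays faster than any exponential and is certainly
integrable on $[1,\infty)$. Symmetrically, as $\xi\to-\infty$ one has
$\l_\ell(\xi)\to 1$, and the complementary quantity
$1-\l_\ell(\xi)=\int_{-\infty}^\xi {\mathrm d}t\,\psi_\ell(t)^2$ enjoys the same
Gaussian decay by the same estimate, so $(1-\l_\ell(\xi))^q$ is integrable on
$(-\infty,-1]$ while $\l_\ell(\xi)^q\le 1$ there.

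Combining the two tail estimates with boundedness on the compact middle
interval $[-1,1]$ yields a finite value for $\int_\R {\mathrm d}\xi\,
\l_\ell(\xi)^q(1-\l_\ell(\xi))^q$, and therefore $|{\sf M}_\ell(f)|<\infty$.
The only genuine work is establishing the Gaussian tail bound for
$\l_\ell$; this follows from the standard pointwise decay of Hermite functions
together with an elementary estimate of the form $\int_\xi^\infty
t^{2\ell}\exp(-t^2)\,{\mathrm d}t \le C_\ell\,\xi^{2\ell-1}\exp(-\xi^2)$ for
$\xi\ge 1$, obtained by a single integration by parts or by comparison with
$\exp(-\xi^2)$. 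I expect this tail estimate to be the main (though routine)
obstacle, since everything else is a direct consequence of \eqref{q:eq} and
the fact that $\l_\ell$ takes values in $[0,1]$.
```
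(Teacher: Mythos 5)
Your proposal is correct and follows essentially the same route as the paper's own proof: both apply the hypothesis \eqref{q:eq} pointwise with $t=\l_\ell(\xi)$ and then reduce everything to Gaussian tail bounds on $\l_\ell$, obtained from the pointwise decay of the Hermite functions for $\xi\to+\infty$ and from the symmetry relation $1-\l_\ell(\xi)=\l_\ell(-\xi)$ for $\xi\to-\infty$. The only cosmetic difference is that the paper states the tail bound as $\l_\ell(\xi)\le C_\d\,\e^{-\d\xi^2}$ with arbitrary $\d<1$, while you keep the polynomial prefactor $(1+|\xi|)^{2\ell}\exp(-\xi^2)$; both suffice.
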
 

\begin{proof}
Firstly, we observe that 
\begin{align*}
|\psi_\ell(t)|\le C(1+ |t|)^\ell \,\e^{-\frac{t^2}{2}},\quad t\in \mathbb R, \ell\in \mathbb N_0\,,
\end{align*}
with a constant $C$ depending on $\ell$. Therefore, for $\xi \ge 0$ the function $\l_\ell$ satisfies the bound 
\begin{align*}
\l_\ell(\xi)\le C \int_\xi^\infty \text dt\, (1+t)^{2\ell} \,\e^{-t^2} \le C_\d \,\e^{-\d \xi^2}\,,
\end{align*}
with an arbitrary $\d < 1$. Similarly, for $\xi <0$ we have
\begin{align*}
\l_\ell(\xi) = 1- \l_\ell(-\xi)\ge 1-C_\d \,\e^{-\d \xi^2}\,.
\end{align*}
Combining this with \eqref{q:eq} yields the claimed result.
\end{proof}

Concerning the other coefficient ${\sf M}_{\le n}(f)$ we have the following

\begin{lemma}\label{mleell:lem}
Under the same assumption as in Lemma \ref{lemma 3} we have 
\begin{align*}
\|f(\mathcal K_{n, \xi}) - f(1) \mathcal K_{n, \xi}\|_1\le C_\d \,\e^{-\d q \xi^2}\,,
\end{align*}
for every $n\in\N_{0}$ with an arbitrary $0<\d<1$, and hence $|{\sf M}_{\le n}(f)|<\infty$.
\end{lemma}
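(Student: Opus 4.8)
The plan is to reduce the trace-norm estimate to a single scalar quantity, $\tr(\mathcal K_{n,\xi}-\mathcal K_{n,\xi}^2)$, and to bound the latter by the same Gaussian tail already controlled in the proof of Lemma \ref{lemma 3}. First I would note that the hypothesis \eqref{q:eq} evaluated at $t=0$ forces $f(0)=0$, and that $\mathcal K_{n,\xi}$ is self-adjoint of rank at most $n+1$ with all eigenvalues $\kappa_0,\dots,\kappa_n$ (listed with multiplicity) lying in $[0,1]$. Diagonalising $\mathcal K_{n,\xi}$ in an orthonormal eigenbasis, the operator $f(\mathcal K_{n,\xi})-f(1)\mathcal K_{n,\xi}$ is diagonal in the same basis with eigenvalues $f(\kappa_j)-f(1)\kappa_j$ (the kernel contributing nothing because $f(0)=0$). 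Hence
\[
\|f(\mathcal K_{n,\xi})-f(1)\mathcal K_{n,\xi}\|_1=\sum_{j=0}^n\big|f(\kappa_j)-f(1)\kappa_j\big|\le C\sum_{j=0}^n\big(\kappa_j(1-\kappa_j)\big)^q,
\]
where the last inequality is \eqref{q:eq}. As there are at most $n+1$ summands, it is enough to show $\kappa_j(1-\kappa_j)\le C_\d\,\e^{-\d\xi^2}$ for each $j$ and every $0<\d<1$.

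To obtain this I would use that $\tr(\mathcal K_{n,\xi}-\mathcal K_{n,\xi}^2)=\sum_j\kappa_j(1-\kappa_j)$ is a sum of non-negative terms (since $\kappa_j\in[0,1]$), so each individual $\kappa_j(1-\kappa_j)$ is dominated by this trace. Writing $P:=\mathds 1_{[\xi,\infty)}$ and $\Pi:=\sum_{\ell=0}^n|\psi_\ell\rangle\langle\psi_\ell|$, a direct computation with $\tr\mathcal K_{n,\xi}=\sum_{\ell=0}^n\l_\ell(\xi)$, the cyclicity of the trace, and Parseval's identity $\l_\ell(\xi)=\sum_{m=0}^\infty\big(\int_\xi^\infty\psi_\ell\psi_m\big)^2$ collapses the difference to
\[
\tr(\mathcal K_{n,\xi}-\mathcal K_{n,\xi}^2)=\sum_{\ell=0}^n\sum_{m>n}\Big(\int_\xi^\infty\psi_\ell(t)\psi_m(t)\,\mathrm d t\Big)^2.
\]

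The crux is to estimate this double sum, and the difficulty is precisely that the inner sum over $m$ cannot be bounded term by term, because the pointwise series $\sum_m\psi_m(t)^2$ diverges. The remedy is to keep the sum intact and invoke completeness once more. For $\xi\ge0$, the tail over $m>n$ is bounded by the full Parseval sum, giving $\sum_{m>n}(\int_\xi^\infty\psi_\ell\psi_m)^2\le\l_\ell(\xi)\le C_\d\,\e^{-\d\xi^2}$, exactly the bound established inside the proof of Lemma \ref{lemma 3}. For $\xi<0$ I would first use the orthogonality $\int_\R\psi_\ell\psi_m=0$ for $\ell\le n<m$ to rewrite $\int_\xi^\infty\psi_\ell\psi_m=-\int_{-\infty}^\xi\psi_\ell\psi_m$, and then Parseval again to get $\sum_{m>n}(\int_\xi^\infty\psi_\ell\psi_m)^2\le\int_{-\infty}^\xi\psi_\ell^2=\l_\ell(-\xi)\le C_\d\,\e^{-\d\xi^2}$, using $\l_\ell(\xi)=1-\l_\ell(-\xi)$ and $-\xi>0$. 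In either case $\tr(\mathcal K_{n,\xi}-\mathcal K_{n,\xi}^2)\le(n+1)C_\d\,\e^{-\d\xi^2}$.

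Combining the two steps, each $\kappa_j(1-\kappa_j)\le(n+1)C_\d\,\e^{-\d\xi^2}$, so $\sum_j(\kappa_j(1-\kappa_j))^q\le C_\d'\,\e^{-\d q\xi^2}$, which is the asserted trace-norm bound. The finiteness of ${\sf M}_{\le n}(f)$ then follows at once, since the integrand in \eqref{mleell:eq} equals $\tr\big(f(\mathcal K_{n,\xi})-f(1)\mathcal K_{n,\xi}\big)$, whose modulus is at most $\|f(\mathcal K_{n,\xi})-f(1)\mathcal K_{n,\xi}\|_1\le C_\d\,\e^{-\d q\xi^2}$, an integrable majorant on $\R$. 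I expect the only genuinely delicate point to be this passage through Parseval to tame the otherwise divergent $m$-sum, together with the correct bookkeeping of the sign of $\xi$ via the reflection symmetry $\l_\ell(\xi)=1-\l_\ell(-\xi)$.
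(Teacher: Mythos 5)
Your proposal is correct. The reduction via \eqref{q:eq} to the operator $\tilde g(\mathcal K_{n,\xi})=\mathcal K_{n,\xi}-\mathcal K_{n,\xi}^2$, the exploitation of the finite rank $n+1$, and the case split $\xi\ge 0$ versus $\xi<0$ (with the reflection trick resting on orthogonality of distinct Hermite functions) all match the paper's strategy. Where you genuinely diverge is in the core estimate on $\tilde g(\mathcal K_{n,\xi})$: the paper expands $\mathcal K_{n,\xi}=\sum_{\ell\le n}Q_{\ell,\xi}$ into rank-one pieces and bounds the \emph{operator norm} of $\tilde g(\mathcal K_{n,\xi})$ by treating the diagonal terms $Q_{\ell,\xi}-Q_{\ell,\xi}^2$ and the cross terms $Q_{\ell,\xi}Q_{\ell',\xi}$ separately (the latter via the overlaps $\langle\psi_{\ell,\xi}|\psi_{\ell',\xi}\rangle$ and Cauchy--Schwarz), whereas you bound the \emph{trace} of $\tilde g(\mathcal K_{n,\xi})$ through the exact Parseval identity $\tr\big(\mathcal K_{n,\xi}-\mathcal K_{n,\xi}^2\big)=\sum_{\ell\le n}\sum_{m>n}\big(\int_\xi^\infty\psi_\ell\psi_m\big)^2$, then dominate each eigenvalue product $\kappa_j(1-\kappa_j)$ by that trace using positivity. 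Your identity is a nice structural observation: it makes non-negativity manifest, handles all the off-diagonal interference in one stroke (the tail over $m>n$ is absorbed by completeness rather than estimated term by term), and reduces everything to the single tail bound on $\l_\ell$ already proved in Lemma \ref{lemma 3}. What the paper's route buys in exchange is the slightly stronger pointwise operator-norm bound \eqref{knorm:eq} on $\tilde g(\mathcal K_{n,\xi})$, obtained without invoking completeness of the Hermite basis; both routes incur the same harmless $n$-dependent constants and yield the identical final decay $C_\d\,\e^{-\d q\xi^2}$, after which the finiteness of ${\sf M}_{\le n}(f)$ follows in either treatment from $|\tr {\rm T}|\le\|{\rm T}\|_1$ and integrability of the Gaussian majorant.
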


Before proving this lemma we compile, for the reader's convenience, some basic properties of the Schatten--von Neumann classes, $\mathfrak S_p, 0<p<\infty$, of compact operators, see \cite{BS_book,Si}. By $s_n({\rm T})$ with $n\in\N$, we denote the singular values of a compact operator ${\rm T}$ on an abstract (separable) Hilbert space, enumerated in decreasing order. Then, the operator ${\rm T}$ is said to belong to $\mathfrak S_p$ if it has the finite \textit{Schatten--von Neumann (quasi-)norm}
\begin{align*}
\|{\rm T}\|_p := \bigg[\sum_{n=1}^\infty s_n({\rm T})^p\bigg]^{\frac{1}{p}}<\infty\,.
\end{align*}
If $p\ge 1$, then $\| \cdot \|_p$ is a norm. If $0 < p < 1$, then it is a quasi-norm which satisfies the 
\textit{$p$-triangle inequality}
\begin{align}\label{p-tri:eq}
\| {\rm T}_1+{\rm T}_2\|_p^p\le \|{\rm T}_1\|_p^p + \|{\rm T}_2\|_p^p\,.
\end{align}
The class $\mathfrak S_1$ is the standard trace class. For ${\rm T}\in\mathfrak S_1$ its trace, $\tr {\rm T}$, is well-defined and satisfies $|\tr {\rm T}|\leq \|{\rm T}\|_1 $. If ${\rm T}\ge 0$, then $\tr {\rm T}=\|{\rm T}\|_1 $. We also note that the usual (uniform) operator norm $\|\cdot\| $ may be viewed as $\| \cdot \|_p$ in the limit $p\to\infty$. Finally, we mention that $\|\cdot \|_p$ satisfies a H\"older-type inequality in the sense that
\begin{equation}\label{holder}
\|{\rm T}_1 {\rm T}_2 \|_p\leq\|{\rm T}_1\|_{p_{1}}\|{\rm T}_2\|_{p_{2}}
\end{equation}
for any $p_1, p_2 \in (0,\infty]$ with $ 1/p_1 + 1/p_2=1/p$.
\begin{proof}[Proof of Lemma \ref{mleell:lem}]
Let us now consider the operator $\mathcal K_{n, \xi} = \sum_{\ell=0}^n Q_{\ell, \xi}$ defined in \eqref{opK}, where we have put $Q_{\ell, \xi}:= |\psi_{\ell, \xi}\rangle\langle \psi_{\ell, \xi}| $ With
$\tilde{g}(t):= t(1-t)$ we then have
\begin{align*}
\tilde{g}\big(\mathcal K_{n, \xi}\big) 
= \sum_{\ell=0}^n (Q_{\ell, \xi} - Q_{\ell, \xi}^2) 
- \sum_{\ell,\ell'=0, \ell\not = \ell'}^n Q_{\ell, \xi} Q_{\ell', \xi}\,.
\end{align*}
Since each operator $Q_{\ell, \xi}$ is one-dimensional, we easily find that
\begin{align*}
\|Q_{\ell, \xi} - Q_{\ell, \xi}^2\| = \l_\ell(\xi)\big(1-\l_\ell(\xi)\big)\le C_\d \,\e^{-\d \xi^2},\ \xi\in\mathbb R\,,
\end{align*}
according to the proof of Lemma \ref{lemma 3}. Furthermore,
\begin{align*}
\| Q_{\ell, \xi} Q_{\ell', \xi}\| = |\langle \psi_{\ell, \xi}| \psi_{\ell', \xi}\rangle| \,\|\psi_{\ell, \xi}\|\,\|\psi_{\ell', \xi}\|\le |\langle \psi_{\ell, \xi}| \psi_{\ell', \xi}\rangle|\,.
\end{align*}
Consequently, for $\xi \ge 0$, we have 
\begin{align*}
\| Q_{\ell, \xi} Q_{\ell', \xi}\|\le \|\psi_{\ell, \xi}\| \|\psi_{\ell', \xi}\|\le \sqrt{\l_\ell(\xi)\l_{\ell'}(\xi)}\le C_\d \,\e^{-\d\xi^2}\,. 
\end{align*}
For the case $\xi <0$ we observe that $\langle\psi_\ell|\psi_{\ell'}\rangle = 0$, $\ell\not = \ell'$, so that  
$\langle \psi_{\ell, \xi}| \psi_{\ell', \xi}\rangle = - \langle \psi_\ell - \psi_{\ell, \xi}| \psi_{\ell'} - \psi_{\ell', \xi}\rangle$, and hence 
\begin{align*}
\| Q_{\ell, \xi} Q_{\ell', \xi}\| \le \sqrt{\l_\ell(-\xi)\l_{\ell'}(-\xi)}\le C_\d \,\e^{-\d \xi^2}\,.
\end{align*}
Collecting the above bounds we conclude that 
\begin{align}\label{knorm:eq}
\|\tilde{g}(\mathcal K_{n, \xi})\|\le C_\d  \,\e^{-\d \xi^2}\,.
\end{align} 
Using now \eqref{q:eq} we have
\begin{align*}
\|f(\mathcal K_{n, \xi}) - f(1) \mathcal K_{n, \xi}\|_1\le C\|\tilde{g}(\mathcal K_{n, \xi})\|_q^q\,.
\end{align*}
Since the operator $\mathcal K_{n, \xi}$ has finite dimension $n+1$, the right-hand side of the last inequality
is bounded from above by $C\|\tilde{g}(\mathcal K_{n, \xi})\|^q$. Therefore \eqref{knorm:eq} leads to the claimed result. 
\end{proof}

\begin{proof}[Proofs of Theorem \ref{thm:LL smooth f} and Theorem \ref{thm:ground state smooth f}]
By linearity, Lemma \ref{lem:moments} and Lemma \ref{lem:general moments} in the next section imply Theorem \ref{thm:LL smooth f} and \ref{thm:ground state smooth f}, respectively, for an arbitrary polynomial $f$ (with $f(0)=0$). So here we only need to show how to extend the claimed results \eqref{RS} and \eqref{RS:general} from polynomials to the smooth function $f$ assumed in Theorem \ref{thm:LL smooth f} and Theorem \ref{thm:ground state smooth f}. This is by now standard and several versions of this extension are available, e.g. \cite{PS, R, Sob:AMS, Widom}. Here we follow the recent one in \cite{PS}. As a by-product we get the a-priori finiteness of the left-hand side of \eqref{RS}, see \eqref{sup} and \eqref{inf}. The a-priori finiteness of the left-hand side of \eqref{RS:general} follows similarly.

Without loss of generality we may assume that $f$ is real-valued. Besides the necessary condition $f(0)=0$ we may assume that $f(1)=0$. This can be achieved by replacing $f(t)$ with $f(t) - f(1)t$. The function $f$ has the form $f= b\tilde{g}$ with $\tilde{g}(t) = t(1-t)$, from above, and with some real-valued continuous function $b$ on $[0,1]$. According to the Stone--Weierstra\ss{} approximation theorem, there exists for any given $\varepsilon>0$ a real-valued polynomial $p$ on $[0,1]$ such that $\sup_{t\in[0,1]} |p(t)-b(t)|\le \varepsilon$. Thus with $\tilde{p}:=\tilde{g} p$ we have 
\begin{align*}
\tilde{p}(t) - \varepsilon \tilde{g}(t) \le f(t) \le \tilde{p}(t) + \varepsilon \tilde{g}(t)\,, \quad t\in [0, 1]\,,
\end{align*} 
and hence
\begin{align*} 
\tr \tilde{p}(\text P_\ell(L\L)) - \varepsilon \,\tr \tilde{g}(\text P_\ell(L\L)) &\le 
\tr f(\text P_\ell(L\L)) \le  \tr \tilde{p}(\text P_\ell(L\L)) + \varepsilon \,\tr \tilde{g}(\text P_\ell(L\L))\,.
\end{align*}
Using \eqref{RS} for $\tilde p$ and $\tilde{g}$, we arrive at the bound 
\begin{align}\label{sup}
\limsup_{L\to\infty} \frac{\tr f(\text P_\ell(L\L))}{L\sqrt{B}|\p\L|} 
&\le {\sf M}_\ell(\tilde p) + \varepsilon {\sf M}_\ell(\tilde{g})\,.  
\end{align}
Since ${\sf M}_\ell(\tilde p) = {\sf M}_\ell(f) + {\sf M}_\ell(\tilde{g}(p-b))
\le {\sf M}_\ell(f) + \varepsilon{\sf M}_\ell(\tilde{g})$, the right-hand side of the above inequality does not exceed 
$ {\sf M}_\ell(f) + 2\varepsilon{\sf M}_\ell(\tilde{g})$.
Similarly,
\begin{align}\label{inf} 
\liminf_{L\to\infty} 
\frac{\tr f(\text P_\ell(L\L))}{L\sqrt{B}|\p\L|} \ge 
{\sf M}_\ell(f) - 2\varepsilon{\sf M}_\ell(\tilde{g})\,.
\end{align}
Since $\varepsilon$ is arbitrary, formula \eqref{RS} follows. 

In order to prove Theorem \ref{thm:ground state smooth f} we use the same argument as above for the operator $\text P_{\le n}(L\L)$ and the asymptotic coefficient ${\sf M}_{\le n}(f)$. 
\end{proof}

\section{Underlying asymptotic results for polynomials}

We have seen that the asymptotic results of Theorem \ref{thm:LL smooth f} and Theorem \ref{thm:ground state smooth f} rely on corresponding results for polynomials $f$. By linearity, it suffices to consider natural powers of the corresponding projections. We begin with 

\begin{lemma} \label{lem:moments} 
Let $\L\subset\R^2$ be a bounded $\mathsf C^3$-region and $m\in\N$. Then we have for any $\ell\in\N_0$
\be \label{13}
\tr{ \rm P}_\ell(L\L)^m =  L^2 B \frac{|\L|}{2\pi}\,  + L\sqrt{B}|\p\L| \, \int_\R \frac{\mathrm{d}\xi}{2\pi}\,\big[ \l_\ell(\xi)^m-\l_\ell(\xi)\big] + \mathcal{O}(1)\,,
\ee
as $L\to\infty$.
\end{lemma}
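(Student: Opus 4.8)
The plan is to start from the position-space representation of the trace,
\[
\tr \text P_\ell(L\L)^m = \int_{(\R^2)^m} \Big(\prod_{j=1}^m p_\ell(x_j,x_{j+1})\Big)\prod_{j=1}^m 1_{L\L}(x_j)\,\mathrm d x_1\cdots\mathrm d x_m\,,
\]
with the cyclic convention $x_{m+1}:=x_1$, and to disentangle its translation-invariant part from the cut-off functions. First I would note that the modulus $|p_\ell(x,y)|=\mathcal L_\ell(B\|x-y\|^2/2)\,g_2(x-y)$ depends on $x-y$ only, while the product of the phase factors collapses, by the antisymmetry of $\mathsf J$ and a shoelace-type identity, to $\exp\big(\mathrm i B\,\mathsf A(x_1,\dots,x_m)\big)$, where $\mathsf A=\tfrac12\sum_j\langle x_j|\mathsf J x_{j+1}\rangle$ is the signed area of the polygon with vertices $x_1,\dots,x_m$. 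Since $\mathsf A$ is invariant under simultaneous translations and rotations of all the $x_j$, the integrand factorizes as $A_\ell(u)\prod_j 1_{L\L}(x_1-w_j)$, where $u=(u_1,\dots,u_{m-1})$ collects the consecutive differences $u_j:=x_j-x_{j+1}$ (which satisfy $\sum_j u_j=0$), the shifts are their partial sums $w_j:=\sum_{i<j}u_i$ (so $w_1=0$ and $x_j=x_1-w_j$), and $A_\ell$ is Schwartz in $u$ because of the Gaussian factors. This is the cumbersome change of variables mentioned in the introduction.

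Next I would integrate over the centre variable $x_1$ first, producing the volume of an intersection of translates, for which I would establish a Roccaforte-type asymptotics valid for a $\mathsf C^2$-region,
\[
\Big|\bigcap_{j=1}^m (L\L+w_j)\Big| = L^2|\L| + L\int_{\p\L}\min_{1\le j\le m}\langle w_j\,|\,n(s)\rangle\,\mathrm d s + \mathcal O(1)\,,
\]
where $n(s)$ is the outward unit normal and the remainder is bounded by a polynomial in $\max_j\|w_j\|$ uniformly in $L$. The half-plane model makes the boundary term transparent: replacing $\L$ near a boundary point by a half-plane with normal $n$ turns the intersection into the half-plane displaced by $\min_j\langle w_j|n\rangle$, and integrating this displacement against arc length gives the $L$-term, while the $\mathsf C^2$ hypothesis keeps the curvature correction at $\mathcal O(1)$. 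Because $A_\ell$ decays like a Gaussian in $u$, the polynomial remainder is integrable, and splitting the $u$-integration at, say, $\|u\|\le\sqrt L$ shows that the regime where the shifts are comparable to $L$ contributes only $\mathcal O(1)$.

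Inserting this expansion and integrating over $u$ then yields two terms. The leading one is $L^2|\L|\int A_\ell(u)\,\mathrm d u = L^2|\L|\,\text P_\ell^m(x,x)=L^2 B|\L|/2\pi$, since $\text P_\ell$ is a projection, so $\text P_\ell^m=\text P_\ell$ has diagonal $B/2\pi$; this reproduces the first term of \eqref{13} and explains why it is independent of $m$. For the boundary term I would exploit that $A_\ell$ is rotation invariant, whence the inner integral $\int A_\ell(u)\min_j\langle w_j(u)|n\rangle\,\mathrm d u$ does not depend on the direction $n$; the $\p\L$-integration then simply factors out the arc length $|\p\L|$, and the coefficient reduces to one universal constant $T_\ell$, obtained by fixing $n$.

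It remains to identify $T_\ell=\sqrt B\int_\R \tfrac{\mathrm d\xi}{2\pi}\,[\l_\ell(\xi)^m-\l_\ell(\xi)]$, and this is the step I expect to be the real obstacle. The clean way to see the value is to recognize that the constant is exactly the boundary coefficient for the straight half-plane $\{x_1<0\}$: there the tangential direction is translation invariant, so a partial Fourier transform diagonalizes $\text P_\ell(\L)$ into the rank-one operators $\mathds 1_{\{x_1<0\}}|\psi_{\ell,k}\rangle\langle\psi_{\ell,k}|\mathds 1_{\{x_1<0\}}$ with eigenvalue $\l_\ell(k/\sqrt B)$, so that $\tr \text P_\ell^m$ per unit tangential length equals $\tfrac1{2\pi}\int\l_\ell(k/\sqrt B)^m\,\mathrm d k$; subtracting the bulk density $B/2\pi$ and rescaling $k=\sqrt B\,\xi$ gives $\sqrt B\int\tfrac{\mathrm d\xi}{2\pi}[\l_\ell(\xi)^m-1_{\{\xi<0\}}]$, which equals the claimed expression because $\int_\R[\l_\ell(\xi)-1_{\{\xi<0\}}]\,\mathrm d\xi=0$ by the evenness of $\psi_\ell^2$. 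To stay inside the elementary framework and avoid the tangential Fourier transform, I would instead evaluate $T_\ell$ directly from $A_\ell$: the $\min_j$ weight is treated by ordering the normal components $\langle w_j|n\rangle$, and the Gaussian integrations in the remaining relative variables collapse the product of Laguerre factors and the area phase onto the one-dimensional half-line trace $\tr\big(|\psi_{\ell,\xi}\rangle\langle\psi_{\ell,\xi}|\big)^m=\l_\ell(\xi)^m$. Controlling this last collapse, together with the uniform $\mathcal O(1)$ remainder, is the technical heart of the argument; the case $m=1$ serves as a consistency check, where $\l_\ell^m-\l_\ell\equiv0$ and \eqref{13} reduces to the exact identity $\tr\text P_\ell(L\L)=B\,|L\L|/2\pi$.
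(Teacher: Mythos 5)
Your proposal is correct in outline and, up to the evaluation of the boundary coefficient, runs parallel to the paper's proof: your factorization of the cyclic kernel product into a translation- and rotation-invariant factor $A_\ell(u)$ times the cut-off functions is the paper's \eqref{equ 5}--\eqref{helpf}, your intersection-volume expansion is the paper's application of Roccaforte (Proposition \ref{Roccaforte}, used in \eqref{20}), and your rotation-invariance argument for factoring out $|\p\L|$ is what the paper implements through the local coordinates \eqref{zt coordinates}. The genuine divergence is in identifying the universal constant. The paper evaluates it head-on and stays ``elementary'': it partitions the $\mathbf t$-integration into Roccaforte's sets $\mathcal S_q$, performs the cumbersome change of variables \eqref{exponent}, collapses the Laguerre factors via the identity \eqref{Hermite identity} into products of Hermite functions, and integrates by parts to reach $\int \frac{\mathrm d\xi}{2\pi}[\l_\ell(\xi)^m-\l_\ell(\xi)]$. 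Your half-plane route replaces all of this by separation of variables in the tangential direction, producing rank-one fibers with nonzero eigenvalue $\l_\ell$, so that $\tr K^m=\l_\ell^m$ is immediate. That is considerably shorter, and it also explains \emph{why} truncated Hermite functions (and, for Lemma \ref{lem:general moments}, the operators $\mathcal K_{n,\xi}$ of \eqref{opK}) appear at all, which in the paper emerges only after the Laguerre--Hermite and Christoffel--Darboux manipulations; what the paper's computation buys is independence from any diagonalization and a form of the boundary term that feeds directly into the sub-leading ($\mathcal O(1)$ versus $o(1)$) analysis of Section 5.

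Two assertions in your sketch still need short proofs, both fillable. First, that the Roccaforte constant ``is exactly the boundary coefficient for the straight half-plane'' is not a mere recognition, but it does follow from machinery you already set up: for $H=\{x_1<0\}$ the intersection identity is exact, $\int_\R \mathrm dx_1\,\bigl[\prod_j 1_{(-\infty,0)}(x_1-a_j)-1_{(-\infty,0)}(x_1)\bigr]=\min_j a_j$ with $a_j=\langle w_j|e_1\rangle$, so per unit tangential length $\tr\bigl[{\rm P}_\ell(H)^m-{\rm P}_\ell(H)\bigr]=\int \mathrm du\, A_\ell(u)\min_j\langle w_j(u)|e_1\rangle$, which is precisely your $T_\ell$; comparing with the fiber computation then gives the claimed value. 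Subtracting the $m=1$ term here, rather than the bulk indicator $1_{\{\xi<0\}}$, also sidesteps the conditionally convergent renormalization in your sketch (your identity $\int[\l_\ell-1_{\{\xi<0\}}]\,\mathrm d\xi=0$ reconciles the two). Second, the tangential diagonalization lives in the Landau gauge, whereas the kernel \eqref{Landau_kernel:eq} is in the symmetric gauge; you need the one-line remark that a gauge transformation is conjugation by a unimodular multiplication operator, which commutes with $\mathds{1}_{\L}$, so $\tr {\rm P}_\ell(\L)^m$ is gauge-independent (equivalently: your $A_\ell$ depends only on the moduli and the signed polygon area, both gauge-invariant). Finally, both you and the paper need the Roccaforte remainder to be integrable against the Gaussian weight uniformly in $L$; your splitting at $\|u\|\le\sqrt L$ at least flags this point, which the paper leaves implicit.
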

Here, as usual, $\mathcal{O}(1)\equiv \mathcal{O}(L^0)$ stands for some function of $L$ with $\limsup_{L\rightarrow\infty}|\mathcal{O}(1)|<\infty$.

\begin{proof} 

At first we note that $\mathds{1}_\L \text P_\ell$  is a Hilbert--Schmidt operator on $\text L^2(\R^2)$, equivalently $\| \mathds{1}_\L \text P_\ell\|_2<\infty$, because its integral kernel $1_\L(x) p_\ell(x,y)$ is square-integrable. Therefore $\text P_\ell(\L) = (\mathds{1}_\L \text P_\ell) (\text P_\ell\mathds{1}_\L)$ is a trace-class operator and so are its natural powers, due to $ \|\text P_\ell(\L)\| \le 1$. The trace of the power $\text P_\ell(\L)^m$ can be calculated by integrating the diagonal of its integral kernel, that is,
 
\be\label{diagonal}
\tr \text P_\ell(\L)^m = \int_{\R^2} \text dx\,\text P_\ell(\L)^m(x,x) \,.
\ee
This follows from the continuity of $\text P_\ell(\L)^m (x,y)$ as a function of $(x,y) \in \L\times\L$ which, in turn, follows from the m-fold iteration of $\text P_\ell(\L)(x,y)= p_\ell(x,y) 1_\L(x)1_\L(y) $ and the dominated-convergence theorem. If $x\not\in\L$ or $y\not\in\L$, then $\text P_\ell(\L)^m (x,y)=0$. We proceed with \eqref{diagonal}. Since \eqref{13} is now seen to be true for $m=1$, we assume from now on $m\ge2$. Then the diagonal $\text P_\ell(\L)^m(x,x)$ is given by
\be\label{itkernel} 
1_\L(x)\,\int_{\R^{2(m-1)}} \text dx_1\cdots \text dx_{m-1}\, p_\ell(x,x_1) p_\ell(x_1,x_2)\cdots p_\ell(x_{m-2},x_{m-1})p_\ell(x_{m-1},x)\, 1_\L(x_1) \cdots 1_\L(x_{m-1}) \,.
\ee
It is convenient to change to new integration variables $\mathbf y := (y_1,\ldots,y_{m-1})$ defined by $y_1:=x-x_1,y_2:=x_1-x_2,\ldots, y_{m-1}:=x_{m-2}-x_{m-1}$. Furthermore, we set $y_m := y_1+\cdots+y_{m-1}$. Then, $x_1 = x-y_1,x_2 = x-y_1-y_2,\ldots,x_{m-1} = x-y_m$ and
\beax \langle x|\mathsf{J} x_1\rangle &=& -\,\langle x|\mathsf{J} y_1\rangle 
\\
\langle x_1|\mathsf{J} x_2\rangle &=& -\,\langle x-y_1|\mathsf{J} y_2\rangle
\\
\langle x_2|\mathsf{J} x_3\rangle &=& -\,\langle x-y_1-y_2|\mathsf{J} y_3\rangle
\\
\vdots& &\vdots
\\
\langle x_{m-2}|\mathsf{J} x_{m-1}\rangle &=& -\,\langle x-y_1-\cdots-y_{m-2}|\mathsf{J} y_{m-1}\rangle
\\
\langle x_{m-1}|\mathsf{J} x\rangle &=& -\,\langle y_m|\mathsf{J} x\rangle\,.
\eeax
With $x_0:=x_m:=x$ we therefore have
\be \label{equ 5}
\sum_{i=0}^{m-1} \langle x_i|\mathsf{J} x_{i+1}\rangle = \sum_{i=1}^{m-2}\langle \sum_{j=1}^i y_j|\mathsf{J}y_{i+1}\rangle\,.
\ee
If $m=2$ then the left-hand side of  \eqref{equ 5} is zero and its right-hand side is meant to be 0. 
By combining \eqref{diagonal}, \eqref{itkernel}, \eqref{Landau_kernel:eq}, and 
\eqref{equ 5} the trace of $\text P_\ell(\L)^m$ can now be written as
\be\label{moment}
\tr \text P_\ell(\L)^m =\int_{\R^{2(m-1)}} 
\text d\mathbf y\,\mathfrak f_m (\mathbf y)\int_{\R^2} \text dx\, 1_\L(x) 1_\L(x-y_1) \cdots 1_\L(x-y_{m})\,,
\ee
with the function $\mathfrak f_m$ defined by
\be\label{helpf}
\mathfrak f_m(\mathbf y) := \Big[\prod_{j=1}^{m} \mathcal{L}_\ell(B\|y_j\|^2/2) g_2(y_j)\Big] \, \exp\Big[\mathrm{i}\mfr{B}{2} \sum_{i=1}^{m-2}\langle \sum_{j=1}^i y_j|\mathsf{J}y_{i+1}\rangle \Big]\,.
\ee
Now we insert the scaling parameter $L>0$ and apply Roccaforte's asymptotic expansion of Appendix \ref{App:R} up to the first order
\begin{align}
\int_{\R^2} \text dx\, &1_{L\L}(x) 1_{L\L}(x-y_1) \cdots 1_{L\L}(x-y_m) \notag
\\
&=\big|L\L\cap(y_1+L\L)\cap(y_1+y_2+L\L)\cap\cdots\cap(y_{m}+L\L)\big|
\notag\\ 
&=|L\L| - \big|L\L \setminus \big(L\L\cap(y_1+L\L)\cap(y_1+y_2+L\L)\cap\cdots\cap(y_{m}+L\L)\big)\big|
\notag\\
&=\label{20} L^2|\L| - L \int_{\p\L} \text dA(x) \, \max\big\{0,\langle y_1|n_x\rangle,\langle y_1+y_2|n_x\rangle,\ldots, \langle y_{m} |n_x\rangle\big\}+W\bigg(\sum_{j=1}^m \|y_j\|^2 \bigg)\,,
\end{align}
as $L\to\infty$. Here, $A$ is the canonical arc-length measure on $\p\L$, $n_x$ is the inward unit normal vector at the point $x\in\partial\L$, and $W$ is a function such that $W(s)\le C s$, $s\ge0$. We scale $\mathbf y$ by $B^{-1/2}$ and $\L$ by $B^{1/2}$. Then we can set from now on $B=1$ in the function $\mathfrak f_m$ and replace $L\L$ by $\sqrt{L^2 B}\L$. The parameter that tends to infinity in our asymptotic analysis is thus effectively $\sqrt{L^2B}$. 

For a given point on the boundary curve, $x\in\p\L$, we decompose each vector $y_i\in\R^2$ into a component parallel and a component perpendicular to the tangent (line) $\text T_x(\partial\L)\cong \R$ at $x\in\p\L$
according to
\be \label{zt coordinates}
y_i = -z_i \mathsf{J} n_x +  t_i n_x\,,\quad i=1,\ldots,m-1\,,
\ee 
with the real numbers $t_i:=\langle y_i|n_x\rangle$ and $z_i:= -\langle y_i|\mathsf{J} n_x \rangle$ so that $\|y_i\|^2 = z_i^2 + t_i^2$. Then we get
\be\label{zt1}
\mathcal{L}_\ell(\|y_i\|^2/2)g_2(y_i) = \mathcal{L}_\ell((z_i^2 + t_i^2)/2)g(z_i) g(t_i)
\ee
and
\be\label{zt2} 
\sum_{i=1}^{m-2}\langle \sum_{j=1}^i y_j|\mathsf{J}y_{i+1}\rangle = \sum_{i=1}^{m-1} z_i \sum_{j=1}^{m-1} \mathsf{S}_{ij} t_j = \langle \mathbf z|\mathsf{S} \mathbf t\rangle\, ,
\ee
to be used on the right-hand side of \eqref{helpf}. Here, $\mathbf z := (z_1,\ldots,z_{m-1})$ and $\mathbf t := (t_1,\ldots,t_{m-1})$. Moreover, $\mathsf{S}$ is the $(m-1)\times(m-1)$ matrix with entries
\be \mathsf{S}_{ij} := \left\{\begin{array}{cll}-1&\mbox{ if }&i<j\\0&\mbox{ if }&i=j\\1&\mbox{ if }&i>j\end{array}\right.\,.
\ee
By setting $t_0:=0$ the maximum in \eqref{20} can now be written as follows
\be\label{zt3}
\max\big\{0,\langle y_1|n_x\rangle,\langle y_1+y_2|n_x\rangle,\ldots, \langle y_1+\cdots+ y_{m-1}|n_x\rangle\big\}= \max_{0\leq q\leq  m-1}\sum_{r=0}^{q} t_r =: M(\mathbf t)\,.
\ee
Let us now introduce new variables $(T_1,\ldots,T_{m-1})$ by the sums
\be \label{def T}
T_i:= \sum_{j=1}^{m-1} \mathsf{S}_{ij} t_j\,.
\ee
We also define $T_m:=0$, $t_m:=t_1+\cdots + t_{m-1}$, and $z_m := z_1+\cdots+z_{m-1}$.

The change \eqref{zt coordinates} from the (global) variables $y_i$ to the $x$-dependent (local) variables $(z_i,t_i)$ corresponds to a translation and a rotation of the coordinate system. This implies that $\text dy_i = \text dt_i \text dz_i$ which is shorthand for the underlying invariance of the multi-dimensional Lebesgue measure. Once the integration with respect to all the variables $z_i$ and $ t_i$ is done, the result will turn out to be independent of $x\in\p\L$ and the remaining integration with respect to $x$ along the boundary curve $\p\L$ simply yields the factor $L\sqrt{B}|\p\L|$.

By combining \eqref{moment}, \eqref{helpf}, \eqref{20}, \eqref{zt1}, \eqref{zt2}, and \eqref{zt3}, and by referring to the Fubini--Tonelli theorem we get for the time being
\be\label{moment2}
\tr \text P_\ell(\L)^m= \int_{\R^{m-1}} \text d\mathbf t \,g(t_1)\cdots g(t_{m-1}) g(t_m) I_m(\mathbf t)\Big(L^2 B |\L| -L\sqrt{B}\int_{\p\L} \text dA(x)\,M(\mathbf t)\Big) + \mathcal{O}(1)\,,
\ee
with
\be \label{def I}
I_m(\mathbf t) := \int_{\R^{m-1}} \text d\mathbf z\, \prod_{j=1}^{m} g_{T_j,t_j}(z_j) 
\ee
and
\be 
g_{T,t}(z) := \mathcal{L}_\ell((z^2+t^2)/2) \,g(z) \, \exp(\mfr{\mathrm{i}}{2}Tz)\,,\quad     T, t, z \in \R\,.
\ee
[When it comes to integration we do not switch from the $t$-variables to the $T$-variables; moreover, we note that $\det\mathsf{S}=0$, resp. $=1$ if $m$ is even, resp. odd.]
The integral $I_m(\mathbf t)$ can be viewed as the $m$-fold convolution product $g_{T_1,t_1}\ast\cdots\ast g_{T_m,t_m}$ evaluated at $0$. This suggests to introduce the (inverse) Fourier transform
\begin{align}&\widecheck{g}_{T,t}(\xi) := \frac1{\sqrt{2\pi}} \,\int_\R \text d\omega\, g_{T,t}(\omega) \, \exp(\mathrm{i}\omega\xi) 
\notag\\
&=\frac{1}{2\pi}\, \int_\R \text d\omega\, \mathcal{L}_\ell((\omega^2+t^2)/2)\exp\big(-\omega^2/4 +\mathrm{i} T\omega/2 + \mathrm{i} \omega\xi\big)
\\
&=\frac{1}{2\pi}\,\exp[-(\xi+T/2)^2] \, \int_\R \text d\omega\, \mathcal{L}_\ell\big((\omega+\mathrm i(T+2\xi))^2+t^2)/2\big)\,\exp(- \omega^2/4)\,.
\end{align}
If $\ell=0$, then this integral can be calculated explicitly. But even then it turns out to be more convenient 
not to perform this integration at this point.

Therefore, the $(m-1)$-fold integral \eqref{def I} can be rewritten as an integral over the real line according to
\be \label{U-turn} 
I_m(\mathbf t) = (2\pi)^{m/2} \,\int_\R \frac{\text d\xi}{2\pi} \, \prod_{j=1}^m \widecheck{g}_{T_j,t_j}(\xi)\,,
\ee
and the term of the sub-leading order $L$ in \eqref{moment2} becomes equal to (using the notation $\boldsymbol{\omega}:=(\omega_1,\ldots,\omega_m)$)
\begin{align}\label{term}
-&L\sqrt{B}|\p\L| \int_{\R^{m-1}} \text d\mathbf t\, M(\mathbf t)g(t_1)\cdots g(t_{m-1}) g(t_m) \, I_m(\mathbf t)\notag
\\
&=-L\sqrt{B}|\p\L|(2\pi)^{-m/2} \int_{\R^{m}} \text d\boldsymbol{\omega} \int_\R\frac{\text d\xi}{2\pi} \int_{\R^{m-1}} \text d\mathbf t\, M(\mathbf t)\notag
\\
&\phantom{-L\sqrt{B}}\times\prod_{j=1}^m g(t_j) \exp\big(-(\xi+T_j/2)^2\big)  \mathcal{L}_\ell\big((\omega_j+\mathrm i(T_j+2\xi))^2+t_j^2)/2\big) \exp(-\omega_j^2/4)\,.
\end{align}
Following Roccaforte \cite{R} we now introduce $m(\geq 2)$ subsets $ \mathcal S, \mathcal S_1,\ldots,\mathcal S_{(m-1)}$ of $\R^{m-1}$ by
\be \mathcal S:=\big\{\mathbf t\in\R^{m-1} : M(\mathbf t)>0\big\}
\ee
and
\be \mathcal S_q := \Big\{\mathbf t\in\R^{m-1} : \sum_{r=s}^{q} t_r >0 \mbox{ for } 1\le s\le q \mbox{ and } \sum_{r=q+1}^{q+p} t_r<0 \mbox{ for } 1\le p\le m-1-q\Big\}\,
\ee
for $1\le q\le m-1$.
The sets $\mathcal S_q$ are pairwise disjoint and make up all of $\mathcal S$ in the sense that $\mathcal S = \bigcup_{q=1}^{m-1} \mathcal S_q$, up to (hyperplane) sets of $(m-1)$-dimensional Lebesgue measure zero. In fact, $\mathbf t \in\mathcal S_q $ implies that $ M({\mathbf t})=\sum_{r=1}^{q} t_{r} >0$. And the conditions for $2\le s\le q$ and for $1\le p\le m-1-q$ ensure that the sets $\mathcal S_q$ are indeed disjoint. Following again Roccaforte \cite{R} we introduce variables $\boldsymbol \tau :=(\tau_1,\ldots,\tau_{m-1})$ adapted to the just introduced sets. We define
\be\label{tau1} 
\tau_s := \sum_{r=s}^{q} t_r\, 
\ee
for $1\le s\le q$ and

\be \label{tau2}
\tau_{q+p} := \sum_{r=q+1}^{q+p} t_r\,
\ee
for $1\le p\le m-1-q$.

On the set $\mathcal S_q$  we have $\boldsymbol \tau' = \mathsf{A}^{(q)}\mathbf t' $. Here $\boldsymbol \tau'$ denotes the column tuple as the transpose of the (row) tuple $\boldsymbol\tau$ and similarly for $\mathbf t'$. And the $(m-1)\times(m-1)$ matrix $\mathsf{A}^{(q)}$ is defined in terms of its entries
\be \label{matrix A}
\mathsf{A}^{(q)}(i,j) := \left\{\begin{array}{ll}1&\mbox{ if } 1\le i\le j\le q \\1&\mbox{ if } q+1\le j\le i\\0&\mbox{ otherwise }\end{array}\right.\,.
\ee
Then we have $\det \mathsf{A}^{(q)} = 1$ and on $\mathcal S_q$ the comforting identity
\[ M(\mathbf t) = \tau_1 1_+(\tau_1) \cdots 1_+(\tau_q) 1_-(\tau_{q+1})\cdots 1_-(\tau_{m-1})\,, \quad  \mathbf t \in\mathcal S_q\,,
\] 
using the abbreviations $1_\pm$ for the indicator functions on the real line $\R$ for its two half-lines $\R_{\pm}$.

Now we consider the joint integration with respect to the $m$ variables $\xi$ and $\boldsymbol \tau$ and apply the following changes of variables. Firstly, we change $\tau_{q+1},\ldots,\tau_{m-1}$ to $-\tau_{q+1},\ldots,-\tau_{m-1}$. Clearly, the $\boldsymbol \tau$ integral is now over $\R_+^{m-1}$. Secondly, we replace $\xi$ by $\xi-(\tau_1+\tau_{m-1})/2$, and thirdly we replace $\xi$ by $-\xi$. The negative of the argument in the product of exponentials in \eqref{term} then changes according to
\be \label{exponent} m\xi^2 + \xi \sum_{i=1}^m T_i + \mfr14 \sum_{i=1}^m T_i^2 + \mfr14 \sum_{i=1}^m t_i^2 \rightsquigarrow \xi^2 + (\xi+\tau_1)^2 + \cdots+(\xi+\tau_{m-1})^2\,. 
\ee
Here and in the following we are using the notation $\rightsquigarrow$ to present the results of changes of variables efficiently, without the explicit introduction of the underlying mappings. We prove \eqref{exponent} in Appendix \ref{Misc}. The main advantage of the quadratic form \eqref{exponent} over that in the $\mathbf t$-variables is that there are no mixed terms between the $\tau$'s and the exponential can be factorized. The resulting term does not depend on $q$. This turns out to remain true with the Laguerre polynomials included as we will see next.

We perform the same changes of variables in the arguments of the Laguerre polynomials. For instance, if $q=1$, then
\begin{align*} \big(\omega+\mathrm{i}(2\xi +T_1)\big)^2 +t_1^2 &= \big(\omega+\mathrm{i}(2\xi -\tau_{m-1})\big)^2 + \tau_1^2
\\
&\rightsquigarrow \big(\omega-\mathrm{i}(2\xi + \tau_{1})\big)^2 + \tau_1^2
\\
&=\omega^2 -2\mathrm{i}\omega(2\xi+\tau_1) - (2\xi)^2 - 4\xi\tau_1\,.
\end{align*} 
Next we change $\tau_1$ to $\tau_1-\xi$ so that the last expression equals 
\[\omega^2 - 2\mathrm{i}\omega(\xi+\tau_1) - 4\xi\tau_1 = (\omega-2\mathrm{i}\xi)(\omega-2\mathrm{i}\tau_1)\,.
\]
Similarly,
\begin{align*} \big(\omega +\mathrm{i}(2\xi +T_2)\big)^2 +t_2^2 &\rightsquigarrow (\omega-2\mathrm{i}\xi)(\omega-2\mathrm{i}\tau_2)\,,
\\
\big(\omega+\mathrm{i}(2\xi +T_3)\big)^2 +t_3^2 &\rightsquigarrow (\omega-2\mathrm{i}\tau_2)(\omega-2\mathrm{i}\tau_3)\,,
\\
\big(\omega+\mathrm{i}(2\xi +T_4)\big)^2 +t_4^2 &\rightsquigarrow (\omega-2\mathrm{i}\tau_3)(\omega-2\mathrm{i}\tau_4)\,,
\\
\vdots &\phantom{\rightsquigarrow}\vdots
\\
\big(\omega+\mathrm{i}(2\xi +T_{m-1})\big)^2 +t_{m-1}^2 &\rightsquigarrow (\omega-2\mathrm{i}\tau_{m-2})(\omega-2\mathrm{i}\tau_{m-1})\,,
\\
\big(\omega+\mathrm{i}(2\xi +T_m)\big)^2 +t_m^2 &\rightsquigarrow (\omega-2\mathrm{i}\tau_{m-1})(\omega-2\mathrm{i}\tau_{1})\,.
\end{align*}
For general $q\ge 2$, see Appendix \ref{appendix B.2}. In the end, the product of the Laguerre polynomials equals
\begin{align} \label{product Laguerre} &\prod_{1\le j\le q-1}  \mathcal{L}_\ell\big((\omega_j-2\mathrm{i}\tau_j)(\omega_j-2\mathrm{i}\tau_{j+1})/2\big) \prod_{q+2\le j\le m} \mathcal{L}_\ell\big((\omega_j-2\mathrm{i}\tau_{j-1})(\omega_j-2\mathrm{i}\tau_{j})/2\big)
\\
&\times\,\mathcal{L}_\ell\big((\omega_q-2\mathrm{i}\xi)(\omega_q-2\mathrm{i}\tau_{q})/2\big) \mathcal{L}_\ell\big((\omega_{q+1}-2\mathrm{i}\xi)(\omega_{q+1}-2\mathrm{i}\tau_{q+1})/2\big) \mathcal{L}_\ell\big((\omega_q-2\mathrm{i}\tau_1)(\omega_q-2\mathrm{i}\tau_{m-1})/2\big)\,.\notag
\end{align}
The following remarkable identity will be proved in Appendix \ref{remarkable},
\begin{align} \frac{1}{\sqrt{2\pi}} &\int_\R \text d\omega \, \mathcal{L}_\ell\big((\omega-2\mathrm{i}\xi)(\omega-2\mathrm{i}\tau)/2\big) \exp(-\omega^2/4) = \sqrt{2} (2^\ell \ell!)^{-1} H_\ell(\xi) H_\ell(\tau) \label{Hermite identity}\,.
\end{align}
After performing the $m$-fold integration with respect to $\boldsymbol{\omega}$ we obtain
\begin{align*}
2^{m/2}(2^\ell \ell!)^{-1} H_\ell^2(\xi) (2^\ell \ell!)^{-1} H_\ell^2(\tau_1) \cdots (2^\ell \ell!)^{-1} H_\ell^2(\tau_{m-1})\,.
\end{align*}

To summarize, the boundary-curve term of the order $L$ equals $-L\sqrt{B} |\partial \Lambda|/(2\pi)$ times

\begin{align*} (m-1)(\sqrt{\pi} 2^{\ell} \ell!)^{-1}\int_\R &\text d\xi\,H_\ell^2(\xi) \exp(-\xi^2) \int_\xi^\infty \text d\tau_1 (\tau_1-\xi) H_\ell^2(\tau_1) \exp(-\tau_1^2) \l_\ell(\xi)^{m-2}
\\
&=-\int_\R \text d\xi\, (m-1) \l_\ell'(\xi) \l(\xi)^{m-2} \,\int_\xi^\infty \text d\tau (\tau-\xi) H_\ell^2(\tau) \exp(-\tau^2) 
\\
&=-\int_\R \text d\xi\, \frac{\text d}{\text d\xi} (\l_\ell(\xi)^{m-1} - 1) \,\int_\xi^\infty \text d\tau (\tau-\xi) H_\ell^2(\tau) \exp(-\tau^2) 
\\
&=-\left[(\l_\ell(\xi)^{m-1} - 1)\int_\xi^\infty \text d\tau (\tau-\xi) H_\ell^2(\tau) \exp(-\tau^2) \right]_{-\infty}^\infty 
\\
&+\int_\R \text d\xi\, (\l_\ell(\xi)^{m-1} - 1) \frac{\text d}{\text d\xi} \int_\xi^\infty \text d\tau (\tau-\xi) H_\ell^2(\tau) \exp(-\tau^2)
\\
&=-\int_\R \text d\xi\, (\l_\ell(\xi)^{m-1} - 1) \l_\ell(\xi)\,.
\end{align*}

\medskip

Finally, we turn to the leading area term of the order $L^2$ in \eqref{moment2},
$$ L^2 B |\L|\,\int_{\R^{m-1}}\text d\mathbf t\, g(t_1)\cdots g(t_{m-1}) g(t_m) \, I_m(\mathbf t) \,.
$$
Here, we use \eqref{U-turn} for $I_m(\mathbf t)$ and switch to the variables $\tau_1,\ldots,\tau_{m-1}$ from \eqref{tau1} and \eqref{tau2} for $q=1$, in all of $\R^{m-1}$. We perform the same shifts in $\xi$ and in the $\tau$'s. Then the area term turns into 
\begin{align*} 
L^2 B \frac{|\Lambda|}{2\pi} \, 
\left(\int_\R \text d\xi \,(\sqrt{\pi} 2^\ell \ell!)^{-1} 
H_\ell^2(\xi) \exp(-\xi^2)\right)^m = L^2 B \frac{|\Lambda|}{2\pi}
\end{align*}
by the normalization of the Hermite functions. 
Alternatively, the leading term can be obtained 
by replacing the $x$-integral in \eqref{moment} by $|\Lambda|$. 
The remaining $\mathbf{y}$-integration yields $B/(2\pi)$. This finishes the proof of Lemma \ref{lem:moments}.
\end{proof}

The next lemma provides the basis for the proof of Theorem \ref{thm:ground state smooth f}.

\begin{lemma} \label{lem:general moments} Under the same assumptions as in Lemma \ref{lem:moments} we have for any $n\in\N_0$
\be \tr {\rm P}_{\le n}(L\L)^m = L^2 B \,\frac{|\L|}{2\pi}\,(n+1) + L\sqrt{B}|\p\L| \, \int_\R \frac{{\mathrm d} \xi}{2\pi}\,[\tr\mathcal K_{n,\xi}^m - \,\tr \mathcal K_{n,\xi}] + \mathcal{O}(1)\,,
\ee
as $L\to\infty$. 
\end{lemma}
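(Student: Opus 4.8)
The plan is to rerun the proof of Lemma~\ref{lem:moments} almost verbatim, replacing the single Laguerre polynomial $\mathcal L_\ell$ by the sum $\mathcal L_{\le n}=\sum_{\ell=0}^n\mathcal L_\ell$ wherever it enters the kernel of $\text P_{\le n}(\L)$. The point is that none of the steps from the diagonal formula \eqref{diagonal} down to the $q$-summation uses the specific form of the Laguerre factor: the difference variables $\mathbf y$, Roccaforte's expansion \eqref{20}, the passage to the one-dimensional $\xi$-integral \eqref{U-turn}, and above all the change of variables producing the $q$-independent quadratic form \eqref{exponent}, are untouched. Hence \eqref{term}, the simplifications of the Laguerre arguments, and the product \eqref{product Laguerre} all persist with every $\mathcal L_\ell$ replaced by $\mathcal L_{\le n}$, and the leading term of order $L^2$ is computed exactly as before, giving $L^2 B\,\tfrac{|\L|}{2\pi}(n+1)$.

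First I would expand $\prod_{j=1}^m\mathcal L_{\le n}=\sum_{\ell_1,\dots,\ell_m=0}^n\prod_{j=1}^m\mathcal L_{\ell_j}$, attaching an independent index $\ell_j\in\{0,\dots,n\}$ to each of the $m$ factors, i.e.\ to each edge of the cyclic graph on the $m$ nodes $\xi,\tau_1,\dots,\tau_{m-1}$ of \eqref{product Laguerre}. Since the Hermite identity \eqref{Hermite identity} is linear in the map $\mathcal L_\ell\mapsto H_\ell\otimes H_\ell$, I can apply it to each $\omega_j$-integral with its own index $\ell_j$, so that every node $v$ receives a factor $H_\ell(v)H_{\ell'}(v)\e^{-v^2}$ (normalized as in \eqref{hermite fct}) whose two indices $\ell,\ell'$ are those of the edges meeting at $v$; unlike in the single-level case these need not coincide. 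With the symmetric $(n+1)\times(n+1)$ matrix $\mathsf K(\xi)$ defined by $\mathsf K_{\ell\ell'}(\xi):=\langle\psi_{\ell,\xi}|\psi_{\ell',\xi}\rangle=\int_\xi^\infty\psi_\ell\psi_{\ell'}$, expanding $\mathcal K_{n,\xi}^m$ from \eqref{opK} and taking the trace gives the cyclic product $\sum_{\ell_1,\dots,\ell_m}\mathsf K_{\ell_1\ell_2}\cdots\mathsf K_{\ell_m\ell_1}$, so that $\tr\mathcal K_{n,\xi}^m=\tr\mathsf K(\xi)^m$ and likewise $\tr\mathcal K_{n,\xi}=\tr\mathsf K(\xi)$.

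It remains to insert the weight $M(\mathbf t)=\tau_1$ and integrate in $\xi$. After the $\omega$- and $\tau$-integrations the distinguished node $\tau_1$ carries the weighted matrix $\mathsf W(\xi)$, $\mathsf W_{\ell\ell'}(\xi):=\int_\xi^\infty(\tau-\xi)\psi_\ell\psi_{\ell'}\,\mathrm d\tau$, the node $\xi$ contributes the rank-one matrix $-\mathsf K'(\xi)$ (because $\tfrac{\mathrm d}{\mathrm d\xi}\mathsf K_{\ell\ell'}(\xi)=-\psi_\ell(\xi)\psi_{\ell'}(\xi)$), and the remaining nodes each contribute $\mathsf K(\xi)$; one also has $\mathsf W'=-\mathsf K$. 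For fixed $q$ the cyclic ordering puts $\mathsf K^{q-1}$ between $\mathsf W$ and $-\mathsf K'$, so the contribution is $-\tr\big(\mathsf W\mathsf K^{q-1}\mathsf K'\mathsf K^{m-1-q}\big)$. Whereas in Lemma~\ref{lem:moments} the sum over $q=1,\dots,m-1$ was the trivial multiplicity $(m-1)$, here it rebuilds the product-rule derivative: using $\tfrac{\mathrm d}{\mathrm d\xi}\mathsf K^{m-1}=\sum_{q=1}^{m-1}\mathsf K^{q-1}\mathsf K'\mathsf K^{m-1-q}$ and $\mathsf W'=-\mathsf K$ one gets $\sum_{q=1}^{m-1}\tr\big(\mathsf W\mathsf K^{q-1}\mathsf K'\mathsf K^{m-1-q}\big)=\tfrac{\mathrm d}{\mathrm d\xi}\tr\big(\mathsf W(\mathsf K^{m-1}-\mathds 1)\big)+\tr\mathsf K^m-\tr\mathsf K$. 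Integrating over $\xi\in\R$ the total-derivative term drops out --- at $+\infty$ since $\mathsf W\to0$, and at $-\infty$ since $\mathds 1-\mathsf K_{\ell\ell'}=\int_{-\infty}^\xi\psi_\ell\psi_{\ell'}$ has Gaussian decay that beats the linear growth of $\mathsf W$ --- which is the exact matrix analogue of the regularizing ``$-1$'' in $\l_\ell^{m-1}-1$ in Lemma~\ref{lem:moments}. What remains is $-\int_\R(\tr\mathsf K^m-\tr\mathsf K)\,\mathrm d\xi$; combined with the prefactor $-L\sqrt B|\p\L|/(2\pi)$ and the normalizations from \eqref{term}, and with the identities of the previous paragraph, this is exactly the claimed boundary term $L\sqrt B|\p\L|\int_\R\tfrac{\mathrm d\xi}{2\pi}[\tr\mathcal K_{n,\xi}^m-\tr\mathcal K_{n,\xi}]$.

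The hard part is precisely this matrix-valued integration by parts. Because $\mathsf K(\xi)$ and $\mathsf W(\xi)$ do not commute, the single-level device of differentiating a single power $\l_\ell^{m-1}$ is unavailable; one must keep the noncommutative orderings $\tr(\mathsf W\mathsf K^{q-1}\mathsf K'\mathsf K^{m-1-q})$ and let the summation over $q$ reconstruct $\tfrac{\mathrm d}{\mathrm d\xi}\mathsf K^{m-1}$, while choosing the regularization $\mathsf K^{m-1}-\mathds 1$ so that both boundary values vanish. The two points needing care are the cyclic bookkeeping that identifies the per-tuple product with $\tr\mathsf K^m$ (hence with $\tr\mathcal K_{n,\xi}^m$), and the decay estimate that kills the boundary term at $\xi\to-\infty$ despite the linear divergence of $\mathsf W$.
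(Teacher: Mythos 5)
Your proposal is correct and follows essentially the same route as the paper: expand the product of $\mathcal{L}_{\le n}$'s into multi-indexed tuples, apply the Hermite identity \eqref{Hermite identity} factor by factor, let the sum over $q$ reconstruct the $\xi$-derivative of the product of overlap factors via the product rule, regularize by subtracting the $\xi\to-\infty$ value so that both boundary terms vanish, and integrate by parts. The only (cosmetic) difference is that you identify the resulting cyclic sums with $\tr\mathcal K_{n,\xi}^m$ directly through the Gram matrix $\mathsf K(\xi)$ of the truncated Hermite functions, whereas the paper reaches the same identification via the Christoffel--Darboux formula and the explicit kernel \eqref{op: K}.
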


\begin{proof} By the same arguments as in the beginning of 
the proof of Lemma \ref{lem:moments}, 
the projection $\text P_{\le n}(\L)^m$ is 
a trace-class operator and its trace can be calculated by 
integrating the diagonal of the $m$-fold iterated integral 
kernel of $\text P_{\le n}(\L)$. Again, the case $m=1$ 
is then obvious and we only need to consider the case $m\ge 2$. We recall that $\text P_{\le n}(x,y) = \sum_{0\le \ell\le n}p_\ell(x,y)$. So in the proof of Lemma \ref{lem:moments} we simply have to replace $\mathcal{L}_\ell$ with $\mathcal{L}_{\le n} = \sum_{0\le \ell\le n}\mathcal{L}_\ell = \mathcal{L}_n^{(1)}$. For instance, expression \eqref{product Laguerre} is replaced with the expression
\begin{align*} \prod_{1\le j\le q-1} & \mathcal{L}_{\le n}\big((\omega_j-2\mathrm{i}\tau_j)(\omega_j-2\mathrm{i}\tau_{j+1})/2\big) \prod_{q+2\le j\le m} \mathcal{L}_{\le n}\big((\omega_j-2\mathrm{i}\tau_{j-1})(\omega_j-2\mathrm{i}\tau_{j})/2\big)
\\
&\times \mathcal{L}_{\le n}\big((\omega_q-2\mathrm{i}\xi)(\omega_q-2\mathrm{i}\tau_{q})/2\big) \mathcal{L}_{\le n}\big((\omega_{q+1}-2\mathrm{i}\xi)(\omega_{q+1}-2\mathrm{i}\tau_{q+1})/2\big) 
\\
&\times \mathcal{L}_{\le n}\big((\omega_q-2\mathrm{i}\tau_1)(\omega_q-2\mathrm{i}\tau_{m-1})/2\big)\,.
\end{align*}
We multiply this expression by $(2\pi)^{-m/2}\prod_{j=1}^{m} \exp(-\omega_j{^2}/4)$ and integrate with respect to $\boldsymbol{\omega}$ over $\R^m$ by using \eqref{Hermite identity}. This yields 
\begin{align*} 2^{m/2}&\prod_{1\le j\le q-1}\sum_{\ell_j=0}^n\big(2^{\ell_j} \ell_j!\big)^{-1} H_{\ell_j}(\tau_j) H_{\ell_j}(\tau_{j+1})\prod_{q+2\le j\le m-1}\sum_{\ell_j=0}^n\big(2^{\ell_j} \ell_j!\big)^{-1} H_{\ell_j}(\tau_{j-1}) H_{\ell_j}(\tau_{j})
\\
&\times\sum_{\ell_q=0}^n\big(2^{\ell_q} \ell_q!\big)^{-1} H_{\ell_q}(\xi) H_{\ell_q}(\tau_{q})\sum_{\ell_{q+1}=0}^n\big(2^{\ell_{q+1}} \ell_{q+1}!\big)^{-1} H_{\ell_{q+1}}(\xi) H_{\ell_{q+1}}(\tau_{q+1})
\\
&\times\sum_{\ell_m=0}^n\big(2^{\ell_m} \ell_m!\big)^{-1} H_{\ell_m}(\tau_{1}) H_{\ell_m}(\tau_{m-1})\,.
\end{align*}
To pause for a moment, the term of the order $L$ equals $-\sqrt{B}L|\partial\Lambda|/(2\pi)$ times
\begin{align*}\sum_{q=1}^{m-1}&\int_\R\frac{\text d\xi}{2\pi} \,\exp(-\xi^2)\int_\xi^\infty \text d\tau_1\,(\tau_1-\xi) \,\exp(-\tau_1^2)\int_\xi^\infty \text d\tau_2\,\exp(-\tau_2^2)\cdots \int_\xi^\infty \text d\tau_{m-1}\,\exp(-\tau_{m-1}^2)
\\
&\times \pi^{-m/2}\prod_{1\le j\le q-1}\sum_{\ell_j=0}^n\big(2^{\ell_j} \ell_j!\big)^{-1} H_{\ell_j}(\tau_j) H_{\ell_j}(\tau_{j+1})\prod_{q+2\le j\le m-1}\sum_{\ell_j=0}^n\big(2^{\ell_j} \ell_j!\big)^{-1} H_{\ell_j}(\tau_{j-1}) H_{\ell_j}(\tau_{j})
\\
&\times\sum_{\ell_q=0}^n\big(2^{\ell_q} \ell_q!\big)^{-1} H_{\ell_q}(\xi) H_{\ell_q}(\tau_{q})\sum_{\ell_{q+1}=0}^n\big(2^{\ell_{q+1}} \ell_{q+1}!\big)^{-1} H_{\ell_{q+1}}(\xi) H_{\ell_{q+1}}(\tau_{q+1})
\\
&\times\sum_{\ell_m=0}^n\big(2^{\ell_m} \ell_m!\big)^{-1} H_{\ell_m}(\tau_{1}) H_{\ell_m}(\tau_{m-1})\,.
\end{align*}
We include the factors of $\pi$ into the terms $(2^\ell \ell!)^{-1/2}$, split them in halves, and combine them with each corresponding factor $H_\ell$. In accordance with that we define 
$$ \l_{\ell_i,\ell_j}(\xi) := \int_\xi^\infty \text d\tau \, \big(\sqrt{\pi} 2^{\ell_i} \ell_i!\big)^{-1/2} H_{\ell_i}(\tau) \big(\sqrt{\pi} 2^{\ell_j} \ell_j!\big)^{-1/2} H_{\ell_j}(\tau)\, \exp(-\tau^2)\,.
$$
Then the summand for $q=1$ can be written in the form
\begin{align} -\sum_{0\le \ell_1,\ldots,\ell_m\le n} &\int_\R \frac{\text d\xi}{2\pi}\, \Big[\frac{\text d}{\text d \xi}\l_{\ell_1,\ell_2}(\xi)\Big] \, \l_{\ell_2,\ell_3}(\xi) \cdots \l_{\ell_{m-1},\ell_m}(\xi)
\\
&\times \int_\xi^\infty \text d\tau_1 \,(\tau_1-\xi) \big(\sqrt{\pi} 2^{\ell_1} \ell_1!\big)^{-1/2} H_{\ell_1}(\tau_1) \big(\sqrt{\pi}2^{\ell_m} \ell_m!\big)^{-1/2} H_{\ell_m}(\tau_1)\, \exp(-\tau_1^2)\,.\notag
\end{align}
For $q=2$ we get the term
\begin{align} -\sum_{0\le \ell_1,\ldots,\ell_m\le n}&\int_\R \frac{\text d\xi}{2\pi} \l_{\ell_1,\ell_2}(\xi)\Big[\frac{\text d}{\text d \xi} \l_{\ell_2,\ell_3}(\xi)\Big] \, \l_{\ell_3,\ell_4}(\xi)\cdots \l_{\ell_{m-1},\ell_m}(\xi)
\\
&\times \int_\xi^\infty \text d\tau_1 \,(\tau_1-\xi) \big(\sqrt{\pi} 2^{\ell_1} \ell_1!\big)^{-1/2} H_{\ell_1}(\tau_1) \big(\sqrt{\pi}2^{\ell_m} \ell_m!\big)^{-1/2} H_{\ell_m}(\tau_1)\, \exp(-\tau_1^2)\,,\notag
\end{align}
and similarly for $q=3,\dots,m-1$. By summing over all $q$ we obtain 
\begin{align}-\sum_{0\le \ell_1,\ldots,\ell_m\le n} &\int_\R \frac{\text d\xi}{2\pi}\, \frac{\text d}{\text d \xi} \Big[\l_{\ell_1,\ell_2}(\xi) \l_{\ell_2,\ell_3}(\xi) \cdots \l_{\ell_{m-1},\ell_m}(\xi)\Big]
\\
&\times \int_\xi^\infty \text d\tau_1 \,(\tau_1-\xi) \big(\sqrt{\pi} 2^{\ell_1} \ell_1!\big)^{-1/2} H_{\ell_1}(\tau_1) \big(\sqrt{\pi}2^{\ell_m} \ell_m!\big)^{-1/2} H_{\ell_m}(\tau_1)\, \exp(-\tau_1^2)\,.\notag
\end{align}
Inside the derivative with respect to $\xi$ we subtract the constant $C_{\ell_1,\ldots,\ell_m} := \l_{\ell_1,\ell_2}(-\infty) \l_{\ell_2,\ell_3}(-\infty)$ $ \cdots \l_{\ell_{m-1},\ell_m}(-\infty)$ so that we can integrate by parts. Then we get
\begin{align*} -\int_\R \frac{\text d\xi}{2\pi}\,& \frac{\text d}{\text d \xi} 
\Big[\l_{\ell_1,\ell_2}(\xi) \l_{\ell_2,\ell_3}(\xi) \cdots \l_{\ell_{m-1},\ell_m}(\xi)-C_{\ell_1,\ldots,\ell_m}\Big]
\\
&\times \int_\xi^\infty \text d\tau_1 \,(\tau_1-\xi) \big(\sqrt{\pi} 2^{\ell_1} \ell_1!\big)^{-1/2} H_{\ell_1}(\tau_1) \big(\sqrt{\pi}2^{\ell_m} \ell_m!\big)^{-1/2} H_{\ell_m}(\tau_1)\, \exp(-\tau_1^2)
\\
&=-\int_\R \frac{\text d\xi}{2\pi}\,\Big[\l_{\ell_1,\ell_2}(\xi) \l_{\ell_2,\ell_3}(\xi) \cdots \l_{\ell_{m-1},\ell_m}(\xi) -C_{\ell_1,\ldots,\ell_m}\Big] \l_{\ell_1,\ell_m}(\xi)
\\
&=-\int_\R \frac{\text d\xi}{2\pi}\,\Big[\l_{\ell_1,\ell_2}(\xi) \l_{\ell_2,\ell_3}(\xi) \cdots \l_{\ell_{m-1},\ell_m}(\xi)\l_{\ell_m,\ell_1}(\xi) - C_{\ell_1,\ldots,\ell_m} \l_{\ell_1,\ell_m}(\xi)\Big]\,.
\end{align*}
By \eqref{orthog} we know that $C_{\ell_1,\ldots,\ell_m} = \d_{\ell_1,\ell_2}\d_{\ell_2,\ell_3}\cdots\d_{\ell_{m-1},\ell_m}$. Now we sum over $\ell_1,\ldots,\ell_m$ and use the Christoffel--Darboux formula (of the years 1858 and 1878)
\[\sum_{\ell=0}^n (2^\ell \ell!)^{-1} H_\ell(\tau) H_\ell(\tau') = (2^{n+1}n!)^{-1}\frac{H_n(\tau') H_{n+1}(\tau) - H_n(\tau) H_{n+1}(\tau')}{\tau-\tau'}\quad\mbox{ if } \tau\not=\tau' 
\]
and 
\be \label{CD}
\sum_{\ell=0}^n (2^\ell \ell!)^{-1} H_\ell(\tau)^2 = (2^{n+1} n!)^{-1} \big[H_{n+1}^2(\tau) - H_{n}(\tau) H_{n+2}(\tau)\big]\,. 
\ee
Then, for instance,
\begin{align*} \sum_{0\le \ell_2\le n} & \l_{\ell_1,\ell_2}(\xi)\l_{\ell_2,\ell_3}(\xi)\
\\
&=\big(\sqrt{\pi}2^{\ell_1} \ell_1!\big)^{-1/2} \big(\sqrt{\pi}2^{\ell_3} \ell_3!\big)^{-1/2} \int_{[\xi,\infty)^2} \text d\tau_1 \text d\tau_2\, H_{\ell_1}(\tau_1)H_{\ell_3}(\tau_2) 
\\
&\times \sum_{0\le \ell_2\le n} \big(\sqrt{\pi}2^{\ell_2} \ell_2!\big)^{-1}  H_{\ell_2}(\tau_1) H_{\ell_2}(\tau_2) \, \exp(-\tau_1^2-\tau_2^2)
\\
&=\big(\sqrt{\pi}2^{\ell_1} \ell_1!\big)^{-1/2} \big(\sqrt{\pi}2^{\ell_3} \ell_3!\big)^{-1/2} (\sqrt{\pi}2^{n+1}n!)^{-1}
\\
&\times \int_{[\xi,\infty)^2} \text d\tau_1 \text d\tau_2\,H_{\ell_1}(\tau_1)H_{\ell_3}(\tau_2)\,\frac{H_n(\tau_2) H_{n+1}(\tau_1) - H_n(\tau_1) H_{n+1}(\tau_2)}{\tau_1-\tau_2} \,\exp(-\tau_1^2-\tau_2^2)\,.
\end{align*}
Performing also the summations over $\ell_1,\ell_3,\dots,\ell_m$ yields
\begin{align} \big(\sqrt{\pi}& 2^{n+1} n!)^{-m} \int_{[\xi,\infty)^{m}} \text d\boldsymbol \tau\, \exp(-\boldsymbol \tau^2)\,\frac{H_n(\tau_2) H_{n+1}(\tau_1) - H_n(\tau_1) H_{n+1}(\tau_2)}{\tau_1-\tau_2}\notag
\\
&\times \,\frac{H_n(\tau_3) H_{n+1}(\tau_2) - H_n(\tau_2) H_{n+1}(\tau_3)}{\tau_2-\tau_3}\,\cdots\,\frac{H_n(\tau_{1}) H_{n+1}(\tau_m) - H_n(\tau_m) H_{n+1}(\tau_{1})}{\tau_{m}-\tau_1}\notag
\\
&=:\l_{\le n,m}(\xi)\,.\label{lambda m}
\end{align}
By \eqref{CD} we also find
\begin{align}
\sum_{0\le \ell_1,\ldots,\ell_m\le n} C_{\ell_1,\ldots,\ell_m} \l_{\ell_1,\ell_m}(\xi) &= \sum_{0\le \ell\le n}  (\sqrt{\pi}2^{\ell}\ell!)^{-1} \int_\xi^\infty \text d\tau \, H_\ell(\tau)^2 \,\exp(-\tau^2)\notag
\\
&=\big(\sqrt{\pi} 2^{n+1} n!)^{-1} \int_\xi^\infty \text d\tau\, \big[H_{n+1}^2(\tau) - H_{n}(\tau) H_{n+2}(\tau)\big]\,\exp(-\tau^2)\notag
\\
&=:\l_{\le n,1}(\xi)\,.\label{lambda 1}
\end{align}
Now we apply \eqref{CD} directly to the integral kernel of the operator $\mathcal K_{n,\xi}$ as defined in \eqref{opK}. Then we get

\begin{align} \label{op: K}
\mathcal K_{n,\xi}(\tau,\tau') &= \frac{\exp(-(\tau^2+\tau'^2)/2)}{\sqrt{\pi} 2^{n+1}n!} \begin{cases}
\dfrac{H_n(\tau') H_{n+1}(\tau) - H_n(\tau) H_{n+1}(\tau')}{\tau-\tau'}&\mbox{ if } \tau\not=\tau'\\ 
H_{n+1}^2(\tau) - H_{n}(\tau) H_{n+2}(\tau)&\mbox{ if } \tau=\tau'\end{cases}
\end{align}
whenever $\tau\ge \xi$ and $\tau'\ge\xi$ and zero otherwise. 
By comparing this with the just given definitions \eqref{lambda m} and \eqref{lambda 1} of $\l_{\le n,m}(\xi)$ we arrive at the relation $\l_{\le n,m}(\xi) = \tr\mathcal K_{n,\xi}^m$ for all $m\in\N$ as claimed in Lemma \ref{lem:general moments} for the sub-leading term of the order $L$. Finally, we turn to the leading term of the order $L^2$. It equals 
\begin{align*} L^2B|\L| &\int_\R \frac{\text d\xi}{2\pi} \, \exp(-\xi^2) \int_\R \text d\tau_1 \exp(-\tau_1^2)\cdots \int_\R \text d\tau_{m-1} \exp(-\tau_{m-1}^2)
\\
&\times\,\sum_{\ell_1=0}^n (\sqrt{\pi}2^{\ell_1}\ell_1!)^{-1} H_{\ell_1}(\xi)H_{\ell_1}(\tau_1) \,\sum_{\ell_2=0}^n (\sqrt{\pi} 2^{\ell_2}\ell_2!)^{-1} H_{\ell_2}(\xi)H_{\ell_2}(\tau_2)\cdots
\\
&\times\,\sum_{\ell_m=0}^n (\sqrt{\pi}2^{\ell_m}\ell_m!)^{-1} H_{\ell_m}(\tau_1)H_{\ell_m}(\tau_{m-1})\,.
\end{align*}
By the orthogonality \eqref{orthog} this yields $L^2B(n+1)|\L|/(2\pi)$ as claimed.
\end{proof}

\section{From smooth functions to the entropy functions}

In this section we build on Theorem \ref{thm:LL smooth f} and Theorem \ref{thm:ground state smooth f} with a suitable function $f$ to derive the precise leading asymptotic growth of the local ground-state entropy with arbitrary R\'enyi index $\a > 0$. While the case $\a >1$ is rather straightforward, non-smoothness in the case $\a\leq 1$ requires considerable attention. In the first subsection we define the local ground-state entropies and present our main result and related results. The second subsection prepares the ground for getting from smooth functions to the non-smooth functions needed in the case $\a\leq 1$. Proofs of our results are then given in the third subsection.

\subsection{Definitions and results}
For a real parameter $\a>0$ we define the \textit{$\a$-R\'enyi entropy function} $h_\a:[0,1]\to[0,\ln(2)]$ by
\begin{align} \label{def:entropy}
h_\a(t) := &\ \frac{1}{1-\a}\ln\big(t^\a + (1-t)^\a\big)\,,\quad\ \a\not = 1,\notag\\[0.2cm]
h_1(0):= & \ h_1(1):=0\,,\, h_1(t) :=-t\ln(t) - (1-t)\ln(1-t)\quad\mbox{ if } t\notin \{0,1\} \,
\end{align}
and recall \eqref{landau} as well as \eqref{sr} for the Landau Hamiltonian $\text H$.

Then the positive number
\be\label{entropy}
S_\a(\L):=\tr h_\a(\mathds{1}_\L \Theta(\mu \mathds{1}- \text H )\mathds{1}_\L ) = \tr h_\a(\text P_{\le \nu}(\L))=\tr \mathds{1}_\L h_\a(\text P_{\le \nu}(\L)) \mathds{1}_\L 
\ee
is the \textit{local $\a$-R\'enyi ground-state entropy} (with chemical potential $\mu\ge B$), see \eqref{ss} and \cite{HLS}. Here, the integer $\nu\in\N_0$ is the integer part of $(\mu/B-1)/2$ as defined already below \eqref{gs}. In particular, $S(\L) := S_1(\L)$ is the local von Neumann ground-state entropy mentioned in the abstract. The third equality in \eqref{entropy} is due to $h_\a(0)=0$. 

Finiteness of the local $\a$-R\'enyi ground-state entropy is guaranteed by 

\begin{lemma}\label{lem:finite}
Let $\L\subset\R^2$ be a bounded Borel set and $\mu\geq B$. Then $S_\alpha(\Lambda) <\infty$ for any $\a>0$.
\end{lemma}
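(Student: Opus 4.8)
The plan is to combine an elementary endpoint bound on the R\'enyi entropy function $h_\a$ with the fact that the localized Fermi projection lies in \emph{every} Schatten--von Neumann class. Write $\text T := \text P_{\le\nu}(\L) = \mathds 1_\L \text P_{\le\nu}\mathds 1_\L$, so that $S_\a(\L) = \tr h_\a(\text T)$ by \eqref{entropy}. Exactly as at the beginning of the proof of Lemma \ref{lem:moments}, the operator $\mathds 1_\L \text P_{\le\nu}$ is Hilbert--Schmidt (its kernel $1_\L(x)\text P_{\le\nu}(x,y)$ is square-integrable since $\L$ is bounded), whence $\text T$ is a non-negative, self-adjoint, compact operator with $0\le\text T\le\mathds 1$. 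Denoting its eigenvalues by $\lambda_n\in[0,1]$ (with $\lambda_n\to0$) and using $h_\a\ge0$, $h_\a(0)=0$, the quantity $\tr h_\a(\text T) = \sum_n h_\a(\lambda_n)\in[0,\infty]$ is well defined, and it remains to show it is finite.

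First I would record that $h_\a$ satisfies the bound \eqref{q:eq} with $f=h_\a$; note $h_\a(1)=0$, so the assertion is that there are $C$ and $q=q(\a)\in(0,1)$ with
\begin{align*}
0\le h_\a(t)\le C\,t^{q}(1-t)^{q}\,,\qquad t\in[0,1]\,.
\end{align*}
Since $h_\a$ is symmetric about $t=1/2$ it suffices to treat $t\in[0,1/2]$. For $0<\a<1$ one uses $\ln(1+x)\le x$ to get $h_\a(t)\le\tfrac1{1-\a}\,t^{\a}$, so $q=\a$ works; for $\a\ge1$ the function vanishes at most linearly at the endpoints (with an extra logarithm when $\a=1$), so any $q<1$, say $q=1/2$, works. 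Consequently
\begin{align*}
S_\a(\L)=\sum_n h_\a(\lambda_n)\le C\sum_n \lambda_n^{\,q}=C\,\|\text T\|_{q}^{q}\,,
\end{align*}
and the problem is reduced to proving $\text T\in\mathfrak S_{q}$ for this (possibly small) $q>0$.

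To establish this I would strip off the rough indicator by a smooth cut-off. As $\L$ is bounded, its closure is compact, so I may choose $\chi\in\mathsf C_c^\infty(\R^2)$ with $\chi\equiv1$ on $\overline\L$; then $1_\L\chi=1_\L$, and since multiplication operators commute this gives $\text T=\mathds 1_\L\,(\chi\,\text P_{\le\nu}\,\chi)\,\mathds 1_\L$. By the ideal property of the quasi-norm, $\|\text T\|_{q}\le\|\mathds 1_\L\|\,\|\chi\,\text P_{\le\nu}\,\chi\|_{q}\,\|\mathds 1_\L\|\le\|\chi\,\text P_{\le\nu}\,\chi\|_{q}$. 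The operator $\chi\,\text P_{\le\nu}\,\chi$ has integral kernel $\chi(x)\,\text P_{\le\nu}(x,y)\,\chi(y)$, which is $\mathsf C^\infty$-smooth in $(x,y)$ (the projection kernel is smooth, being built from Laguerre polynomials, a Gaussian, and a phase) and compactly supported in $\operatorname{supp}\chi\times\operatorname{supp}\chi$. It is a standard fact that an integral operator with a smooth, compactly supported kernel belongs to $\mathfrak S_{p}$ for every $p>0$, see \cite{BS_book,Si}; hence $\chi\,\text P_{\le\nu}\,\chi\in\mathfrak S_{q}$ and $\|\text T\|_{q}^{q}<\infty$, which yields $S_\a(\L)<\infty$.

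The endpoint estimate on $h_\a$ and the cut-off manipulation are routine; the one genuinely non-trivial ingredient is the rapid decay of the singular values of $\text T$ encoded in $\text T\in\mathfrak S_{q}$ for arbitrarily small $q$, needed precisely in the regime $\a<1$ where $q=\a$ may be close to $0$. I expect this to be the crux: ordinary trace-class information is not enough, and it is exactly the smoothness together with the compact support of the reduced kernel $\chi\,\text P_{\le\nu}\,\chi$ that supplies the super-polynomial singular-value decay placing $\text T$ in all Schatten quasi-norm classes.
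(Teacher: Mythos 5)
Your proof is correct and follows essentially the same route as the paper's: bound $h_\a(t)$ by $C t^q(1-t)^q$, reduce finiteness of $S_\a(\L)$ to membership of the localized projection in a small Schatten--von Neumann class $\mathfrak S_q$, and obtain that membership from the smoothness and decay of the Landau kernel via Birman--Solomyak-type singular-value estimates (the paper's Proposition \ref{BS:prop} and Lemma \ref{sp:lem}). The only cosmetic differences are that the paper keeps the factor $(1-t)^\b$ and estimates $\|\mathds{1}_{\L} \text P_{\le\nu}(\mathds{1}-\mathds{1}_\L)\|_\b^{\b}$ (the form reused later in Theorem \ref{Hank:thm}), whereas you drop $(1-t)^q$ and bound $\|\text P_{\le \nu}(\L)\|_q^q$ directly, and that you insert a smooth compactly supported cutoff $\chi$ where the paper uses the disk indicator $\mathds{1}_r$ together with its Lemma \ref{sp:lem}.
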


The proof is given in Subsection \ref{proof} after certain preparations in Subsection \ref{sv}. The next theorem gives the precise asymptotic growth. It is our main result.

\begin{thm}[\bf{Asymptotics of the local R\'enyi ground-state entropies}]\label{entropy of ground state}
Let $\L\subset\R^2$ be a bounded $\mathsf C^3$-region and let the chemical potential satisfy $\mu\geq B$. Then the local $\a$-R\'enyi ground-state entropy \eqref{entropy} obeys
\be\label{main} 
S_\a(L\L) = L\sqrt{B}\,|\p\L|\,{\sf M}_{\le \nu}(h_\a)+ o(L)\,,
\ee
as $L\to\infty$. The asymptotic coefficient ${\sf M}_{\le\nu}(h_\alpha)$ is given by \eqref{mleell:eq} with $n=\nu$. It is finite and positive. 
\end{thm}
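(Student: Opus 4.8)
The plan is to deduce Theorem~\ref{entropy of ground state} from Theorem~\ref{thm:ground state smooth f} (with $n=\nu$) applied to a suitable function $f$, the obstacle being that the R\'enyi entropy function $h_\a$ is \emph{not} among the admissible $f$: it satisfies $h_\a(0)=0$ but near $t=1$ it behaves like $(1-t)$ only for $\a\ge 1$, while for $\a<1$ it has a non-smooth, merely H\"older-type singularity of order $\a$, so neither one-sided differentiability at the endpoints nor the bound \eqref{q:eq} with $q$ large enough is automatic. The leading $L^2$ coefficient in \eqref{RS:general} carries the factor $f(1)$, and since $h_\a(1)=0$ this term vanishes; this is precisely why the entropy exhibits a pure boundary-curve (area-law) scaling $L\sqrt{B}|\p\L|$ with no bulk term, and it also explains why the coefficient is ${\sf M}_{\le\nu}(h_\a)$.

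First I would treat the easy regime $\a>1$. Here $h_\a$ is continuous on $[0,1]$ with $h_\a(0)=h_\a(1)=0$, is differentiable from the right at $0$ and from the left at $1$, and in fact satisfies a bound of the form \eqref{q:eq}. Thus Theorem~\ref{thm:ground state smooth f} applies verbatim with $f=h_\a$, giving \eqref{main} directly with an error of even $\mathcal O(1)$ (hence $o(L)$); finiteness and positivity of ${\sf M}_{\le\nu}(h_\a)$ follow from Lemma~\ref{mleell:lem} together with the strict positivity of the integrand $\tr h_\a(\mathcal K_{\nu,\xi})$ (as $0\le\mathcal K_{\nu,\xi}\le\mathds 1$ with nontrivial spectrum in $(0,1)$ for a set of $\xi$ of positive measure). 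The borderline case $\a=1$ is handled the same way once one checks that $h_1(t)=-t\ln t-(1-t)\ln(1-t)$ obeys \eqref{q:eq} for every $q<1$, which it does.

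The hard part is $0<\a<1$, where $h_\a$ is not smooth enough to apply Theorem~\ref{thm:ground state smooth f} and standard two-sided polynomial sandwiching fails. The plan is to follow the Schatten--von~Neumann approach announced in the introduction and used in \cite{LSS1}: I would approximate $h_\a$ by a sequence of admissible smooth functions $f_k$ from above and below and then control the discrepancy $\tr\bigl(h_\a(\text P_{\le\nu}(L\L))-f_k(\text P_{\le\nu}(L\L))\bigr)$ uniformly in $L$. Concretely, write $h_\a=f_k+r_k$ where $r_k$ is supported near $t=1$ and satisfies $|r_k(t)|\le C\,(1-t)^{\a}$ there; the key estimate, to be proved using the preparatory Schatten-norm material of Subsection~\ref{sv} (the $p$-triangle inequality \eqref{p-tri:eq} and the H\"older inequality \eqref{holder}), is a quasi-norm bound of the form $\|\,\tilde g(\text P_{\le\nu}(L\L))\,\|_p^p\le C\,L$ for $\tilde g(t)=t(1-t)$ and suitable $p<1$, which transfers to a bound $\limsup_L L^{-1}\,|\tr r_k(\text P_{\le\nu}(L\L))|\le C\,\varepsilon_k$ with $\varepsilon_k\to0$. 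Letting $k\to\infty$ after $L\to\infty$, and matching the smooth-function coefficients ${\sf M}_{\le\nu}(f_k)\to{\sf M}_{\le\nu}(h_\a)$ via the finiteness Lemma~\ref{mleell:lem}, yields \eqref{main} with a genuine $o(L)$ error.

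Two loose ends remain. I would invoke Lemma~\ref{lem:finite} to guarantee that $S_\a(L\L)$ is finite for every finite $L$, so that the asymptotic statement is meaningful. And I would confirm finiteness of ${\sf M}_{\le\nu}(h_\a)$ in the range $\a<1$ directly from Lemma~\ref{mleell:lem}, since $h_\a$ satisfies \eqref{q:eq} with $q=\a$; positivity again follows because $\tr h_\a(\mathcal K_{\nu,\xi})-h_\a(1)\tr\mathcal K_{\nu,\xi}=\tr h_\a(\mathcal K_{\nu,\xi})\ge 0$ with strict inequality on a set of positive $\xi$-measure. I expect the single genuinely delicate point to be the uniform-in-$L$ Schatten quasi-norm bound for $\a<1$, since there the operator $\text P_{\le\nu}(L\L)$ has spectrum accumulating at both $0$ and $1$ and the interpolation between the near-$1$ H\"older behaviour of $h_\a$ and the trace-class control must be carried out without losing the sharp $L$-power.
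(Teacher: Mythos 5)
Your overall strategy coincides with the paper's: for $\a>1$ apply Theorem \ref{thm:ground state smooth f} directly, and for small $\a$ approximate $h_\a$ by admissible smooth functions and control the remainder through a Schatten--von Neumann quasi-norm bound growing at most linearly in $L$. The ``key estimate'' you postulate, $\|\tilde g(\text P_{\le\nu}(L\L))\|_p^p\le CL$, is exactly what the paper obtains by combining \eqref{holder} with Theorem \ref{Hank:thm}, since $\|\text P_{\le\nu}(L\L)(\mathds{1}-\text P_{\le\nu}(L\L))\|_r^r=\|\mathds{1}_{L\L}\text P_{\le\nu}(\mathds{1}-\mathds{1}_{L\L})\|_{2r}^{2r}$, see \eqref{intermediate}. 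However, there are two concrete errors. First, you misclassify $\a=1$: the function $h_1$ is \emph{not} differentiable from the right at $t=0$ nor from the left at $t=1$ (its difference quotients diverge like $-\ln t$, resp.\ $-\ln(1-t)$), so Theorem \ref{thm:ground state smooth f} does not apply to it as it stands. Your check that $h_1$ satisfies \eqref{q:eq} for every $q<1$ is beside the point: \eqref{q:eq} is the hypothesis of Lemma \ref{mleell:lem} and yields only finiteness of the coefficient ${\sf M}_{\le\nu}(h_1)$, not the trace asymptotics. The case $\a=1$ must be routed through your ``hard'' argument, which is what the paper does, choosing $\b$ to be an arbitrary exponent in $(0,1)$ in the bound \eqref{ha:eq}; fortunately that argument covers it without change.

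Second, your decomposition $h_\a=f_k+r_k$ with $r_k$ supported only near $t=1$ cannot work as stated. By the symmetry $h_\a(t)=h_\a(1-t)$, for $\a\le 1$ the function $h_\a$ has the same H\"older-type singularity at $t=0$ (it behaves like $t^\a/(1-\a)$ for $\a<1$, resp.\ like $-t\ln t$ for $\a=1$), so $f_k=h_\a-r_k$ would retain a diverging one-sided derivative at $t=0$ and would again be inadmissible for Theorem \ref{thm:ground state smooth f}: the difficulty you set out to remove reappears at the other endpoint. The fix is the paper's: take the remainder to be $\zeta_\varepsilon h_\a$ with a smooth cutoff $\zeta_\varepsilon$ equal to $1$ on $[0,\varepsilon/2]\cup[1-\varepsilon/2,1]$ and to $0$ on $[\varepsilon,1-\varepsilon]$. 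Then $|(\zeta_\varepsilon h_\a)(t)|\le C\varepsilon^{r}[t(1-t)]^{r}$ with $r=\b/2$ as in \eqref{zetah:eq}, the trace of the remainder is $\mathcal{O}(\varepsilon^{r}L)$ by the quasi-norm bound, the coefficient error ${\sf M}_{\le\nu}(\zeta_\varepsilon h_\a)\le C\varepsilon^{r}{\sf M}_{\le\nu}(\tilde g^{\,r})$ is finite by Lemma \ref{mleell:lem}, and letting $\varepsilon\to 0$ after $L\to\infty$ gives \eqref{main}. With these two corrections your proof becomes the paper's proof.
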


The proof is given in Subsection \ref{proof}. It builds on Lemma \ref{lem:finite}, Theorem~\ref{thm:ground state smooth f}, and Subsection \ref{sv}.

\begin{remarks} 
\begin{enumerate}
\item[(i)] The coefficient ${\sf M}_{\le\nu}(h_\alpha)$ in \eqref{main} is in general not easy to calculate. The simplest case occurs when $\nu=0$. Then we have
\be\label{coeff}
 {\sf M}_{\le 0}(h_\alpha) = {\sf M}_{0}(h_\alpha) = \int_\R \frac{\text d\xi}{2\pi}\, h_\a(\l_0(\xi))
\ee
with $\l_0(\xi) = {\pi}^{-1/2}\int_\xi^\infty \text d t \,\exp(-t^2)$ being $1/2$ of the complementary error function. The coefficient \eqref{coeff} was found in \cite{RS1} for $\a=1$ and special regions $\L$. The first proof of \eqref{main} for $\a=1$, $\nu=0$ (equivalently, $\ell=0$ in \eqref{main_l}), $L^2 B\in\N$, and general bounded $\mathsf C^\infty$-regions is due to Charles and Estienne in \cite{CE}. A numerical computation gives ${\sf M}_{0}(h_1) = 0.203\dots$, also in agreement with \cite{RS1}.

\item[(ii)] 
In the zero-field case $B=0$, the 
leading term of the local R\'enyi ground-state 
entropy depends on its index $\a$ simply through the pre-factor $(1+\a)/\a$, see \cite{LSS1}. 
A numerical computation shows that in the case $B\not = 0$ the dependence on $\a$ is not so simple. 

\end{enumerate}
\end{remarks}

We are going to define a local $\a$-R\'enyi ground-state 
entropy also for the simpler situation where the 
Landau Hamiltonian $\text H$ is \textit{restricted} 
(or ``projected'') from the outset to a single Landau-level 
eigenspace $\text P_\ell\text L^2(\R^2)$ with arbitrary index $\ell\in\N_{0}$. 
This restriction means that $\text H$ is replaced with 
$\text P_{\ell} \text H \text P_{\ell}$ and 
similarly for related operators, confer, for example, \cite{HLW}. 
Then the corresponding local(alized) Fermi projection is in analogy to \eqref{ss} given by
\be\label{fermi_l}
\mathds{1}_\L\text P_{\ell} \Theta(\mu\text P_{\ell}\mathds{1} \text P_{\ell}  -\text P_{\ell} \text H \text P_{\ell}) \text P_{\ell}\mathds{1}_\L= \Theta(\mu-(2\ell+1)B)\text P_{\ell}(\L) = \text P_{\ell}(\L) \,,\quad \mu\ge(2\ell+1)B\,.
\ee
We ignore the case $\mu <(2\ell+1)B$, because then the local Fermi projection \eqref{fermi_l} is the zero operator. In analogy to \eqref{entropy} we now define for each $\ell\in\N_0$ the positive number
\be\label{entropy_l}
S_{\alpha,\ell}(\L):=\tr h_\alpha (\mathds{1}_\L \text P_{\ell} \Theta(\mu \text P_{\ell} -\text P_{\ell} \text H \text P_{\ell}) \text P_{\ell}\mathds{1}_\L )=\tr h_\alpha(\text P_{\ell}(\L))=\tr \mathds{1}_\L h_\alpha(\text P_{\ell}(\L))\mathds{1}_\L
\ee
and call it the \textit{local $\alpha$-R\'enyi ground-state entropy of 
the $\ell$th Landau level} (with chemical potential $\mu\ge(2\ell+1)B$). 
Obviously, we have $S_{\alpha,0}(\L)=S_\a(\L)$ if $0<B\le \mu<3B$. 
Along with Theorem~\ref{entropy of ground state} the following result holds 

\begin{thm}[\bf{Asymptotics of the local R\'enyi  ground-state entropies of the $\ell$th Landau level}]\label{entropy of ground state Landau}
Let $\L\subset\R^2$ be a bounded $\mathsf C^3$-region, $\ell\in\N_0$, and $\mu\geq (2\ell+1)B$. Then the local $\alpha$-R\'enyi ground-state entropy of the $\ell$th Landau level \eqref{entropy_l} obeys
\be\label{main_l} 
S_{\a,\ell}(L\L) = L\sqrt{B}\,|\p\L|\,{\sf M}_{\ell}(h_\a)+ o(L)\,,
\ee
as $L\to\infty$. The asymptotic coefficient ${\sf M}_{\ell}(h_\alpha)$ is given by \eqref{mell:eq}. It is finite and positive. 
\end{thm}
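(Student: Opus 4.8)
Write $S_{\a,\ell}(L\L)=\tr h_\a(\text A)$ with the self-adjoint operator $\text A:=\text P_\ell(L\L)=\mathds 1_{L\L}\,\text P_\ell\,\mathds 1_{L\L}$, for which $0\le\text A\le\mathds 1$ and $h_\a(0)=h_\a(1)=0$; finiteness of $\tr h_\a(\text A)$ at each fixed $L$ is the $\ell$-level analogue of Lemma \ref{lem:finite}. The plan is to transport the smooth-function asymptotics \eqref{RS} to $h_\a$, which violates the hypotheses of Theorem \ref{thm:LL smooth f} only through its endpoint behaviour. First, the coefficient is finite and positive: since $h_\a(1)=0$ we have ${\sf M}_\ell(h_\a)=\int_\R\frac{\mathrm d\xi}{2\pi}\,h_\a(\l_\ell(\xi))$, which is strictly positive because $h_\a\ge 0$ and $\l_\ell(\xi)\in(0,1)$ for every finite $\xi$, while the elementary endpoint bound $h_\a(t)\le C\,[t(1-t)]^{q}$ — with $q=\a$ for $\a<1$, any $q<1$ for $\a=1$, and $q=1$ for $\a>1$ — is precisely \eqref{q:eq}, so Lemma \ref{lemma 3} yields ${\sf M}_\ell(h_\a)<\infty$. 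The case $\a>1$ then requires nothing further: there $h_\a$ is continuous on $[0,1]$, vanishes at $0$, and has finite one-sided derivatives $h_\a'(0^+)=\a/(\a-1)=-h_\a'(1^-)$, so it is admissible in Theorem \ref{thm:LL smooth f} and \eqref{RS} with $f=h_\a$ gives \eqref{main_l} directly, even with remainder $\mathcal O(1)$.

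The case $\a\le 1$ is the heart of the matter, because there $h_\a$ acquires a vertical tangent at each endpoint and so cannot be trapped between two admissible functions sharing its endpoint values; ordinary approximation fails. The device I would introduce is the a priori Schatten--von Neumann bound
\[
\big\|\text A-\text A^2\big\|_p^p=\big\|W\big\|_{2p}^{2p}\le C_p\,L,\qquad 0<p\le 1,\qquad W:=\mathds 1_{L\L}\,\text P_\ell\,(\mathds 1-\mathds 1_{L\L}),
\]
resting on the identity $\text A-\text A^2=\mathds 1_{L\L}\text P_\ell(\mathds 1-\mathds 1_{L\L})\text P_\ell\mathds 1_{L\L}=WW^*$, which reduces everything to the boundary-crossing operator $W$. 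For integer $p\ge 1$ this is already contained in Lemma \ref{lem:moments}, since $\tr(\text A-\text A^2)^p$ is the trace of a polynomial in $\text A$ vanishing at $0$ and $1$ and is therefore $\mathcal O(L)$. The genuinely new content — and the main obstacle — is the range $0<p<1$, which controls the accumulation of eigenvalues of $\text A$ at $0$ and $1$; here I would use the $\mathsf C^\infty$-smoothness and the Gaussian off-diagonal decay of the kernel $p_\ell$ in \eqref{Landau_kernel:eq}, covering $L\p\L$ by $\mathcal O(L)$ cells of bounded diameter so that the localized pieces of $W$ carry uniformly bounded $\mathfrak S_{2p}$-quasi-norms and, the decay making their interaction summable, the $p$-triangle inequality \eqref{p-tri:eq} assembles them into $\|W\|_{2p}^{2p}\le C_pL$. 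This is exactly the estimate that the Schatten--von Neumann preparations underlying Lemma \ref{lem:finite} are meant to provide.

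Granting this bound I would finish by a symmetric cut-off of $h_\a$ near the endpoints. For $\d\in(0,1/2)$ let $f_\d$ coincide with $h_\a$ on $[\d,1-\d]$ and be the linear interpolant joining $(0,0)$ to $(\d,h_\a(\d))$, respectively $(1-\d,h_\a(1-\d))$ to $(1,0)$, on the two end intervals; then $f_\d$ is continuous, vanishes at $0$ and $1$, has finite one-sided endpoint derivatives, and Theorem \ref{thm:LL smooth f} gives $\tr f_\d(\text A)=L\sqrt B\,|\p\L|\,{\sf M}_\ell(f_\d)+\mathcal O(1)$. The remainder $r_\d:=h_\a-f_\d$ is supported in $[0,\d]\cup[1-\d,1]$ with $|r_\d(t)|\le C\,t^{q}$ near $0$ and $|r_\d(t)|\le C\,(1-t)^{q}$ near $1$; choosing $q'\in(0,q)$ and using $t^{q}\le\d^{\,q-q'}2^{q'}[t(1-t)]^{q'}$ on $[0,\d]$ together with the Schatten bound at exponent $q'$ gives
\[
\big|\tr r_\d(\text A)\big|\le C\,\d^{\,q-q'}\,\big\|\text A-\text A^2\big\|_{q'}^{q'}\le C\,\d^{\,q-q'}\,L,
\]
while a dominated-convergence argument shows ${\sf M}_\ell(f_\d)\to{\sf M}_\ell(h_\a)$ as $\d\to 0$. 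Dividing $\tr h_\a(\text A)=\tr f_\d(\text A)+\tr r_\d(\text A)$ by $L\sqrt B\,|\p\L|$, letting $L\to\infty$ and then $\d\to 0$, the $\limsup$ and $\liminf$ are squeezed to ${\sf M}_\ell(h_\a)$, which is \eqref{main_l} with remainder $o(L)$. The identical scheme — with $\text P_{\le n}$, Theorem \ref{thm:ground state smooth f} and the operators $\mathcal K_{n,\xi}$ in place of $\text P_\ell$, \eqref{RS} and $\l_\ell$ — proves the companion Theorem \ref{entropy of ground state}.
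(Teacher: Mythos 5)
Your proposal is correct and follows essentially the same route as the paper: the case $\a>1$ is handled directly by Theorem \ref{thm:LL smooth f}, and for $\a\le 1$ you split $h_\a$ into an admissible part plus an endpoint part controlled by the quasi-norm bound $\|\mathds{1}_{L\L}\text{P}_\ell(\mathds{1}-\mathds{1}_{L\L})\|_p^p\le CL$, which is exactly the paper's Theorem \ref{Hank:thm}, proved there by precisely the covering/Birman--Solomyak/$p$-triangle mechanism you sketch. The only cosmetic difference is that you remove the endpoint singularities with a piecewise-linear interpolant $f_\d$ whereas the paper multiplies by a smooth cut-off $\zeta_\varepsilon$; both yield the same $[t(1-t)]^{r}$-type control of the error term and the same squeeze argument.
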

The remark immediately below Theorem \ref{entropy of ground state} applies analogously to this theorem when \eqref{RS:general} is replaced with \eqref{RS}. The proof of Theorem~\ref{entropy of ground state Landau} builds on Theorem~\ref{thm:LL smooth f} and repeats the proof of Theorem \ref{entropy of ground state} in Subsection \ref{proof}
with $\text P_{\le \nu}$ replaced by $\text P_\ell$.

The next subsection contains estimates being crucial for the proof of the above results.

\subsection{Estimates for singular values}\label{sv}

We begin by introducing certain operators ${\rm T}_{r, R}$ on $\mathrm{L}^2(\mathbb R^2)$. To this end, we denote by ${\sf D}(x, R)\subset \mathbb R^2$ the open disk of radius $R>0$, centered at the point $x\in \mathbb R^2$ and abbreviate $\mathds{1}_R := \mathds{1}_{{\sf D}(0, R)}$ and similarly with $R$ replaced by $r>0$. Then we define the operators
\begin{align}\label{trr:eq}
{\rm T}_{r, R} := {\rm T}^{(\ell)}_{r, R}:=\mathds{1}_r \text P_\ell \big(\mathds{1} - \mathds{1}_R \big)\,,\quad {\rm T}_{r, 0}  :=  \mathds{1}_r \text P_\ell\,,\quad \ell\in \N_0\, .
\end{align}
Here we assume that the magnetic-field strength has been ``scaled out'', so that $B=1$ in formula \eqref{Landau_kernel:eq}. We interpret ${\rm T}_{r,R}$ as an operator from $\mathrm{L}^2(\mathbb R^2)$ into $\mathrm{L}^2({\sf D}(0,r))$.

In order to estimate the singular values of ${\rm T}_{r,R}$ we recall the short compilation below Lemma \ref{mleell:lem} and a classical result due to Birman and Solomyak, see \cite[Theorem 4.7]{BS}. We quote the required fact in a form adapted to our purposes.

\begin{prop}\label{BS:prop}
Let ${\rm Z}:\mathrm{L}^2(\mathbb R^2)\to \mathrm{L}^2({\sf D}(0,r))$ be an integral operator defined by a complex-valued kernel $\mathcal Z(x, y)$ obeying
\begin{align*}
N_\g(\mathcal Z) := \bigg[\sum_{0\le s, t\le \g}
\int_{\mathbb R^2} \mathrm{d} y\int_{{\sf D}(0, r)}\mathrm{d}x \,\bigg|\frac{\partial^s}{\partial x_1^s} \frac{\partial^{t}}{\partial x_2^{t}} \mathcal Z(x,y)\bigg|^2\bigg]^{\frac{1}{2}}<\infty\,,
\end{align*}
for some $\g\in \mathbb N_0$. Then the singular values $s_n({\rm Z})$ of ${\rm Z}$ satisfy the bound
\begin{align*}
s_n({\rm Z})\le C n^{-\frac{1+\g}{2}} N_\g(\mathcal Z)\,,\quad n\in\N\,,
\end{align*}
with a positive constant $C$ dependent on $r$ but independent of the kernel $\mathcal Z$.
\end{prop}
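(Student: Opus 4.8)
The plan is to reduce the claim to the construction of good finite-rank approximations of ${\rm Z}$ together with two elementary facts about singular values. First, recall the Weyl-type characterization $s_{N+1}({\rm Z}) = \min\{\|{\rm Z}-{\rm F}\| : \operatorname{rank}{\rm F}\le N\}$, and more generally the inequality $s_{N+n}({\rm Z})\le s_{N+1}({\rm F}) + s_n({\rm Z}-{\rm F}) = s_n({\rm Z}-{\rm F})$ whenever $\operatorname{rank}{\rm F}\le N$. Second, for any Hilbert--Schmidt operator ${\rm W}$ the monotonicity of singular values gives the trivial decay $n\, s_n({\rm W})^2 \le \sum_{k=1}^{n} s_k({\rm W})^2 \le \|{\rm W}\|_2^2$, hence $s_n({\rm W})\le n^{-1/2}\|{\rm W}\|_2$. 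Combining these, if for each scale we can produce an ${\rm F}$ of rank $N\sim M^2$ with $\|{\rm Z}-{\rm F}\|_2 \le C M^{-\g} N_\g(\mathcal Z)$, then choosing $n=N$ yields $s_{2N}({\rm Z}) \le N^{-1/2}\,\|{\rm Z}-{\rm F}\|_2 \le C\, M^{-(1+\g)} N_\g(\mathcal Z)$, and since $M\sim\sqrt{N}\sim\sqrt{n}$ this is exactly the asserted rate $n^{-(1+\g)/2}$. The extra half power beyond the naive $n^{-\g/2}$ that the operator-norm bound alone would give is precisely this free gain from the $L^2$-smallness of the remainder in the two $x$-dimensions.

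It remains to build the approximation ${\rm F}$. The idea is to approximate $\mathcal Z(\cdot,y)$ in the $x$-variable, uniformly in $y$, by objects living in a fixed finite-dimensional space. Cover a square containing ${\sf D}(0,r)$ by $\sim M^2$ congruent subsquares $Q_j$ of side $h\sim r/M$. On each $Q_j$ apply a Bramble--Hilbert (Deny--Lions) local polynomial approximation: there is a bounded linear projection onto polynomials of degree $<\g$ producing $p_{j,y}(x)$ with $\int_{Q_j}|\mathcal Z(x,y)-p_{j,y}(x)|^2\,{\rm d}x \le C h^{2\g}\sum_{s+t=\g}\int_{Q_j}\big|\partial_{x_1}^s\partial_{x_2}^t\mathcal Z(x,y)\big|^2\,{\rm d}x$. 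Because the projection is linear in the function and acts only in $x$, its coefficients are of the form $c_{j,\a}(y)=\int_{Q_j}\mathcal Z(x',y)\phi_{j,\a}(x')\,{\rm d}x'$ for fixed $\phi_{j,\a}$; hence the approximating kernel $P(x,y)=\sum_j 1_{Q_j}(x)\sum_{|\a|<\g}x^{\a}c_{j,\a}(y)$ is a finite sum of products of an $x$-function with a $y$-function, so the associated operator ${\rm F}$ has rank at most $C_\g M^2$. Summing the local estimates over $j$ and integrating over $y$ gives $\|{\rm Z}-{\rm F}\|_2^2 \le C h^{2\g} N_\g(\mathcal Z)^2 \le C M^{-2\g} N_\g(\mathcal Z)^2$, as required. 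Letting $M$ range over the integers and invoking monotonicity of $s_n$ to fill the indices between the values $2N_M$ then delivers the bound for every $n\in\N$.

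The main technical obstacle is the behavior near $\partial{\sf D}(0,r)$: the subsquares $Q_j$ meeting the boundary intersect the disk in pieces that are not full cubes, and the Bramble--Hilbert estimate with a constant uniform in $M$ (after rescaling to the unit cube) requires the approximation domains to satisfy a uniform cone condition. I would handle this by first extending each slice $\mathcal Z(\cdot,y)$ from ${\sf D}(0,r)$ to all of $\R^2$ by a single Stein extension operator ${\rm E}$ for the (smooth, bounded) disk: ${\rm E}$ is linear and bounded on $W^\g_2$ with norm independent of $y$, so that $\|{\rm E}\mathcal Z(\cdot,y)\|_{W^\g_2(\R^2)}\le C\|\mathcal Z(\cdot,y)\|_{W^\g_2({\sf D}(0,r))}$; applying the cube construction to ${\rm E}\mathcal Z$ on a fixed square and restricting back to ${\sf D}(0,r)$ preserves both the finite-rank structure (${\rm E}$ is independent of the kernel) and the estimate. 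An equivalent and arguably cleaner route replaces the local-polynomial step by truncation of the Fourier series of ${\rm E}\mathcal Z(\cdot,y)$ on a torus, where smoothness directly yields $\sum_{|k|>K}\int|\widehat{\mathcal Z}(k,y)|^2\,{\rm d}y \le K^{-2\g} N_\g(\mathcal Z)^2$ at rank $\sim K^2$, and the balancing argument of the first paragraph is then identical. Apart from this boundary bookkeeping, the only point to check is the uniform-in-$y$ measurability and boundedness of the coefficient functions $c_{j,\a}(\cdot)$ (respectively $\widehat{\mathcal Z}(k,\cdot)$), which is immediate since they arise from integrating the jointly measurable kernel against fixed $L^2$ functions of $x$.
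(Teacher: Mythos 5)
The paper offers no proof of Proposition \ref{BS:prop} to compare against: it is quoted, in adapted form, as a classical result of Birman and Solomyak (their Theorem 4.7 in the cited survey), so your proposal supplies an argument the paper deliberately outsources to the literature. Your proof is correct, and it is in essence a faithful reconstruction of the classical Birman--Solomyak technique: piecewise polynomial approximation of the kernel in the variable where smoothness is assumed, giving for each $M$ a rank-$O(M^2)$ operator ${\rm F}$ with $\|{\rm Z}-{\rm F}\|_2\le CM^{-\gamma}N_\gamma(\mathcal Z)$, followed by the conversion of Hilbert--Schmidt smallness of the remainder into singular-value decay via Ky Fan's inequality $s_{N+n}({\rm Z})\le s_{N+1}({\rm F})+s_n({\rm Z}-{\rm F})=s_n({\rm Z}-{\rm F})$ and the elementary bound $s_n({\rm W})\le n^{-1/2}\|{\rm W}\|_2$; balancing rank $N\sim M^2$ against the $L^2$ error indeed produces the rate $n^{-(1+\gamma)/2}$, which is exactly the extra half power that distinguishes this estimate from what an operator-norm approximation alone would give. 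The boundary difficulty you flag is genuine, and the Stein-extension remedy is the standard and correct one: because the extension ${\rm E}$ is a fixed bounded linear map $W^\gamma_2({\sf D}(0,r))\to W^\gamma_2(\R^2)$, every coefficient $c_{j,\a}(y)$ is a fixed bounded linear functional evaluated on the slice $\mathcal Z(\cdot,y)$, so the rank count and the uniformity of the constant (in $\mathcal Z$, with dependence only on $r$ and $\gamma$) both survive. Three loose ends are worth tightening but are not gaps: (i) after inserting ${\rm E}$ the coefficients are no longer literally integrals of the kernel against fixed $\mathrm L^2$ functions of $x$, so measurability of $y\mapsto c_{j,\a}(y)$ should instead be argued from weak measurability of $y\mapsto\mathcal Z(\cdot,y)$ in the separable space $W^\gamma_2({\sf D}(0,r))$ together with Pettis' theorem; (ii) the case $\gamma=0$ is degenerate (polynomials of degree $<0$ form the empty set), but then ${\rm F}=0$ and the claim reduces to the trivial Hilbert--Schmidt bound $s_n({\rm Z})\le n^{-1/2}N_0(\mathcal Z)$; (iii) for the finitely many $n$ below the first rank threshold one should invoke $s_n({\rm Z})\le\|{\rm Z}\|\le\|{\rm Z}\|_2=N_0(\mathcal Z)\le N_\gamma(\mathcal Z)$ and absorb the loss into the constant.
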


\begin{lemma} \label{sp:lem}
The operator ${\rm T}_{r, 0}$ belongs to the Schatten--von Neumann class $\mathfrak S_p$ for all $p>0$. Moreover, if $R >r$, then 
\begin{align}\label{sp:eq}
\| {\rm T}_{r, R}\|_p \le C \,\exp\big(-(R-r)^2/8\big)\,,
\end{align}
with some constant $C$ dependent on $r$ but independent of $R$.
\end{lemma}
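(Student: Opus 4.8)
The plan is to read off the integral kernels of ${\rm T}_{r,0}$ and ${\rm T}_{r,R}$ and feed them into the Birman--Solomyak bound of Proposition~\ref{BS:prop}, regarding both as integral operators from $\mathrm{L}^2(\R^2)$ into $\mathrm{L}^2({\sf D}(0,r))$. Since $B=1$ has been scaled out, the kernel of ${\rm T}_{r,R}$ is $\mathcal Z(x,y)=p_\ell(x,y)$ for $x\in{\sf D}(0,r),\ \|y\|\ge R$ and $\mathcal Z(x,y)=0$ for $\|y\|<R$ (with $\mathcal Z=p_\ell$ when $R=0$), the spatial cut-off $\mathds{1}_r$ being absorbed into the codomain. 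Everything then reduces to estimating the quantity $N_\g(\mathcal Z)$ of Proposition~\ref{BS:prop}, that is, the $\mathrm{L}^2$-norms of the $x$-derivatives of $p_\ell$ over $x\in{\sf D}(0,r)$ and the relevant range of $y$.

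First I would record the pointwise derivative bound. Differentiating
\[
p_\ell(x,y)=\frac1{2\pi}\,\exp(-\|x-y\|^2/4)\,\mathcal{L}_\ell(\|x-y\|^2/2)\,\exp(\mathrm{i}\tfrac12\langle x|\mathsf{J}y\rangle)
\]
in $x$, the derivatives of the Gaussian and of the Laguerre factor produce polynomials in $x-y$, while each $x$-derivative of the phase produces a factor linear in $y$ (since $\langle x|\mathsf{J}y\rangle=x_1y_2-x_2y_1$). Hence for every $s,t$ there is a polynomial $P_{s,t}$ with
\[
\Big|\frac{\partial^{s}}{\partial x_1^{s}}\frac{\partial^{t}}{\partial x_2^{t}}p_\ell(x,y)\Big|\le P_{s,t}(x,y)\,\exp(-\|x-y\|^2/4).
\]
Because $x$ ranges only over the bounded disk ${\sf D}(0,r)$, the elementary estimate $\|y\|\le\|x-y\|+r$ lets me bound $P_{s,t}(x,y)$ by a polynomial $Q_{s,t}(\|x-y\|)$ whose coefficients depend on $r,\ell,s,t$ but not on $x,y$. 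This is the one genuinely delicate point: the phase contributes growth in $y$ that is a priori not damped, and it is only the confinement of $x$ to a bounded set, together with the Gaussian decay in $x-y$, that tames it.

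For the first assertion I take $R=0$, so $\mathcal Z=p_\ell$ and the $y$-integral runs over all of $\R^2$. The derivative bound together with $P_{s,t}\le Q_{s,t}(\|x-y\|)$ makes each summand of $N_\g(\mathcal Z)^2$ finite for every $\g\in\N_0$, since the Gaussian dominates any polynomial. Proposition~\ref{BS:prop} then gives $s_n({\rm T}_{r,0})\le C_\g\, n^{-(1+\g)/2}$; choosing $\g$ so large that $p(1+\g)/2>1$ makes $\sum_n s_n({\rm T}_{r,0})^p<\infty$, i.e. ${\rm T}_{r,0}\in\mathfrak S_p$ for every $p>0$.

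For the decay estimate I take $R>r$, so in $\mathcal Z$ the variable $y$ is confined to $\|y\|\ge R$ while $x\in{\sf D}(0,r)$; there $\|x-y\|\ge\|y\|-\|x\|\ge R-r$. Splitting $\exp(-\|x-y\|^2/2)=\exp(-\|x-y\|^2/4)\exp(-\|x-y\|^2/4)$ in the modulus-squared of each derivative and bounding one factor by $\exp(-(R-r)^2/4)$ on this region, I extract
\[
N_\g(\mathcal Z)^2\le \exp(-(R-r)^2/4)\sum_{s+t\le\g}\int_{{\sf D}(0,r)}\!\mathrm{d}x\int_{\R^2}\!\mathrm{d}y\;Q_{s,t}(\|x-y\|)^2\,\exp(-\|x-y\|^2/4),
\]
where the remaining integral is finite and independent of $R$ (extending the $y$-range to all of $\R^2$ only enlarges it). Hence $N_\g(\mathcal Z)\le C\exp(-(R-r)^2/8)$ with $C=C(r,\g)$. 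Feeding this into Proposition~\ref{BS:prop} and summing, for $\g$ large enough that $\sum_n n^{-p(1+\g)/2}<\infty$, gives $\|{\rm T}_{r,R}\|_p^p\le C^p\exp(-p(R-r)^2/8)\sum_n n^{-p(1+\g)/2}$; taking $p$-th roots yields the claimed bound \eqref{sp:eq}. The main obstacle is the derivative estimate of the second paragraph; once it is in place, both assertions follow by routine convergence bookkeeping.
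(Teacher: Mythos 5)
Your proposal is correct and follows essentially the same route as the paper's own proof: bound the $x$-derivatives of $p_\ell(x,y)$ by a polynomial times the Gaussian (using the boundedness of $x\in{\sf D}(0,r)$ to tame the linear-in-$y$ factors coming from the phase), split off $\exp\big(-(R-r)^2/8\big)$ from the Gaussian when $\|x-y\|\ge R-r$, and feed the resulting bounds on $N_\g$ into Proposition~\ref{BS:prop} with $\g>2p^{-1}-1$. The only cosmetic difference is that you absorb the $y$-growth into a polynomial $Q_{s,t}(\|x-y\|)$ explicitly, whereas the paper keeps the factor $(1+\|y\|)^\g$ in its pointwise bound and absorbs it at the integration stage; the substance is identical.
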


\begin{proof} In order to apply Proposition \ref{BS:prop}, we estimate for $\|x\| < r$ and $s,t\in\N_0$ with $0\le s,t\le\gamma$:
\begin{align*}
\bigg|\frac{\partial^s}{\partial x_1^s} \frac{\partial^{t}}{\partial x_2^{t}} p_\ell(x, y)\bigg|\le C_{\gamma, \ell} (1+\|y\|)^\gamma (1+ \|x-y\|^2)^\ell \exp\big(-\|x-y\|^2/4\big)\,,
\end{align*} 
with a constant $C_{\gamma, \ell}$ depending on $r$. Here we have used the fact that $\mathcal{L}_\ell$ is a polynomial of degree $\ell$. For $R > r$ and $y\notin {\sf D}(0, R)$ we conclude that 
\begin{align*}
\bigg|\frac{\partial^s}{\partial x_1^s} \frac{\partial^{t}}{\partial x_2^{t}} p_\ell(x, y)\bigg|\le C_{\gamma, \ell} (1+\|y\|)^\gamma (1+ \|x-y\|^2)^\ell \,\exp\big(-(R-r)^2/8\big)\,\exp\big(-\|x-y\|^2/8\big)\,.
\end{align*} 
Thus the integral kernel 
$\mathcal T_{r, R}(x,y) := 1_r(x) p_\ell(x, y) (1 - 1_R(y))$ 
of ${\rm T}_{r, R} $ satisfies, for any  $\g\in \mathbb N$, the bounds
\begin{align*}
N_\g(\mathcal T_{r, R})\le &\  C_{\g, \ell}\,, \ {\rm for \ any}\ R\ge 0\,,\\
N_\g(\mathcal T_{r, R})\le &\  C_{\g, \ell} \,\exp\big(-(R-r)^2/8\big)\,,\ {\rm if} \ R>r\,,
\end{align*}
where the constant $C_{\g, \ell}$ is independent of $R$. For an arbitrary $p>0$ we now take $\g>2p^{-1}-1$. 
Then by Proposition \ref{BS:prop}, ${\rm T}_{r, R}\in \mathfrak S_p$ for all $R\ge 0$ 
and the bound \eqref{sp:eq} holds for $R>r$.
\end{proof}

Lemma \ref{sp:lem} is an important ingredient to bound quasi-norms of the operator $\mathds{1}_{L\L}\text P_\ell (\mathds{1}-\mathds{1}_{L\L})$. Although our main result Theorem \ref{entropy of ground state} is proved for a bounded $\mathsf C^3$-region $\Lambda$, the next theorem even holds for a bounded Lipschitz region. Here, the boundary curve $\partial\Lambda$ of $\L$ is Lipschitz continuous. 

\begin{thm}\label{Hank:thm}
Let $\Lambda\subset \mathbb R^2$ be a bounded Lipschitz region and $\ell\in\N_0$. Moreover, let $p\in {(0, 1]}$ and $L_0 >0$ finite. Then there exists a constant $C$, depending only on $\L$ and $L_0$, such that for any $L\ge L_0$,
\begin{align}\label{Hank:eq}
\|\mathds{1}_{L\Lambda} {\rm P}_\ell (\mathds{1}-\mathds{1}_{L\Lambda})\|_p^p \le CL\,.
\end{align}
\end{thm}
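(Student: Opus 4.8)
The plan is to localize the operator $A:=\mathds{1}_{L\L}{\rm P}_\ell(\mathds{1}-\mathds{1}_{L\L})$ on a fixed length scale and to feed each localized piece into Lemma~\ref{sp:lem}. Fix a radius $r>0$ independent of $L$ and tile $\R^2$ by congruent half-open squares $\{Q_k\}$ of diameter $\le r$ centred at points $x_k$, so that $\sum_k\mathds{1}_{Q_k}=\mathds{1}$ and $Q_k\subset{\sf D}(x_k,r)$. With $A_k:=\mathds{1}_{Q_k}A$ we have $A=\sum_kA_k$, and since $p\le1$ the $p$-triangle inequality \eqref{p-tri:eq} yields $\|A\|_p^p\le\sum_k\|A_k\|_p^p$. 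The task is thus to bound $\|A_k\|_p$ cell by cell and to show that the non-negligible cells contribute $\mathcal{O}(L)$ in total.

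We distinguish three kinds of cells. If $Q_k\cap L\L=\varnothing$, then $\mathds{1}_{Q_k}\mathds{1}_{L\L}=0$ and $A_k=0$. If $Q_k$ meets the boundary curve $\p(L\L)$, we discard the outer factor: by the H\"older-type inequality \eqref{holder} (the operator norm of $\mathds{1}-\mathds{1}_{L\L}$ being $\le1$) and $\mathds{1}_{Q_k}\le\mathds{1}_{{\sf D}(x_k,r)}$ we get $\|A_k\|_p\le\|\mathds{1}_{{\sf D}(x_k,r)}{\rm P}_\ell\|_p=\|{\rm T}_{r,0}\|_p<\infty$ by the first assertion of Lemma~\ref{sp:lem}. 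Because $\p\L$ is rectifiable of finite length, the number of such boundary cells is $\mathcal{O}(L)$, so they contribute at most $\mathcal{O}(L)$ to $\sum_k\|A_k\|_p^p$. Finally, if $Q_k\subset L\L$ then $\mathds{1}_{Q_k}\mathds{1}_{L\L}=\mathds{1}_{Q_k}$; writing $\rho_k:=\mathrm{dist}(x_k,\p(L\L))$ and using ${\sf D}(x_k,\rho_k)\subset L\L$ to insert the factor $\mathds{1}-\mathds{1}_{{\sf D}(x_k,\rho_k)}$ to the left of $\mathds{1}-\mathds{1}_{L\L}$, the inequality \eqref{holder} together with \eqref{sp:eq} of Lemma~\ref{sp:lem} gives, for $\rho_k>r$,
\[
\|A_k\|_p\le\big\|\mathds{1}_{{\sf D}(x_k,r)}{\rm P}_\ell\big(\mathds{1}-\mathds{1}_{{\sf D}(x_k,\rho_k)}\big)\big\|_p\le C\,\e^{-(\rho_k-r)^2/8}.
\]
The constant here is independent of the centre $x_k$, since $|p_\ell(x,y)|$ and all the kernel estimates underlying Lemma~\ref{sp:lem} are translation invariant (equivalently, a magnetic translation commuting with ${\rm P}_\ell$ maps the operator unitarily to its origin-centred version).

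It remains to prove $\sum_{Q_k\subset L\L,\ \rho_k>r}\e^{-p(\rho_k-r)^2/8}\le CL$. Grouping the interior cells by $j:=\lfloor\rho_k\rfloor$ and setting $N_j:=\#\{k:Q_k\subset L\L,\ \rho_k\in[j,j+1)\}$, it suffices to show $\sum_jN_j\,\e^{-p(j-r)^2/8}\le CL$. This is where the geometry of $\L$ enters. Since $\mathrm{dist}(x,\p(L\L))=L\,\mathrm{dist}(x/L,\p\L)$, the centres counted by $N_j$ lie in the inner collar $\{x\in L\L:\mathrm{dist}(x,\p(L\L))<j+1\}$, whose area equals $L^2\,\big|\{y\in\L:\mathrm{dist}(y,\p\L)<(j+1)/L\}\big|$. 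A bounded Lipschitz region has finite inner Minkowski content, so this last area is at most $C_\L(j+1)/L$ once $(j+1)/L$ lies below a threshold fixed by $\L$ (and is at most $|\L|$ otherwise). As the cells are $\sim r$-separated this gives $N_j\le C_\L L(j+1)$ for the relevant range of $j$, whence $\sum_jN_j\,\e^{-p(j-r)^2/8}\le C_\L L\sum_j(j+1)\,\e^{-p(j-r)^2/8}\le CL$, the Gaussian weight absorbing the linear factor and rendering the deep-interior terms with $j\sim L$ negligible. Collecting the three contributions proves \eqref{Hank:eq}.

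The operator-theoretic input is routine: it is just the $p$-triangle inequality \eqref{p-tri:eq}, the H\"older bound \eqref{holder}, and the already established decay \eqref{sp:eq}. The genuine obstacle is the geometric counting above — controlling, uniformly in $L$, the number of interior cells at each prescribed distance from $\p(L\L)$. This is precisely where the Lipschitz hypothesis is indispensable, entering through the finiteness of the inner Minkowski content of $\L$; the role of $L_0$ is only to guarantee that the fixed cell scale $r$ is small relative to $L\L$, so that the classification and the collar estimate hold with constants depending on $\L$ and $L_0$ alone.
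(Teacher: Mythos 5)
Your argument is correct in substance and rests on the same two operator-theoretic pillars as the paper's proof --- the $p$-triangle inequality \eqref{p-tri:eq} and the bounds of Lemma \ref{sp:lem}, transported to arbitrary centres by magnetic translations --- but it implements the geometry differently. The paper first covers $\L$ by finitely many disks of two types (boundary disks, inside which $\p\L$ is a Lipschitz graph, and strictly interior disks), rescales, and inside each boundary disk lays down a lattice of unit disks with a smooth partition of unity; the decisive counting is then done in the local graph coordinates: the distance from a lattice point to $L\L^\complement$ grows linearly in the vertical lattice index (estimate \eqref{rjk:eq}), so each column contributes a convergent Gaussian sum $\sum_k \e^{-ck^2}$, and there are only $\mathcal{O}(L)$ columns. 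You instead tile the plane once by congruent cells at a fixed scale and count globally: cells meeting $\p(L\L)$ number $\mathcal{O}(L)$ by rectifiability of the boundary, and for interior cells at distance $\approx j$ from the boundary you bound their number by $N_j\le C_\L L(j+1)$ via the collar-area (finite inner Minkowski content) property of Lipschitz domains, letting the Gaussian weight absorb the linear factor; the deep-interior cells with $j\sim L$ are killed outright by the Gaussian. Your route is more streamlined --- no partition of unity, no chart-by-chart analysis --- at the price of importing the standard facts that a Lipschitz boundary curve is rectifiable and that the inner $\varepsilon$-collar of a bounded Lipschitz region has area $\mathcal{O}(\varepsilon)$; the paper in effect re-derives exactly this information from the graph representation, keeping the proof self-contained.

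One small patch is needed in your case analysis. The Gaussian bound for interior cells is stated only for $\rho_k>r$, so cells with $Q_k\subset L\L$ but $\rho_k\le r$ (cells wholly inside $L\L$ whose centres lie within distance $r$ of $\p(L\L)$) are covered by none of your three estimates. They are harmless: for them $\|A_k\|_p\le\|{\rm T}_{r,0}\|_p\le C$ exactly as for the boundary cells, and their number is $\mathcal{O}(L)$ by the very collar estimate you already invoke (their centres lie in the inner collar of width $r$, of area $\le C_\L Lr$). A second, cosmetic, point: for the finitely many indices $j<r$ the majorization $\e^{-(\rho_k-r)^2/8}\le\e^{-(j-r)^2/8}$ can fail (take $\rho_k\in(r,2r)$); there one simply bounds the exponential by $1$ and uses $N_j\le C_\L L(j+1)\le C_\L L(r+1)$. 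With these two lines added, your proof is complete.
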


\begin{proof} 
We begin with two useful observations. Firstly, we note that
$\|\mathds{1}_{L\Lambda} {\rm P}_\ell (\mathds{1}-\mathds{1}_{L\Lambda})\|_p\le \|\text T_{Lr, 0}\|_p$ by \eqref{holder} with some $r >0$, where the operator $\text T_{r, 0}$ is defined in \eqref{trr:eq}. 
Thus, by Lemma \ref{sp:lem}, for every fixed $L$ the left-hand side of the claim \eqref{Hank:eq} is finite. 
Consequently, it suffices to prove \eqref{Hank:eq} for $L\ge L_0$ with an arbitrary choice of the finite $L_0$. 

Secondly, denoting $\Lambda' := L_0\L$, we can rewrite \eqref{Hank:eq} as follows:
\begin{align*}
\|\mathds{1}_{L\Lambda'} {\rm P}_\ell (\mathds{1}-\mathds{1}_{L\Lambda'})\|_p^p \le CL\,, \quad L\ge 1\,,
\end{align*}
where $C$ depends on $\L$ and the arbitrary $L_0$. 

Now we can proceed with the proof. We cover $\L$ by finitely many disks ${\sf D}(x_k, r_k)$ 
such that either
\begin{enumerate}
\item 
$x_k\in \p\L$ and inside each disk ${\sf D}(x_k, 8r_k)$, with an appropriate choice of coordinates, the domain $\L$ is given locally by the epigraph of a Lipschitz function (see below), or 
\item 
${\sf D}(x_k, r_k)\subset \L$ and ${\rm dist}\big({\sf D}(x_k, r_k), \L^\complement\big)>0$, where $\Lambda^\complement:=\R^2\setminus\Lambda$ denotes the complement of $\Lambda$.
\end{enumerate}
It is clear that we may assume that all radii $r_k$ are equal to each other. Moreover, by replacing $\L$ with $\L' = L_0\Lambda$ with $L_0 = r_k^{-1}$, we may assume that $r_k=1$. This equality holds throughout the proof.

\underline{Case (1):}
We fix one disk ${\sf D} := {\sf D}(x_k, 1)\subset \mathbb R^2$, $x_k\in\p\L$, 
and denote $\widetilde{\sf D} := {\sf D}(x_k, 8)$. 
Let 
$\Phi: \mathbb R\to \mathbb R$ be a Lipschitz function such that
\begin{align*}
\L\cap \widetilde{\sf D} = \{x = (x', x'')\in\mathbb R^2: x'' > \Phi(x')\}\cap \widetilde{\sf D}\,.
\end{align*} 
By $M\ge 0$ we denote the Lipschitz constant for $\Phi$, i.e. 
\begin{align*}
|\Phi(t) - \Phi(u)|\le M|t-u|\,,\quad t, u\in\mathbb R\,.
\end{align*}
It is clear that
\begin{align*}
(L\L\cap L{\sf D})\subset (L\L\cap L\widetilde{\sf D}) = \{x = (x', x''): x'' > \Phi_L(x')\}\cap L\widetilde{\sf D}\,,
\quad \Phi_L(t):= L\Phi(tL^{-1})\,, \quad t\in \mathbb R\,,
\end{align*}
and that the Lipschitz constant for $\Phi_L$ also equals $M$.  
Without loss of generality we may assume that ${\sf D} = {\sf D}(0, 1)$ and $\Phi(0) = 0$. 

Now we construct a  
covering of $L\L\cap L{\sf D}$ by open disks. 
Let ${\sf D}_{jk}$ be a disk of radius $1$, centered 
at the point $z_{j, k} := (j/2, k/2)\in L{\sf D}$, $(j, k)\in\mathbb Z^2$. 
Clearly, such disks form an open covering for $L{\sf D}$. 
To extract a convenient covering for $L\L\cap L{\sf D}$, we define two index sets: 
\begin{align*}
I_1 := &\ \{(j, k)\in \mathbb Z^2: k/2 \ge \Phi_L(j/2)+ 2\langle  M\rangle,\ z_{jk}\in L{\sf D}\}\,,
\quad \langle M\rangle := \sqrt{1+M^2}\,,\\ 
I_2 := &\ \{(j, k)\in \mathbb Z^2: |k/2 - \Phi_L(j/2)| < 2\langle  M\rangle,\ z_{jk}\in L{\sf D}\}\,.
\end{align*}
Since $\Phi$ is Lipschitz, the number of indices in $I_2$ obeys $|I_2| \le CL$ with a constant independent of $L$. 
The disks ${\sf D}_{jk}$, $(j, k)\in I_1\cup I_2$ form a covering of the intersection $L\Lambda\cap L{\sf D}$. 
Observe that for every point $x = (x', x'')\in L\L\cap L{\sf D}$ we have
\begin{align*}
{\rm dist}(x, L\L^\complement)\le \min\big\{
|\Phi_L(x') - x''|, \ {\rm dist}(x, {\widetilde{\sf D}}^\complement) \big\} = |\Phi_L(x') - x''|\,,
\end{align*}
and 
\begin{align*}
{\rm dist}(x, L\L^\complement)\ge \min\big\{
\langle M\rangle^{-1}|x''-\Phi_L(x')|,\ 
{\rm dist}(x, {\widetilde{\sf D}}^\complement) \big\} = 
\langle M\rangle^{-1}|x''-\Phi_L(x')|\,,
\end{align*}
so that 
\begin{align}\label{rjk:eq}
R_{jk}:= {\rm dist} (z_{jk}, L\L^\complement)\ge \langle M\rangle^{-1} \big(k/2 - \Phi_L(j/2)\big) \ge 2\,,\ (j, k)\in I_1\,.
\end{align}
Let $(\varphi_{jk})_{jk}\subset \mathsf C^\infty_0(\mathbb R^2)$ be a partition of unity subordinate to the constructed covering. In the following, we use a superposed hat for the (bounded) multiplication operator $\widehat\varphi$ on $\text L^2(\R^2)$ uniquely corresponding to $\varphi\in \mathsf C^\infty_0(\mathbb R^2)$. We estimate individually the quasi-norms

\begin{align*}
\|\widehat\varphi_{jk} \mathds{1}_{L\L}\text P_\ell (\mathds{1}-\mathds{1}_{L\L})\|_p\,, 
\end{align*}
for $p\in (0, 1]$. Let us consider firstly the set $I_2$. For all $(j, k)\in I_2$ we have  

\begin{align*}
\|\widehat\varphi_{jk} \mathds{1}_{L\L}\text P_\ell (\mathds{1}-\mathds{1}_{L\L})\|_p\le \|\widehat\varphi_{jk} \mathds{1}_{L\L}\text P_\ell\|_p \le \|\mathds{1}_{{\sf D}_{jk}}\text P_\ell\|_p\,.
\end{align*}
The first inequality follows from \eqref{holder}. The second inequality holds since 
$\varphi_{jk} 1_{L\L} \le 1_{{\sf D}_{jk}}$ and hence 
$\widehat\varphi_{jk} \mathds{1}_{L\L}\text P_\ell \mathds{1}_{L\L}
\widehat\varphi_{jk} \le \mathds{1}_{{\sf D}_{jk}} \text P_\ell \mathds{1}_{{\sf D}_{jk}}$. 
Using the standard unitary equivalence of the Hamiltonian \eqref{landau} under ``magnetic'' translations, we conclude that the right-hand side coincides with $\|\mathds{1}_{\sf D} \text P_\ell\|_p$, where ${\sf D} = {\sf D}(0, 1)$, as before. By Lemma \ref{sp:lem}, this norm is bounded for all $p>0$, and hence 
\begin{align*}
\|\widehat\varphi_{jk} \mathds{1}_{L\L}\text P_\ell (\mathds{1}-\mathds{1}_{L\L})\|_p\le C 
\end{align*} 
uniformly in $j$ and $k$. 
Applying the $p$-triangle inequality \eqref{p-tri:eq} for quasi-norms, we get 
\begin{align*}
\sum_{(j,k)\in I_2} \|\widehat \varphi_{jk} \mathds{1}_{L\L}\text P_\ell (\mathds{1}-\mathds{1}_{L\L})\|^p_p
\le C|I_2|\le CL\,.
\end{align*}
Suppose now that $(j, k)\in I_1$. Thus
\begin{align*}
\|\widehat\varphi_{jk} \text P_\ell (\mathds{1}-\mathds{1}_{L\L})\|_p\le \|\mathds{1}_{{\sf D}_{jk}}\text P_\ell (\mathds{1}-\mathds{1}_{{\sf D}(z_{jk}, R_{jk})})\|_p
\end{align*}
with $R_{jk} = {\rm dist}(z_{jk}, L{\L^\complement})\ge 2$. By the translation argument, the right-hand side coincides with 
$\| {\rm T}_{1, R_{jk}}\|_p$, where the operator ${\rm T}_{1, R}$ is defined in \eqref{trr:eq}. Consequently, by \eqref{sp:eq}, 
\begin{align*}
\|\widehat\varphi_{jk} \text P_\ell (\mathds{1}-\mathds{1}_{L\L})\|_p\le C\,\exp\big(-(R_{jk}-1)^2/8\big)\,.
\end{align*}
Using the $p$-triangle inequality again, we obtain that
\begin{align}\label{sumkj:eq}
\sum_{(j, k)\in I_1}\|\widehat\varphi_{jk} \text P_\ell (\mathds{1}-\mathds{1}_{L\L})\|^p_p\le C \sum_{(j,k)\in I_1} \exp\big(-p (R_{jk}-1)^2/8\big)\,.
\end{align}
Employing \eqref{rjk:eq}, for any fixed $j$, the summation over $k$ yields the estimate
\begin{align*}
\sum_{(j,k)\in I_1, \,j \ \rm{fixed}} \exp\big(-p(R_{jk}-1)^2/8\big)\le \sum_{k\in\mathbb Z} \e^{-c k^2} = C\,.
\end{align*}
Since $|j|\le 2L$, the right-hand side of \eqref{sumkj:eq} does not exceed $CL$. Putting these estimates together we obtain
\begin{align}\label{prom:eq}
\|\mathds{1}_{L{\sf D}} \mathds{1}_{L\L} \text P_\ell (\mathds{1}-\mathds{1}_{L\L})\|_p^p
\le \sum_{(j,k)\in I_1\cup I_2} \|\widehat\varphi_{jk} \text P_\ell (\mathds{1}-\mathds{1}_{L\L})\|^p_p\le CL\,.
\end{align}

\underline{Case (2):} We fix one disk ${\sf D} = {\sf D}(x_k, r_k)$ such that ${\rm dist}({\sf D}, \L^\complement) \ge c$, so that ${\rm dist}(L{\sf D}, L\L^\complement) \ge cL$. We cover $L{\sf D}$ by unit disks ${\sf D}_j$, $j=1,\ldots,N$ with $N\le CL^2$ and ${\rm dist}({\sf D}_j, L\L^\complement)\ge cL$. As in the proof of Case (1), we introduce a smooth partition of unity $(\varphi_j)_j\subset \mathsf C^\infty_0(\mathbb R^2)$ subordinate to this covering, and estimate:
\begin{align*}
\|\widehat\varphi_j \text P_\ell (\mathds{1}-\mathds{1}_{L\L})\|_p^p\le C \e^{-cL^2}\,.
\end{align*} 
Consequently, by the $p$-triangle inequality \eqref{p-tri:eq}
  
\begin{align}\label{ins:eq}
\|\mathds{1}_{L{\sf D}}\text P_\ell (\mathds{1}-\mathds{1}_{L\L})\|_p^p\le \sum_{j}\|\widehat\varphi_{j} \text P_\ell (\mathds{1}-\mathds{1}_{L\L})\|^p_p
\le C \sum_{j} \e^{-c L^2}\le CL^2 \e^{-cL^2}\le C \e^{-c'L^2}\,.
\end{align}
To complete the proof we add the estimates of the form \eqref{prom:eq} and \eqref{ins:eq}  
for all disks covering $\L$, using the $p$-triangle inequality again. 
\end{proof}

\subsection{Proofs of Lemma \ref{lem:finite} and Theorem \ref{entropy of ground state}}\label{proof}
We use the bound 
\begin{align}\label{ha:eq}
0\le h_\a(t)\le C_\a t^\b(1-t)^\b\,,\quad t\in [0, 1]\,,
\end{align} 
with a positive constant $C_\a<\infty$. Here we choose $\b = \a$ if $\a < 1$, any $\b\in(0,1)$ if $\a = 1$, 
and $\b =1$ if $\a >1$. Since $\L$ is bounded, we have $\L\subset {\sf D}(0, r)$ with some $r >0$. This leads to the estimate
\[
S_\alpha(\Lambda) \le C_\a \tr [\text P_{\le\nu}(\L)^{\b}(\mathds{1}-\text P_{\le \nu}(\L))^{\b}]
\le C_\a \|\mathds{1}_{\L} \text P_{\le\nu}(\mathds{1}-\mathds{1}_\L)\|_\b^{\b}
\le C_\a \|\mathds{1}_{r} \text P_{\le\nu}\|_\b^{\b}\,.
\]
Using the $\b$-triangle inequality and Lemma \ref{sp:lem}, we obtain the estimate (see \eqref{trr:eq} for the definition of $\text T_{r, 0}^{(\ell)}$)
\begin{align*}
S_\alpha(\Lambda)\le C_\a\sum_{\ell =0}^{\nu}                      
\| \text T_{r, 0}^{(\ell)}\|_\b^{\b} < \infty\,,
\end{align*}
as claimed. This proves Lemma \ref{lem:finite}.
\qed

Now we prove Theorem \ref{entropy of ground state}. Since $h_\a$ satisfies the bound \eqref{ha:eq}, it follows from Lemma \ref{mleell:lem} that the coefficient ${\sf M}_{\le \nu}(h_\a)$ is finite. The positivity of ${\sf M}_{\le \nu}(h_\a)$ is obvious from $h_\a\ge0, \mathcal K_{\nu,\xi}\ge 0$, and $h_\a(1)=0$.
If $\a >1$, then the function $h_\a$ is $\mathsf C^1$-smooth on ${[0,1}]$, and hence the claim follows immediately from Theorem \ref{thm:ground state smooth f}. 

In order to prove \eqref{main} for $\a\le 1$ we follow the method of \cite{LSS1}. For $\varepsilon >0$ we choose a smooth function $\zeta_\varepsilon$ such that $0\leq \zeta_\varepsilon\leq 1$ and 
\begin{align*}
\zeta_\varepsilon(t)
=\begin{cases}
1 &\mbox{ if }t\in [0,\varepsilon/2]\cup [1-\varepsilon/2,1]\,,\\
0 &\mbox{ if }t\in [\varepsilon,1-\varepsilon]\,.
\end{cases} 
\end{align*}
In view of estimate \eqref{ha:eq}, we have 
\begin{align}\label{zetah:eq}
|(\zeta_\varepsilon h_\a)(t)| 
\le C \varepsilon^r [t(1-t)]^r,\ \quad r := \frac{\b}{2}\,,
\end{align}
and hence
\begin{align}\label{intermediate}
\|(\zeta_\varepsilon h_\a)( \text P_{\le \nu} (L\L))\|_1
\le C \varepsilon^r
\|\text P_{\le \nu}(L\L)\big(\mathds{1}-\text P_{\le \nu}(L\L)\big)\|_r^r
= C\varepsilon^r \|\mathds{1}_{L\L}\text P_{\le \nu} \big(\mathds{1}-\mathds{1}_{L\L}\big)\|_{2r}^{2r}\,.
\end{align}
By Theorem \ref{Hank:thm}, the quasi-norm on the right-hand side does not exceed 
$CL$. Consequently, 
\begin{align}\label{esttrzetaepsh}
\big|\tr\big[(\zeta_\varepsilon h_\a)
(\text P_{\le \nu}(L\L))\big]\big|\le C\varepsilon^r L\,.
\end{align}
On the other hand, the function $(1-\zeta_\varepsilon)h_\a$ 
vanishes in a vicinity of $0$ and $1$ and, therefore, by Theorem \ref{thm:ground state smooth f},  
we have 
\begin{align}\label{esttrheps}
\tr\big[(1-\zeta_\varepsilon)h_\a (\text P_{\le \nu}(L\L))\big] 
= L\sqrt B |\p\L| {\sf M}_{\le \nu}((1-\zeta_\varepsilon)h_\a) + o(L)\,, \ L\to\infty\,.
\end{align}
By \eqref{zetah:eq}, 
\begin{align}\label{estwheps}
{\sf M}_{\le \nu}(h_\a)-{\sf M}_{\le \nu}((1-\zeta_\varepsilon)h_\a)
= {\sf M}_{\le \nu}(\zeta_\varepsilon h_\a)
\le C\varepsilon^r {\sf M}_{\le \nu}(\tilde{g}^r)\,,\quad \tilde{g}(t) = t(1-t)\,.
\end{align} 
Combining 
\eqref{esttrzetaepsh}, \eqref{esttrheps}, and \eqref{estwheps} gives
\begin{align*}
\limsup\limits_{L\to\infty}\Big|\frac{\tr h_{\alpha }(\text P_{\le \nu}(L\L))}{L}-
\sqrt B |\p\L|{\sf M}_{\le \nu}(h_\a)\Big|\le C\varepsilon^r\,.
\end{align*}
Since $\varepsilon>0$ is arbitrary, this yields the claim. 
\qed

\section{On an improvement to sub-leading terms}

Without going into details, we show here how one can improve the asymptotic expansions in Lemmata \ref{lem:moments} and \ref {lem:general moments} to 
\begin{align} \label{2 term}
\tr f(\text P_\ell(L\L)) &= L^2B\, \frac{|\L|}{2\pi}\,f(1) + L\sqrt{B} \, |\p\L|\, {\sf M}_\ell(f) \, + o(1)\,,
\\\label{gen: 2 term}
\tr f(\text P_{\le n}(L\L)) &= L^2B\,\frac{|\L|}{2\pi}\,(n+1)f(1) + L\sqrt{B} \, |\p\L|\, {\sf M}_{\le n}(f)\, + o(1)\, ,
\end{align}
where $f(t) = t^n, n = 1, 2, \dots$. This follows again by proving the corresponding statements for all natural powers of $\text P_\ell(L\L)$. Here, we use the expansion of Roccaforte in Proposition \ref{Roccaforte} up to the second order $\varepsilon^2$. For any vector $y =  -z \mathsf{J}n_x + tn_x\in\R^2$, written in the form \eqref{zt coordinates}, we have $\|y\|^2 - 2\,\langle y| n_x\rangle^2 = z^2 - t^2$. Then the $\mathcal{O}(\sum_j |y_j|^2)$-term in \eqref{20} takes the form
\begin{align} \label{58} 
\sum_{q=1}^{m-1}\int_{\mathcal S_q} \text dA(x) \,\kappa(x)&\int_{\R^{m-1}}\text d\mathbf{z} \int_{\R^{m-1}}\text d\mathbf{t} \,\Big[\big(\sum_{j=1}^q z_j\big)^2 -\big(\sum_{j=1}^q t_j\big)^2\Big] \,\exp(\mathrm{i}\mfr{B}{2}\langle \mathbf{z}|\mathsf{S}\mathbf{t}\rangle)\notag
\\&\times\,\prod_{j=1}^{m} \mathcal{L}_\ell((z_j^2+t_j^2)/2)g(z_j) g(t_j) + o(1)\,.
\end{align}
If we exchange the $\mathbf z$ and $\mathbf t$ variables, the integrand is seen to be almost anti-symmetric except for the sign in the exponent. This can be remedied by changing, for instance, $\mathbf t$ to $-\mathbf t$. Hence, the integral in \eqref{58} vanishes by symmetry.

\medskip
At present we do not know how to extend the expansions \eqref{2 term} and \eqref{gen: 2 term} to the entropy function $f=h_\a$, but believe that the corresponding term vanishes also in this case. In other words, there is zero topological (entanglement) entropy, see \cite{KP,LW}. 

\begin{appendix}

\section{Roccaforte's formula for the area of intersections}\label{App:R}

We recall Roccaforte's results in \cite[Corollary 2.4, Lemma 2.5]{R} for the special case of the Euclidean plane. Since the boundary curve $\p\L$ is a one-dimensional manifold, the second fundamental form of $\p\L$ is just its curvature. Therefore, his formula takes the simpler form given in Proposition \ref{Roccaforte} below.

In our application we scale out $\varepsilon$ and identify $L\sqrt{B}=1/\varepsilon$. For given points/vectors $v_1,\ldots,v_r$ in $\R^2$ we denote by $\L_\varepsilon := \L\cap(\L+\varepsilon v_1) \cap \cdots\cap(\L+\varepsilon v_r)$ the intersection of $\L$ with its $r$ translates.

\begin{prop} \label{Roccaforte} 
Let $\L\subset\R^2$ be a bounded $\mathsf C^3$-region. Then for arbitrary $(v_1,\ldots,v_r)\in\R^{2}\times\cdots\times\R^2$, we have
\begin{align} 
\bigg||\L\setminus\L_\varepsilon| -\varepsilon\int_{\p\L} \mathrm{d} A(x)\, \max\big\{0,\langle v_1| n_x\rangle,\ldots,\langle v_r|n_x\rangle \big\}\bigg|\le C\varepsilon^2 \sum_{j=1}^r \|v_j\|^2,
\end{align}
with some constant $C$ independent of $\varepsilon>0$ or $v_1, v_2, \dots, v_r$, 
where $n_x$ is the inward unit normal to the curve $\p\L$. 

Moreover, for all $(v_1,\ldots,v_r)\in\R^{2}\times\cdots\times\R^2$ except for a set 
of $2r$-dimensional Lebesgue measure zero, we have
\bea |\L\setminus\L_\varepsilon| &=& \varepsilon\int_{\p\L} \mathrm{d} A(x)\, \max\big\{0,\langle v_1|n_x\rangle,\ldots,\langle v_r| n_x\rangle\big\}\notag
\\
&+&\mfr12 \,\varepsilon^2\, \sum_{q=1}^r \int_{\mathcal C_q} \mathrm{d}A(x)\, \kappa(x)\big[\|v_{q}\|^2 - 2\,\langle v_{q}|n_x\rangle^2\big] + o(\varepsilon^2)\,,\quad  \varepsilon\to 0\,.
\eea 
Here, $\mathcal C_q := \big\{x\in\p\L:\langle v_q|n_x\rangle = \max\big\{0,\langle v_1|n_x\rangle,\ldots,\langle v_r|n_x\rangle\big\}\big\}$, and $\kappa(x)$ is the curvature of $\p\L$ at the point $x\in\p\L$. 
\end{prop}

\section{Miscellaneous identities}

\subsection{Proof of the result \texorpdfstring{\eqref{exponent} 
of a change of variables}{LG}}\label{Misc}

The term quadratic in $\xi$ is obviously correct and we start with the linear 
term in $\xi$. The inverse of the matrix  $\mathsf{A} := \mathsf{A}^{(q)}$ has the entries
\be \label{A inverse} \mathsf{A}^{-1}(i,j) = \left\{\begin{array}{rl}1&\mbox{ if } i = j\\-1&\mbox{ if } 1\le i=j-1\le q-1 \\
-1&\mbox{ if } q+1\le j=i-1\le m-2\\0&\mbox{ otherwise }\end{array}\right.
\ee
and
\begin{align*} \sum_{i=1}^{m-1} T_i &= \sum_{j=1}^{m-1} t_j \sum_{i=1}^{m-1} \mathsf{S}_{ij}
\\
&=\sum_{j=1}^{m-1} \big(\mathsf{A}^{-1}\boldsymbol \tau)_j (m-2j)
\\
&=\sum_{j=1}^{m-1}(m-2j)\tau_j - \sum_{j=1}^{q-1}(m-2j)\tau_{j+1} - \sum_{j=q+2}^{m-1}(m-2j)\tau_{j-1}
\\
&=(m-2) \tau_1 - 2\sum_{j=2}^q\tau_j + 2\sum_{j=q+1}^{m-2}\tau_{j} - (m-2)\tau_{m-1}
\\
&\rightsquigarrow (m-2) \tau_1 - 2\sum_{j=2}^q\tau_j - 2\sum_{j=q+1}^{m-2}\tau_{j} + (m-2)\tau_{m-1}
\\
&=(m-2)(\tau_1+\tau_{m-1}) -  2\sum_{j=2}^{m-2}\tau_j \,.
\end{align*}
In the $\rightsquigarrow$ line we have reversed the signs of $\tau_{q+1},\ldots,\tau_{m-1}$. Now we replace $\xi$ on the left-hand side of \eqref{exponent} by $\xi - (\tau_1+\tau_{m-1})/2$, which finally yields the linear term 
\begin{align*} -m\xi(\tau_1+\tau_{m-1}) + \xi \Big[(m-2) (\tau_1+\tau_{m-1}) - 2\sum_{j=2}^{m-2}\tau_j \Big] = -2\xi\sum_{j=1}^{m-1}\tau_j\,.
\end{align*}
Similarly, since 
$$ \sum_{1\le j\le m-1} \mathsf{S}_{jk}\mathsf{S}_{j\ell} = \left\{\begin{array}{ll}m-2&\mbox{ if } k=\ell\\m-1-2|k-\ell|&\mbox{ if } k\not=\ell\end{array}\right.\,,
$$
we have ($T_m=0$)
\begin{align*} \sum_{1\le j\le m-1} T_j^2 &+ \sum_{1\le j\le m} t_j^2 = \sum_{1\le k,\ell\le m-1} t_k t_\ell \sum_{1\le j\le m-1} \mathsf{S}_{jk} \mathsf{S}_{j\ell} + \sum_{1\le k\le m-1} t_k^2 + t_m^2
\\
&=(m-1)\sum_{1\le k\le m-1} t_k^2  +  \Big(\sum_{1\le k\le m-1}t_k\Big)^2 + 2 \sum_{1\le k<\ell\le m-1} t_k t_\ell (m-1-2(\ell-k))
\\
&=m\Big(\sum_{1\le k\le m-1}t_k\Big)^2 - 4 \sum_{1\le k<\ell\le m-1} t_k t_\ell (\ell-k)
\\
&=m(\tau_1+\tau_{m-1})^2 - 4 \sum_{1\le k<\ell\le m-1} (\mathsf{A}^{-1}\boldsymbol\tau)(k) (\mathsf{A}^{-1}\boldsymbol\tau)(\ell) (\ell-k)\,.
\end{align*}
Next, we write
\begin{align*}\sum_{1\le k<\ell\le m-1} (\mathsf{A}^{-1}\boldsymbol \tau)(k) (\mathsf{A}^{-1}\boldsymbol\tau)(\ell) (\ell-k)&=\sum_{1\le k<\ell\le q-1}(\tau_k - \tau_{k+1})(\tau_\ell - \tau_{\ell+1})(\ell-k)
\\
&+\sum_{1\le k\le q-1} (\tau_k-\tau_{k+1})\tau_q (q-k)
\\
&+\sum_{1\le k\le q-1} (\tau_k-\tau_{k+1})\tau_{q+1} (q+1-k) + \tau_q \tau_{q+1}
\\
&+\sum_{1\le k\le q-1,q+2\le \ell} (\tau_k-\tau_{k+1})(\tau_\ell - \tau_{\ell-1})(\ell - k)
\\
&+\sum_{q+2\le k < \ell \le m-1} (\tau_k-\tau_{k-1})(\tau_\ell - \tau_{\ell-1})(\ell - k)
\\
&+\tau_q \tau_{q+1} + \tau_q \sum_{q+2\le\ell\le m-1}(\tau_\ell - \tau_{\ell-1})(\ell - q)
\\
&+\tau_{q+1} \sum_{q+2\le\ell\le m-1}(\tau_\ell - \tau_{\ell-1})(\ell - q -1)\,.
\end{align*}
Now we reverse the signs of $\tau_{q+1},\ldots,\tau_{m-1}$. To this end, let $\mathsf{I}:=\mathsf{I}^{(q)} = \mathrm{diag}(\underbrace{1,\ldots,1}_q,\underbrace{-1,\ldots,-1}_{m-1-q})$ be the diagonal $(m-1)\times(m-1)$ matrix that provides this reversal. Then we get
\begin{align}\sum_{1\le k<\ell\le m-1} (\mathsf{I} \mathsf{A}^{-1}\boldsymbol \tau)(k) (\mathsf{I}\mathsf{A}^{-1}\boldsymbol\tau)(\ell) (\ell-k)&=\sum_{1\le k<\ell\le q-1}(\tau_k - \tau_{k+1})(\tau_\ell - \tau_{\ell+1})(\ell-k)
\label{A}\\
&+\tau_q\Big(\tau_1(q-1) - \sum_{2\le k\le q} \tau_k \Big)
\label{B}\\
&-\tau_{q+1} \Big(\tau_1 q-\sum_{2\le k\le q} \tau_k\Big)
\label{C}\\
&-\sum_{1\le k\le q-1,q+2\le \ell} (\tau_k-\tau_{k+1})(\tau_\ell - \tau_{\ell-1})(\ell - k)
\label{D}\\
&+\sum_{q+2\le k < \ell \le m-1} (\tau_k-\tau_{k-1})(\tau_\ell - \tau_{\ell-1})(\ell - k)
\label{E}\\
&-\tau_q \tau_{q+1} - \tau_q \sum_{q+2\le\ell\le m-1}(\tau_\ell - \tau_{\ell-1})(\ell - q)
\label{F}\\
&+\tau_{q+1} \sum_{q+2\le\ell\le m-1}(\tau_\ell - \tau_{\ell-1})(\ell - q -1)\,.
\label{G}
\end{align}
Recall that $\xi$ in \eqref{exponent} is replaced by $\xi-(\tau_1+\tau_{m-1})/2$. That is, on top of the term $\mfr14\sum_{1\le j\le m} (T_j^2 + t_j^2)$ (after reversing signs of $\tau_{q+1},\ldots,\tau_{m-1}$) we also have the term
\begin{align*} m(\mfr12(\tau_1+\tau_{m-1}))^2  &- \mfr12(\tau_1+\tau_{m-1}) \Big[(m-2)(\tau_1+\tau_{m-1}) - 2\sum_{2\le j\le m-2} \tau_j\Big] 
\\&= (\tau_1+\tau_{m-1})^2(1-\mfr{m}{4}) + (\tau_1+\tau_{m-1})\sum_{2\le j\le m-2} \tau_j\,.
\end{align*}
Let $\mathsf{B}$ be the $(m-1)\times (m-1)$ matrix defined as 
\begin{align*} \langle \boldsymbol \tau,\mathsf{B}\boldsymbol \tau\rangle &:= \mfr{m}{4}(\tau_1-\tau_{m-1})^2 - \sum_{1\le k<\ell\le m-1} (\mathsf{I} \mathsf{A}^{-1}\boldsymbol \tau)(k) (\mathsf{I} \mathsf{A}^{-1}\boldsymbol\tau)(\ell) (\ell-k) 
\\
&+ (\tau_1+\tau_{m-1})^2(1-\mfr{m}{4}) + (\tau_1+\tau_{m-1})\sum_{2\le j\le m-2} \tau_j\,.
\end{align*}
Then we need to show that $\mathsf{B} = \mathbb 1$. We distinguish between certain ranges of indices. 
\begin{itemize}
\item $i_0 = j_0 = 1$: From the first and last term in the definition of $\mathsf{B}$ we get $\mathsf{B}_{1,1} = \mfr{m}{4} + (1-\mfr{m}{4}) = 1  \ \checkmark$
\item $1=i_0 < j_0\le q-1$: choose in \eqref{A} $k=1,\ell=j_0$ or $k=1,\ell=j_0-1$. Then we have $\mathsf{B}_{1,j_0} = 1-(j_0-i_0) + (j_0-1-i_0) = 0 \ \checkmark$
\item $i_0=1, j_0=q$: choose in \eqref{A} $k=1,\ell=q-1$. Then, together with \eqref{B} and the last term in the definition of $\mathsf{B}$ we get $\mathsf{B}_{1,q} = q-2 - (q-1) +1 = 0 \ \checkmark$
\item $i_0=1, j_0=q+1$: choose in \eqref{D} $k=1,\ell=q+2$. Then, together with \eqref{C} and the last term in the definition of $\mathsf{B}$ we get $\mathsf{B}_{1,q+1} = q -(q+1) -1= 0\ \checkmark$
\item $i_0=1, q+2\le j_0\le m-1$: choose in \eqref{D} $k=1,\ell=j_0$ and $k=1,\ell=j_0+1$. Then, together with the last term in the definition of $\mathsf{B}$ we get $\mathsf{B}_{1,j_0} = j_0-1 -j_0 +1 = 0 \ \checkmark$
\item $2\le i_0 = j_0\le q-1$: choose in \eqref{A} $k=i_0-1,\ell=i_0$. Then $\mathsf{B}_{i_0,i_0} = 1 \ \checkmark$
\item $2\le i_0<j_0\le q-1$: choose in \eqref{A} $k=i_0,\ell=j_0$, $k=i_0-1,\ell=j_0-1$, $k=i_0,\ell=j_0-1$, or $k=i_0-1,\ell=j_0$. Then $\mathsf{B}_{i_0,j_0} = -(j_0 - i_0) - (j_0 - i_0) + (j_0-1-i_0) + (j_0 - i_0 +1) = 0\ \checkmark$
\item $2\le i_0\le q-1, j_0 = q$: choose in \eqref{A} $\ell = q-1$ and $k=i_0$ or $k=i_0-1$ and in \eqref{B} $k=i_0$. Then $\mathsf{B}_{i_0,q} = q-1-i_0-(q-1-i_0+1)+1=0\ \checkmark$
\item $2\le i_0\le q-1, j_0 = q+1$: choose in \eqref{C} $k=i_0$ and in \eqref{D} $\ell=q+2$ and $k=i_0$ or $k=i_0-1$. Then $\mathsf{B}_{i_0,q+1} = -1 -(q+2-i_0) + q+2-i_0+1 = 0\ \checkmark$
\item $2\le i_0\le q-1, q+2\le j_0 \le m-1$: choose in \eqref{D} $k=i_0$ and $\ell=j_0$ or $\ell=j_0+1$, or $k=i_0-1$ and $\ell=j_0$ or $\ell=j_0+1$. Then $\mathsf{B}_{i_0,j_0} = (j_0-i_0) -(j_0+1-i_0)-(j_0-i_0+1)+(j_0+1-i_0+1) = 0\ \checkmark$
\item $i_0=q, j_0=q$: this term appears in \eqref{B} if $k=q$. Then $\mathsf{B}_{q,q} = 1\ \checkmark$
\item $i_0=q, j_0=q+1$: choose in \eqref{C} $k=q$, in \eqref{D} $k=q-1$ and $\ell=q+2$, in \eqref{F} $\ell=q+2$. Then $\mathsf{B}_{q,q+1} = -1 + (q+2-q+1) +1 -(q+2-q) 0\ \checkmark$
\item $i_0=q, q+2\le j_0\le m-1$: choose in \eqref{D} $k=q-1$ and $\ell=j_0$ or $\ell = j_0+1$ and in \eqref{F} $\ell=j_0$ or $\ell=j_0+1$. Then $\mathsf{B}_{q,j_0} = -(j_0-q+1) + (j_0+1-q+1) +(j_0-q) -(j_0+1-q) =  0\ \checkmark$
\item $i_0=j_0=q+1$: choose in \eqref{G} $\ell = q+2$. Then $\mathsf{B}_{q+1,q+1} = 1 \ \checkmark$
\item $i_0=q+1, j_0=q+2$: choose in \eqref{E} $k=q+2,\ell=q+3$ and in \eqref{G} $\ell = q+2$ or $\ell = q+3$. Then $\mathsf{B}_{q+1,q+2} = -1 -1 +2 = 0\ \checkmark$
\item $i_0=q+1, q+2\le j_0\le m-1$: choose in \eqref{E} $k=q+1$ and $\ell=j_0$ or $\ell = j_0+1$, and in \eqref{G} $\ell = j_0$ or $\ell=j_0+1$. Then $\mathsf{B}_{q+1,j_0} = (j_0-q-1) -(j_0+1-q-1) -(j_0-q-1) +(j_0+1-q-1) = 0\ \checkmark$
\item $q+2\le i_0 = j_0\le m-2$: choose in \eqref{E} $k=i_0$ and $\ell=i_0+1$. Then $\mathsf{B}_{i_0,i_0} = 1 \ \checkmark$
\item $i_0 = j_0 = m-1$: $\mathsf{B}_{m-1,m-1}$ comes from the first and third one in the definition of $\mathsf{B}$. That is, $\mathsf{B}_{m-1,m-1} = \mfr{m}{4} + (1-\mfr{m}{4}) = 1 \ \checkmark$
\item $q+2\le i_0<j_0\le m-2$: choose in \eqref{E} $k=i_0,\ell=j_0$, $k=i_0+1,\ell=j_0$, $k=i_0,\ell=j_0+1$, $k=i_0+1,\ell=j_0+1$. Then $\mathsf{B}_{i_0,j_0} = -(j_0-i_0) + (j_0-i_0-1) + (j_0+1-i_0) -(j_0-i_0) = 0\ \checkmark$
\item $q+2\le i_0, j_0= m-1$: choose \eqref{E} $\ell=m-1$ and $k=i_0$ or $k=i_0+1$. In conjunction with the last term in the definition we obtain $\mathsf{B}_{i_0,m-1} = -(m-1-i_0) + (m-1-i_0-1) + 1=0\ \checkmark$
\end{itemize}

Finally, switching $\xi$ to $-\xi$ proves \eqref{exponent}.

\subsection{Change of variables in Laguerre polynomials}\label{appendix B.2}

With $T_j$ from \eqref{def T} and $\boldsymbol \tau = \mathsf{A} \mathbf t, \mathsf{A}=\mathsf{A}^{(q)}$ from \eqref{matrix A}, we claim that
\be T_j = \left\{\begin{array}{ll} \tau_1 - \tau_j - \tau_{j+1} -\tau_{m-1}&\mbox{ if } 1\le j\le q-1\\\tau_1 - \tau_q - \tau_{m-1}&\mbox{ if } j=q\\\tau_1 + \tau_{q+1} - \tau_{m-1}&\mbox{ if } j=q+1\\\tau_1+\tau_{j-1}+\tau_j-\tau_{m-1}&\mbox{ if } q+2\le j\le m-1\end{array}\right.\,.
\ee
To this end, we write
\begin{align*} T_j&=\sum_{1\le k\le j-1} (\mathsf{A}^{-1}\boldsymbol \tau)(k) - \sum_{j+1\le k\le m-1} (\mathsf{A}^{-1}\boldsymbol \tau)(k)
\end{align*}
and assume for example that $1\le j\le q-1$. Then, using \eqref{A inverse}
\begin{align*} T_j&=\sum_{1\le k\le j-1} (\tau_k - \tau_{k+1}) - \sum_{j+1\le k\le q+1} (\mathsf{A}^{-1}\boldsymbol \tau)(k) - \sum_{j+1\le k\le q+1} (\mathsf{A}^{-1}\boldsymbol \tau)(k)
\\
&=\tau_1-\tau_j - \sum_{j+1\le k\le q-1} (\tau_k - \tau_{k+1}) - \tau_q - \tau_{q+1} - (\tau_{m-1} - \tau_{q+1})
\\
&=\tau_1 - \tau_j - \tau_{j+1} -\tau_{m-1}\,.
\end{align*}
Also, 
\begin{align*} T_q &= \sum_{1\le k\le q-1} (\tau_k - \tau_{k+1}) - \tau_{q+1}  - \sum_{q+1\le k\le m-1} (\tau_k-\tau_{k-1})
\\
&=\tau_1-\tau_q-\tau_{m-1}\,.
\end{align*}
The other two cases follow in a similar vein. After reversing the signs of $\tau_{q+1},\ldots,\tau_{m-1}$, 
\be T_j \rightsquigarrow \left\{\begin{array}{ll} \tau_1 - \tau_j - \tau_{j+1} + \tau_{m-1}&\mbox{ if } 1\le j\le q-1\\\tau_1 - \tau_q + \tau_{m-1}&\mbox{ if } j=q\\\tau_1 - \tau_{q+1} + \tau_{m-1}&\mbox{ if } j=q+1\\\tau_1-\tau_{j-1}-\tau_j+\tau_{m-1}&\mbox{ if } q+2\le j\le m-1\end{array}\right.\,.
\ee
Next, we replace $\xi$ by $\xi-(\tau_1+\tau_{m-1})/2$. So we subtract from $T_j$ the term $(\tau_1+\tau_{m-1})/2$. This leads to
\be T_j \rightsquigarrow \widetilde{T}_j := \left\{\begin{array}{ll} - \tau_j - \tau_{j+1} &\mbox{ if } 1\le j\le q-1\\- \tau_q &\mbox{ if } j=q\\- \tau_{q+1} &\mbox{ if } j=q+1\\-\tau_{j-1}-\tau_j&\mbox{ if } q+2\le j\le m-1\end{array}\right.\,.
\ee
Let 
\be t_j \rightsquigarrow \widetilde{t}_j := (\mathsf{I}\mathsf{A}^{-1}\boldsymbol \tau)(j) = \left\{\begin{array}{ll} \tau_j - \tau_{j+1} &\mbox{ if } 1\le j\le q-1\\\tau_q &\mbox{ if } j=q\\- \tau_{q+1} &\mbox{ if } j=q+1\\\tau_{j-1}-\tau_j&\mbox{ if } q+2\le j\le m-1\end{array}\right.\,.
\ee
Now we replace $\xi$ by $-\xi$ so that
\begin{align} (\omega+\mathrm{i}(2\xi+T_j))^2 + t_j^2 &\rightsquigarrow (\omega-\mathrm{i}(2\xi+\widetilde{T}_j))^2 + \widetilde{t}_j^2\,.
\end{align}
For $j=q$ the last expression equals
\[ (\omega-\mathrm{i}(2\xi+\tau_q))^2 + \tau_q^2 =\omega^2 -2\mathrm{i}\omega(2\xi+\tau_q) - (2\xi)^2 -4\xi\tau_q\,.
\]
Next, we shift the integration with respect to $\tau_q$ by $\xi$. That is, we replace $\tau_q$ by $\tau_{q}-\xi$. Then we have
\begin{align*} 
\omega^2 -2\mathrm{i}\omega(2\xi+\tau_q) - (2\xi)^2 -4\xi\tau_q &\rightsquigarrow \omega^2 -2\mathrm{i}\omega(\xi+\tau_q) - (2\xi)^2 -4\xi(\tau_q-\xi)
\\
&=\omega^2 -2\mathrm{i}\omega(\xi+\tau_q) - 4\xi\tau_q
\\
&=(\omega-2\mathrm{i}\xi)(\omega-2\mathrm{i}\tau_q)\,.
\end{align*}
For $j=q+1$ we get in the end the expression $(\omega-2\mathrm{i}\xi)(\omega-2\mathrm{i}\tau_{q+1})$. 

For $1\le j\le q-1$ we have
\begin{align*} 
(\omega+\mathrm{i}(2\xi - \tau_j - \tau_{j+1}))^2 + (\tau_j - \tau_{j+1})^2 &\rightsquigarrow (\omega-\mathrm{i}(2\xi + \tau_j + \tau_{j+1}))^2 + (\tau_j - \tau_{j+1})^2
\\
&\rightsquigarrow (\omega-\mathrm{i}(\tau_j + \tau_{j+1}))^2 + (\tau_j - \tau_{j+1})^2
\\
&=(\omega-2\mathrm{i}\tau_j)(\omega-2\mathrm{i}\tau_{j+1})\,.
\end{align*}
Similarly, for $q+2\le j\le m-1$,
\begin{align*} 
(\omega+\mathrm{i}(2\xi - \tau_{j-1} - \tau_{j}))^2 + (\tau_{j-1} - \tau_{j})^2 &\rightsquigarrow (\omega-\mathrm{i}(2\xi + \tau_{j-1} + \tau_{j}))^2 + (\tau_{j-1} - \tau_{j})^2
\\
&\rightsquigarrow (\omega-2\mathrm{i}\tau_{j-1})(\omega-2\mathrm{i}\tau_{j})\,.
\end{align*}
Finally, with $T_m=0$ and $t_m = \tau_1+\tau_{m-1}$,
\begin{align*} 
(\omega+\mathrm{i}(2\xi + T_m))^2 + (\tau_{1} + \tau_{m-1})^2 &\rightsquigarrow (\omega-\mathrm{i}(2\xi - \tau_{1} - \tau_{m-1}))^2 + (\tau_{1} - \tau_{m-1})^2
\\
&\rightsquigarrow (\omega-2\mathrm{i}\tau_{1})(\omega-2\mathrm{i}\tau_{m-1})\,.
\end{align*}

\subsection{Proof of the identity \texorpdfstring{\eqref{Hermite identity}}{LG}}\label{remarkable}

We start out with the representation of the Laguerre polynomial as a contour integral in the complex plane $\mathbb C$, namely
\[\mathcal{L}_\ell(x) = \frac1{2\pi\mathrm{i}}\oint_\Gamma \frac{\text d t}{(1-t)t^{\ell+1}}\, \exp[-xt/(1-t)]\,.
\]
Here, the contour $\Gamma$ is, say, a circle of radius $<1$, centered at the origin $0$, and with counter-clockwise orientation. Then, for any given pair $\xi,\tau\in\R$, we have
\begin{align*} \frac{1}{\sqrt{2\pi}} &\int_\R \text d \omega \, \mathcal{L}_\ell\big((\omega-2\mathrm{i}\xi)(\omega-2\mathrm{i}\tau)/2\big) \exp(-\omega^2/4) 
\\
& = \frac1{2\pi\mathrm{i}} \oint_\Gamma\frac{\text d t}{(1-t)t^{\ell+1}} \frac{1}{\sqrt{2\pi}}\int_\R \text d \omega \, \exp\big[-\omega^2/4 -(\omega-2\mathrm{i}\xi)(\omega-2\mathrm{i}\tau)t/(2(1-t))\big]\,.
\end{align*}
Now we observe
\begin{align*} -\mfr14 \omega^2 -\frac{(\omega-2\mathrm{i}\xi)(\omega-2\mathrm{i}\tau)t}{2(1-t)} &= -\frac{1+t}{4(1-t)}\Big(\omega-\frac{2\mathrm{i}(\xi+\tau)t}{1+t}\Big)^2 + \frac{2\xi\tau t}{1-t} - \frac{t^2(\xi+\tau)^2}{1-t^2}
\end{align*}
and perform the $\omega$-integration. This yields
$$ \Big(\frac{4\pi(1-t)}{1+t}\Big)^{1/2} \exp\Big[ \frac{2\xi\tau t}{1-t} - \frac{t^2(\xi+\tau)^2}{1-t^2}\Big]\,.
$$
By the Cauchy integral formula (of the year 1831) we therefore get
\begin{align*} \frac{1}{\sqrt{2\pi}} &\int_\R \text d\omega \, \mathcal{L}_\ell\big((\omega-2\mathrm{i}\xi)(\omega-2\mathrm{i}\tau)/2\big) \exp(-\omega^2/4) 
\\
& = \sqrt{2} \,\frac1{2\pi\mathrm{i}}\oint_\Gamma\frac{\text d t}{\sqrt{1-t^2}\,t^{\ell+1}}\,\exp\Big[ \frac{2\xi\tau t}{1-t} - \frac{t^2(\xi+\tau)^2}{1-t^2}\Big]
\\
&=\sqrt{2}\, \frac{1}{\ell!} \, \frac{\text d^\ell}{\text d t^\ell}\Big|_{t=0} \frac{1}{\sqrt{1-t^2}}\,\exp\Big[ \frac{2\xi\tau t}{1-t} - \frac{t^2(\xi+\tau)^2}{1-t^2}\Big]\,.
\end{align*}
The Mehler formula (of the year 1866) 
\be \frac{1}{\sqrt{1-t^2}}\,\exp\Big[ \frac{2\xi\tau t}{1-t} - \frac{t^2(\xi+\tau)^2}{1-t^2}\Big] = \sum_{\ell=0}^\infty \frac{H_\ell(\xi) H_\ell(\tau)}{\ell!} \Big(\frac{t}{2}\Big)^\ell\,,\quad     |t|<1
\ee
now completes the proof of \eqref{Hermite identity}.

\end{appendix}


\begin{thebibliography}{40}


\bibitem{BS} M.Sh. Birman and M.Z. Solomyak: \emph{Estimates of singular numbers of integral operators}, Russian Math. Surveys \textbf{32}, 15--89 (1977) [Russian original: Uspekhi Mat. Nauk \textbf{32}, 17--84 (1977)]

\bibitem{BS_book} M.Sh. Birman and M.Z. Solomyak: \emph{Spectral Theory of Self-Adjoint Operators in Hilbert Space}, Reidel, Dordrecht, 1987

\bibitem{BR2} O. Bratteli and D.W. Robinson: {\em Operator Algebras and Quantum Statistical Mechanics 2}, 2nd ed., Springer, Berlin, 1996

\bibitem{CE} L. Charles and B. Estienne: \emph{Entanglement entropy and Berezin--Toeplitz operators}, Commun. Math. Phys. \textbf{376}, 521--554 (2020) 

\bibitem{ECP} J. Eisert, M. Cramer and M.B. Plenio: \emph{Area laws for the entanglement entropy -- a review}, Rev. Mod. Phys. \textbf{82}, 277--306 (2010)

\bibitem{EPS} A. Elgart, L. Pastur and M. Shcherbina: \emph{Large block properties of the entanglement entropy of disordered fermions}, J. Stat. Phys. \textbf{166}, 1092--1127 (2017) 

\bibitem{Fock} V. Fock: \emph{Bemerkung zur Quantelung des harmonischen Oszillators im Magnetfeld}, Z. Physik \textbf{47}, 446--448 (1928) [in German]

\bibitem{GK} D. Gioev and I. Klich: \emph{Entanglement entropy of fermions in any dimension and the Widom conjecture}, Phys. Rev. Lett. \textbf{96}, 100503 (2006)

\bibitem{HLS} R. Helling, H. Leschke and W. Spitzer: \emph{A special case of a conjecture by Widom with implications to fermionic entanglement entropy}, Int. Math. Res. Notices \textbf{2011}, 1451--1482 (2011)

\bibitem{Huang} K. Huang: \emph{Statistical Mechanics}, 2nd ed., Wiley, New York, 1987 

\bibitem{HLW} T. Hupfer, H. Leschke and S. Warzel: \emph{Upper bounds on the density of states of single Landau levels broadened by Gaussian random potentials}, J. Math. Phys. \textbf{42}, 5626--5641 (2001)

\bibitem{KP} A. Kitaev and J. Preskill: \emph{Topological entanglement entropy}, Phys. Rev. Lett. \textbf{96}, 110404 (2006)

\bibitem{Klich} I. Klich: \emph{Lower entropy bounds and particle number fluctuations in a Fermi sea}, J. Phys. A, Math. Gen. \textbf{39} L85--L91 (2006)

\bibitem{Landau} L. Landau: \emph{Diamagnetismus der Metalle}, Z. Physik \textbf{64}, 629--637 (1930) [in German]

\bibitem{LSS1} H. Leschke, A.V. Sobolev and W. Spitzer: \emph{Scaling of R\'enyi entanglement entropies of the free Fermi-gas ground state: a rigorous proof}, Phys. Rev. Lett. \textbf{112}, 160403 (2014)

\bibitem{LSS2} H. Leschke, A.V. Sobolev and W. Spitzer: \emph{Large-scale behaviour of local and entanglement entropy of the free Fermi gas at any temperature}, J. Phys. A: Math. Theor. \textbf{49} 30LT04 (2016); Corrigendum: \textbf{50} 129501 (2017) 

\bibitem{LSS3} H. Leschke, A.V. Sobolev and W. Spitzer: \emph{Trace formulas for Wiener--Hopf operators with applications to entropies of free fermionic equilibrium states}, J. Funct. Anal. \textbf{273}, 1049--1094 (2017)

\bibitem{LW} M. Levin and X.-G. Wen: \emph{Detecting topological order in a ground state wave function}, Phys. Rev. Lett. \textbf{96}, 110405 (2006)

\bibitem{MPS} P. M\"uller, L. Pastur and R. Schulte: \emph{How much delocalisation is needed for an enhanced area law of the entanglement entropy?}, Commun. Math. Phys \textbf{376}, 649--679 (2020). Correction, see Digital Object Identifier (DOI) https://doi.org/10.1007/s00220-020-03784-3


\bibitem{MS} P. M\"uller and R. Schulte: \emph{Stability of the enhanced area law of the entanglement entropy}, preprint:
arXiv:2004.02700 

\bibitem{PasturSh} L. Pastur and M. Shcherbina: \emph{Szeg\"o-type theorems for one-dimensional Schr\"odinger operator with random potential (smooth case)}, J. Math. Phys. Anal. Geom. [Zh. Mat. Fiz. Anal. Geom.] \textbf{14}, 362--388 (2018)

\bibitem{P} L. Pastur and V. Slavin: \emph{The absence of the selfaveraging property of the entanglement entropy of disordered free fermions in one dimension}, J. Stat. Phys. \textbf{170}, 207--220 (2017)

\bibitem{Peierls} R. Peierls: \emph{Surprises in Theoretical Physics}, Princeton University Press, Princeton, 1979

\bibitem{PS} B. Pfirsch and A.V. Sobolev: \emph{Formulas of Szeg\H{o} type for the periodic Schr\"odinger operator}, Commun. Math. Phys. \textbf{358}, 675--704  (2018)

\bibitem{R} R. Roccaforte: \emph{On asymptotic expansions of traces for certain convolution operators}, Trans. Amer. Math. Soc. \textbf{285}, 581--602 (1984)

\bibitem{RS1} I.D. Rodr{\'{\i}}guez and G. Sierra: \emph{Entanglement entropy of integer quantum Hall states}, Phys. Rev. B \textbf{80}, 153303 (2009)

\bibitem{RS2} I.D. Rodr\'iguez and G. Sierra: \emph{Entanglement entropy of integer quantum Hall states in polygonal domains}, J. Stat. Mech. 1012: P12033 (2010)

\bibitem{Si} B. Simon: \emph{Trace Ideals and Their Applications}, 2nd ed., Amer. Math. Soc., Providence, Rhode Island, 2005

\bibitem{Sob:AMS} A.V. Sobolev: \emph{Pseudo-Differential Operators with Discontinuous Symbols: Widom's Conjecture}, {Memoirs of the Amer. Math. Soc.} \textbf{222}(1043), Providence, Rhode Island, 2013

\bibitem{KvK} K. von Klitzing: \emph{The quantized Hall effect}, Rev. Mod. Phys. \textbf{58}, 519--531 (1986)

\bibitem{Widom} H. Widom: \emph{On a class of integral operators with discontinuous symbol}, Oper. Theory Adv. Appl. \textbf{4}, 477--500 (1982)
\end{thebibliography}
\end{document}